\def\orcidID#1{\orcidlink{#1}}}
\newsavebox{\kleisliarrow}
\savebox{\kleisliarrow}{%
\begin{tikzpicture}[
      baseline=(arrow.base),
      inner sep=0mm,
      outer sep=0mm,
      ]
      \node[draw=none,
      anchor=base,
      overlay,
      inner sep=0,
      outer sep=0,
      minimum height=1em,
      ] (arrow) {$\phantom{\longrightarrow}$};
      \begin{scope}[even odd rule,overlay]
        \clip  ($ (arrow.south) !.68! (arrow.north)$) circle (0.17em)
           (arrow.north west) rectangle (arrow.south east);
      \node[draw=none,
      anchor=base,
      inner sep=0,
      outer sep=0,
      ] (arrow) {$\longrightarrow$};
    \end{scope}
    \draw[fill=none,overlay] ($ (arrow.south) !.68! (arrow.north)$) circle (0.15em);
    \draw[use as bounding box,draw=none] (arrow.north west) rectangle (arrow.south east);
  \end{tikzpicture}}
\renewcommand{\arg}{\ensuremath{\_\!\!\!\_}}
\newsavebox{\kleislidot}
\savebox{\kleislidot}{%
\begin{tikzpicture}[baseline=0pt,outer sep=0pt]
    \draw[fill=white] (0,0) circle (2pt);
  \end{tikzpicture}}
\tikzstyle{kleisli}=[
\newsavebox{\mypullbackcorner}%
\sbox{\mypullbackcorner}{%
\begin{tikzpicture}
    \draw[-] (0,0) -- (.5em,.5em) -- (0,1em);
\end{tikzpicture}%
}
\newcommand{\pullbackangle}[2][]{\arrow[phantom,to path={
                     -- ($ (\tikztostart)!1cm!#2:([xshift=8cm]\tikztostart) $)
                        node[anchor=west,pos=0.0,rotate=#2,
                        inner xsep = 0]
                        {\begin{tikzpicture}[minimum
                        height=1mm,baseline=0,#1]
    \draw[-] (0,0) -- (.5em,.5em) -- (0,1em);
                        \end{tikzpicture}}}]{}}
\tikzstyle{shiftarr}=[
\tikzset{
      vertex/.style={
        fill=black,
        shape=circle,
        outer sep = 1mm,
        inner sep = 0mm,
        minimum size=2mm,
      },
      set/.style={
        draw=black!20,
        fill=none,
        line width=1mm,
        rounded corners=2mm,
        inner sep = 2mm,
      },
      myedge/.style={
        ->,
        draw=black,
        every node/.append style={
          shape=circle,
          inner sep=1pt,
          fill=black,
          text=white,
          draw=none,
          sloped,
          minimum width=5mm,
        },
      },
}
\newcommand{\xtransto}[1]{%
\mathrel{%
  \begin{tikzpicture}[baseline=(a.base), weighted system]
    \node[font=\scriptsize,inner sep=1mm,inner ysep=1pt] (label) {\ensuremath{#1}} ;
    \node[anchor=west] (a) at (label.south west) {\phantom{X}};
    \node[anchor=east] (b) at ([xshift=2pt]label.south east) {};
    \draw[transition] (a.west) to (b.east);
  \end{tikzpicture}
}
}
\newcommand{\xtranstostar}[1]{%
\mathrel{%
  {\xtransto{#1}{\!\!}^{{\color{transitioncolor}\star}}\,}%
}%
}
\newcommand{\transto}{
\mathord{%
  \begin{tikzpicture}[baseline=(a.base), weighted system]
    \node (a) {\phantom{X}};
    \node (b) at (4pt,0) {};
    \coordinate (arrow start) at ([yshift=-1pt]a.west);
    \draw[transition] (arrow start) to (b.east |- arrow start);
  \end{tikzpicture}
}
}
\tikzset{external/up to date check=diff}
\def\@enddocumenthook{}%
\newenvironment{proofappx}[2][Proof of]{%
\subsection*{#1~\autoref{#2}}%
\addcontentsline{toc}{subsection}{#1~\autoref{#2}}%
\begin{proof}%
}{
\end{proof}%
}
\def\temp{&} \catcode`&=\active \let&=\temp
\newcommand{\mytikzcdcontext}[2]{
  \begin{tikzpicture}[baseline=(mainnode.base)]
    \node (mainnode) [inner sep=0, outer sep=0] {\begin{tikzcdold}[#2]
        #1
    \end{tikzcdold}};
  \end{tikzpicture}}
\def\myargs{#1}%
\edef\mydiagram{\noexpand\mytikzcdcontext{\expandonce\BODY}{\expandonce\myargs}}%
\newenvironment{notheorembrackets}{%
\csdef{@spopargbegintheorem}##1##2##3##4##5{\trivlist%
      \item[\hskip\labelsep{##4##1\ ##2}]{##4{##3}\@thmcounterend\ }##5}%
    }{%
\csdef{@spopargbegintheorem}##1##2##3##4##5{\trivlist%
      \item[\hskip\labelsep{##4##1\ ##2}]{##4(##3)\@thmcounterend\ }##5}%
    }
\title{Weighted and Branching Bisimilarities from\texorpdfstring{\\}{ }Generalized Open Maps}
\newcommand{\etal}[1][.]{et\ al#1}
\newcommand{\M}{\ensuremath{\mathbb{M}}}
\newcommand{\E}{\ensuremath{\mathbb{E}}}
\newcommand{\bbS}{\ensuremath{\mathbb{S}}}
\newcommand{\Path}{\ensuremath{\mathbb{P}}}
\newcommand{\LTS}{\ensuremath{\mathsf{LTS}}}
\newcommand{\Set}{\ensuremath{\mathsf{Set}}}
\newcommand{\Pos}{\ensuremath{\mathsf{Pos}}}
\newcommand{\dual}{\ensuremath{\mathsf{op}}}
\newcommand{\JE}{\ensuremath{J_{\E}}}
\newcommand{\JS}{\ensuremath{J_{\bbS}}}
\newcommand{\hateq}{\ensuremath{\mathrel{\widehat{=}}}}
\newcommand{\WLTS}{\ensuremath{\mathsf{WLTS}}}
\newcommand{\descto}[3][]{\arrow[phantom]{#2}[#1]{\text{\footnotesize{}#3}}}
\newcommand{\commutes}{\ensuremath{\circlearrowright}\xspace}
\newcommand{\pow}{\ensuremath{\mathcal{P}}}
\newcommand{\C}{\ensuremath{\mathcal{C}}}
\newcommand{\FSim}{\ensuremath{\operatorname{\text{\upshape\sf LCoalg}}\xspace}}
\newcommand{\Dist}{\ensuremath{\mathcal{D}\xspace}}
\newcommand{\Subdist}{\ensuremath{\mathcal{D}_{\leq1}\xspace}}
\newcommand{\bbP}{\ensuremath{\mathbb{P}}}
\newcommand{\id}{\ensuremath{\operatorname{\textnormal{id}}\xspace}}
\newcommand{\downinclusion}{\begin{turn}{270}{$\sqsubseteq$}\end{turn}}
\newcommand{\ob}[1]{\ensuremath{\operatorname{\textnormal{ob}}(#1)\xspace}}
\newcommand{\Nat}{\ensuremath{\mathbb{N}}}
\newcommand{\runend}{\ensuremath{\text{end}}}
\newcommand{\Real}{\ensuremath{\mathbb{R}}}
\newcommand{\Realp}{\ensuremath{\mathbb{R}_+}}
\newcommand{\weighted}[2]{\ensuremath{{#1}^{(#2)}}}
\newcommand{\set}[2][]{%
  \ifthenelse{\equal{#2}{}}{%
    \ensuremath{{#1\emptyset}}%
  }{%
    \ensuremath{{#1\{#2#1\}}}%
  }%
}
\gdef\gsetlength#1#2{%
  \begingroup
    \setlength\skip@{#2}%
    \global#1=\skip@    %
  \endgroup             %
}
\colorlet{transitioncolor}{blue!30!black!80!white}
\tikzset{
  weighted system/.style={
    x=1.2cm,
    y=1.2cm,
    every node/.append style={inner sep=1pt},
    state/.style={
      execute at begin node=$,%
      execute at end node=$%
    },
    every loop/.style={looseness=10, min distance=4mm},
    transition/.style={
      line width=1pt,
      -{Triangle[length=4pt,width=4pt]},
      draw=transitioncolor,
      every node/.append style={
        font=\upshape\footnotesize,
        inner sep=2pt,
        above,
        execute at begin node=$,%
        execute at end node=$%
        },
    },
    frame decoration/.style={
      line width=1pt,
      draw=black!30,
      rounded corners=2pt,
    },
  },
}
\newlength{\mymaxradius}
\newlength{\mycurrentradius}
\newcommand{\weightedsystem}[2][r]{%
  \begin{tikzpicture}[baseline=(r.base),weighted system]
  \node (r) at (0,0) {\ensuremath{#1}};
  \node[inner xsep=1pt,anchor=center] at ([xshift=-1.5mm]r.west) (gapOnTheWest) {};
  \draw[transition] (gapOnTheWest.west) to (r.west);
  #2
  \draw[frame decoration]
    ([xshift=-2pt,yshift=2pt]current bounding box.north west) rectangle
    ([xshift=2pt,yshift=-2pt]current bounding box.south east);
  \end{tikzpicture}%
}
\newcommand{\simplepath}[2][20]{%
  \begin{tikzpicture}[baseline=(r.base),weighted system]
  \node (r) at (0,0) {$r$};
  \setlength{\mymaxradius}{15mm}
  \pgfmathsetmacro{\succcount}{0}
  \foreach \weight/\name in {#2} {
      \pgfmathparse{\succcount+1}
      \global\let\succcount\pgfmathresult
      \settowidth{\mycurrentradius}{\ensuremath{\weight\qquad{}x}}
      \ifthenelse{\lengthtest{\mycurrentradius > \mymaxradius}}{
        \gsetlength{\mymaxradius}{\the\mycurrentradius}
      }{}
  }
  \pgfmathsetmacro{\spanangle}{#1}
  \foreach \weight/\name [count=\i] in {#2} {
    \pgfmathsetmacro{\angle}{(\succcount < 2) ? 0 :
      ((\spanangle / 2) - ((\i-1) * \spanangle / (max(2,\succcount) -1 )))
    }
    \node (node) at (\angle:\the\mymaxradius) {$\name $};
    \ifthenelse{\i<\succcount \OR \i=1}{
      \draw[transition] (r) to node[sloped,above,font=\upshape\footnotesize] {\weight} (node);
    }{
      \draw[transition] (r) to node[sloped,below,font=\upshape\footnotesize] {\weight} (node);
    }
  }
  \node[inner xsep=1pt,anchor=center] at ([xshift=-1.5mm]r.west) (gapOnTheWest) {};
  \draw[transition] (gapOnTheWest.west) to (r.west);
  \draw[frame decoration]
    ([xshift=-1pt,yshift=1pt]current bounding box.north west) rectangle
    ([xshift=1pt,yshift=-1pt]current bounding box.south east);
\end{tikzpicture}}
\spnewtheorem{assumption}[theorem]{Assumption}{\bfseries}{\itshape}
\begin{document}
\titlerunning{Weighted Bisimilarity from Generalized Open Maps}
\author{%
J\'er\'emy Dubut\inst{1}\orcidID{0000-0002-2640-3065}
\and
Thorsten Wi{\ss}mann\inst{2}\orcidID{0000-0001-8993-6486}
}
\authorrunning{J.~Dubut and T.~Wi{\ss}mann}
\institute{
National Institute of Advanced Industrial Science and Technology, Tokyo, Japan
\email{jeremy.dubut@aist.go.jp} \and
Radboud University, Nijmegen, the Netherlands
  \email{thorsten.wissmann@ru.nl}}

\maketitle              %
\begin{abstract}
In the open map approach to bisimilarity, the paths and their runs in a given 
state-based system are the first-class citizens, and bisimilarity becomes a 
derived notion. While open maps were successfully used to model 
bisimilarity in non-deterministic systems, the approach fails to describe quantitative 
system equivalences such as probabilistic bisimilarity.
In the present work, we see that this is indeed impossible and we thus 
generalize the notion of open maps to also accommodate
weighted and probabilistic bisimilarity.
Also, extending the notions of strong path and path bisimulations into 
this new framework, we show that branching bisimilarity can be 
captured by this extended theory and that it can be viewed as 
the history preserving restriction of weak bisimilarity.

\keywords{Open maps \and Weighted Bisimilarity \and Probabilistic Bisimilarity \and Branching Bisimilarity \and Weak Bisimilarity}
\end{abstract}

\section{Introduction}

The theory of open maps is a categorical framework to reason about 
systems and their bisimilarities~\cite{joyal96}.
Given a category of systems and a description of the shape of the 
executions and how to extend them, open maps are morphisms with lifting 
properties with respect to those extensions.
Intuitively, open maps are morphisms which preserve and reflect transitions of 
systems, that is, they are morphisms whose graphs are bisimulations.
The theory %
covers various classical notions of bisimilarity. For example, 
two LTSs are strongly bisimilar if and only if there is a span of open maps between them. 
Varying the category of models 
and the execution shapes allows describing weak bisimilarity, timed bisimilarity, 
probabilistic Larsen and Skou bisimilarity, and history-preserving bisimilarity of event 
structures (see \cite{joyal96,cheng95,hune98} for examples).

Another categorical framework for bisimilarity is coalgebra \cite{rutten00}.
This time, given a category and an endofunctor describing respectively the 
type of state spaces and the type of transitions, a `system' is understood as
a coalgebra for this functor. Coalgebra homomorphisms are then very similar to 
open maps in spirit: they also are morphisms that preserve and reflect transitions.
This intuition has been made formal by transformations between the categorical 
frameworks in both ways; from open maps to coalgebra
\cite{lasota02}, and conversely \cite{WDKH2019}. 
However, the latter suggests that open maps are only 
adapted to modeling non-deterministic systems and would struggle with other 
types of branchings, such as probabilistic.

In coalgebra, there are no particular difficulties in modeling weighted systems, 
and by extension, discrete probabilistic systems \cite{klin09}.
There is also some work for continuous probabilities, although the theory is 
much more complicated \cite{desharnais02,danos06}.
As we will explain more precisely later, there have been some attempts to do so 
with open maps in \cite{cheng95,desharnais02}, but the result is somewhat disappointing.

Conversely, coalgebra is not adapted to bisimilarities for systems where 
transitions are not history-preserving, that is, for which the behavioral equivalence does 
not just depend on the transitions at a given state, but on 
the whole history of the execution that led to this 
state. That is the case for example for branching bisimilarity \cite{glabbeek96}. 
Branching bisimilarity arose precisely to make weak bisimilarity history-preserving.
In \cite{cheng95}, weak bisimilarity has been described using open maps by carefully 
choosing the underlying category, with a general theory developed in \cite{fiore99} using 
presheaf models. Branching bisimilarity has also been studied using open maps in
\cite{beohar15,beohar19}, but indirectly, through a translation into presheaves.

To resume, the goal of this paper is to capture weighted and branching 
bisimilarities using a generalization of open maps. 
Concretely, the contributions are: 
\begin{enumerate}
	\item a proof that it is impossible to appropriately model probabilistic system using standard
		open maps (Section~\ref{sec:impossibility}),
	\item a faithful extension of the theory of open maps and (strong) path 
		bisimulations (Section~\ref{sec:generalOpenMaps}), 
	\item a generalized open map situation capturing weighted and probabilistic bisimilarities
		(Section~\ref{sec:open-weighted}),
	\item a generalized open map situation where strong path bisimulations correspond to 
		stuttering branching bisimulations, open map bisimilarity to branching 
		bisimilarity, and path bisimulations to weak bisimulations 
		(Section~\ref{sec:weak-bisimulations}).
\end{enumerate}
In addition, Section~\ref{sec:pathCat} introduces some theoretical requirements. 

\section{From path categories to bisimilarity}
\label{sec:pathCat}
Before discussing weighted bisimilarity, let us first recall the main ideas of
modeling bisimilarity via open maps, as introduced by Joyal \etal\ \cite{joyal96}. The definition is parametric in a functor $J\colon \Path\to \M$, from a category $\Path$ of paths
to a category $\M$ of models or systems of interest.
In the prime example, $\M$ is the category of labelled transition systems $\LTS$ as defined next:
\begin{definition}
\label{def:lts}
    For a fixed set $A$ of labels, the category $\LTS$ contains:
    \begin{enumerate}
        \item Objects: a labelled transition system $(X,\transto,x_0)$ is a
            set $X$ of states, a transition relation $\mathord{\transto}\subseteq X\times A\times X$ and
            a distinguished initial state $x_0\in X$. We write $x\xtransto{a} x'$ to denote that 
            $(x,a,x')\in \mathord{\transto}$ and simply refer to the LTS as $X$ if
            $\transto$ and $x_0$ are clear from the context. For disambiguation, we use $\to$ for morphisms and $\transto$ for transitions.
        \item Morphisms: a \emph{functional simulation} $f\colon (X,\transto,x_0)\to (Y,\transto,y_0)$ 
            is a function $f\colon X\to Y$ with $f(x_0) = y_0$ and for all
            $x\xtransto{a} x'$ in $X$, we have $f(x) \xtransto{a} f(x')$.
    \end{enumerate}
\end{definition}

A functional simulation $f\colon X\to Y$ intuitively means that the system $Y$ has
at least the transitions of $X$, but possibly more. A special case of a
functional simulation is the \emph{run} of a word in a system:
\begin{definition}
\label{def:lts-paths}
    For the label set $A$, let $(A^*,\le)$ be the partially ordered set of
    words, ordered by the prefix ordering. The functor $J\colon (A^*,\le)\to \LTS$ sends
    a word $w\in A^*$ to the LTS $Jw = (\{v\mid v\le w\}, \transto, \varepsilon)$
    of all prefixes of $w$
    with $v \xtransto{a} va$ for all $a\in A$, $va\le w$.
\end{definition}
This functor $J$ (or more precisely, its image) is often called \emph{path category}
of $\LTS$:
the possible runs of a word $w\in A^*$ in $(X,\transto,x_0)$ correspond precisely to
the functional simulations $Jw\to (X,\transto,x_0)$ in $\LTS$.

On the abstract level, for a general functor $J\colon \Path\to \M$, we
understand the set of morphisms $r\colon Jw\to X$ for $w\in \Path$ and $X\in \M$
as the runs of the path $w$ in the model $X$. We can already make the trivial
observation that all morphisms $f\colon X\to Y$ in $\M$ preserve runs: given a
run $r\colon Jw\to X$ of some path $w\in \Path$ in $X$, there is a run $f\cdot
r\colon Jw\to Y$ of $w$ in $Y$.

The converse does not hold for a general $f\colon X\to Y$ in $\M$: given a run of $w$ in $Y$, there is not necessarily a run of $w$ in $X$. If $f$ reflects runs, it is called \emph{open}:\!\!\twnote{}
\begin{definition}
    For a functor $J\colon \Path\to \M$, a morphism $f\colon X\to Y$ in $\M$ is called \emph{open} if $f$ satisfies the following \emph{lifting property} for all $e\colon v\to w$ in $\Path$:
    \[
        \text{for all }
        \quad
        \begin{tikzcd}
            Jv
            \arrow{r}[description,inner sep=2pt]{r}
            \arrow{d}[description,inner sep=2pt]{Je}
            \descto{dr}{\commutes}
            & X
            \arrow{d}[description,inner sep=2pt]{f}
            \\
            Jw
            \arrow{r}[description,inner sep=2pt]{s}
            & Y
        \end{tikzcd}
        \quad
        \text{ there is }
        d\colon Jw\to X\text{ with }
        \quad
        \begin{tikzcd}
            Jv
            \arrow{r}[description,inner sep=2pt]{r}
            \arrow{d}[description,inner sep=2pt]{Je}
            \descto[pos=0.15]{dr}{\commutes}
            \descto[pos=0.85]{dr}{\commutes}
            & X
            \arrow{d}[description,inner sep=2pt]{f}
            \\
            Jw
            \arrow{r}[description,inner sep=2pt]{s}
            \arrow[dashed]{ur}[description,inner sep=2pt]{d}
            & Y
        \end{tikzcd}
    \]
\end{definition}
That is, for all commutative squares ($s\cdot Je = f\cdot r$),
there is $d\colon Jw\to X$ in $\M$ that makes both triangles on the right
commute ($f\cdot d = s$ and $d\cdot Je= r$).

By construction, we can only make statements about states that are reachable
via some run. Thus, one often restricts $\M$ beforehand to contain only
models in which all states are reachable from the initial state.

For LTSs in which all states are reachable from the initial state, 
open maps are related to strong bisimulations~\cite{park81}: open maps are precisely 
functions whose graph relation $\{(x,fx) \mid x \in X\}$ is a strong bisimulation.
Reformulated in the context of allegories~\cite{freyd90}, open maps are precisely 
the maps in the allegory of relations that are strong bisimulations. It is then 
natural to recover bisimulations as tabulations of open maps, that is:

\begin{definition}
   For a functor $J\colon \Path\to \M$, we say that two models $X$ and $Y$ 
   are \emph{$J$-bisimilar}, if there exist another model $Z$ and two $J$-open 
   maps $f\colon Z\to X$ and $g\colon Z\to Y$, that is, if there is a span of 
   $J$-open maps between them.
\end{definition}

Of course, $J$-bisimilarity is a reflexive (identities are open maps) and 
symmetric (by permuting $f$ and $g$ in the definition) relation on models, 
but it is not transitive in general. It is when the category $\M$ has 
pullbacks \cite{joyal96}.

Given a functor $J\colon \Path\to \M$, there are more classical ways of defining 
bisimilarities given in \cite{joyal96}. The first one is \emph{(strong) path bisimulations}, 
which are relations on runs (similar to history-preserving bisimulations) 
satisfying the usual bisimilarity conditions. The second one is by using a 
modal logic similar to the Hennessy-Milner theorem. In the case of LTSs with 
strong bisimilarity, all those notions describe the same notion of bisimilarity, but 
that is not true for general $J\colon \Path\to \M$:
\twnote{} 
\jdnote{}
it can only be proved that $J$-bisimilarity implies the 
existence of a (strong) path bisimulation, which itself implies that the two models 
satisfy the same formulas of the modal logic. In \cite{dubut16}, some mild sufficient 
conditions in terms of trees (i.e., colimits of paths in $\M$) are given for those 
three notions to coincide. In particular, all the examples of bisimilarities covered 
by open maps cited earlier satisfy these conditions.

We use coalgebra for uniform statements about state-based systems of
different branching type (including non-deterministic and probabilistic
branching):

\begin{definition}
For an object $1$ of a category $\C$ and an endofunctor 
$F\colon\, \C \to \C$,
a \emph{pointed coalgebra} is a pair of 
morphisms of $\C$ of the form
$
	1 \xrightarrow{~i~} X \xrightarrow{~\xi~} FX.
$
\end{definition}

For example, LTSs can be modeled as pointed coalgebras with $\C = \Set$, $1$ any singleton, 
and $F = \pow(A\times \_)$, where $\pow$ is the power set functor.
The usual notion of morphisms of coalgebras can be spelt out as follows:

~\\
\noindent\medskip
\begin{minipage}{.6\linewidth}
\begin{definition}
A \emph{(proper) homomorphism} of pointed coalgebras from $(X,\xi,i)$ to $(Y,\zeta,j)$ is a
morphism $f\colon\, X \to Y$ of $\C$ such that the diagram on the right commutes.
\end{definition} 
\end{minipage}
\begin{minipage}{.35\linewidth}%
 \hspace*{0pt}\hfill
    \begin{tikzcd}
      1
      \arrow{r}[description,inner sep=2pt]{i}
      \arrow[bend right=40]{rd}[description,inner sep=2pt]{j}
      \descto{dr}{$\commutes$}
      & X
      \arrow{r}[description,inner sep=2pt]{\xi}
      \arrow{d}[description,inner sep=2pt]{f}
      \descto{dr}{$\commutes$}
      & FX
      \arrow{d}[description,inner sep=2pt]{Ff}
      \\
      & Y
      \arrow{r}[description,inner sep=2pt]{\zeta}
      & FY
    \end{tikzcd}
 \end{minipage}
 ~\\

Pointed coalgebras and proper homomorphisms always form a category, 
but in the case of LTSs as described above, this category is not equivalent to the category
$\LTS$. Indeed, proper homomorphisms are not just morphisms that preserve transitions, but
similarly to open maps, they also reflect them. In \cite{WDKH2019}, the authors 
proved that for a large class of endofunctors, whose coalgebras basically are non-deterministic,
proper homomorphisms precisely correspond to $J$-open maps for a certain 
functor $J$.
To model morphisms that are only required to preserve transitions, homomorphisms have to be 
made lax as follows (see \cite{WDKH2019}):\\~\\
\noindent\medskip
\begin{minipage}{.6\linewidth}
\begin{definition}
Assume a
relation $\sqsubseteq$ on every Hom-set $\C(X,FY)$.
A \emph{lax homomorphism} of pointed coalgebras from $(X,\xi,i)$ to $(Y,\zeta,j)$ is a
morphism $f\colon\, X \to Y$ of $\C$ such that the diagram on the right laxly commutes, 
that is, $f\cdot i = j$ and $Ff\cdot\xi \sqsubseteq \zeta\cdot f$ in $\C(X,FY)$.
\end{definition}
\end{minipage}
\begin{minipage}{.35\linewidth}
 \hspace*{0pt}\hfill
    \begin{tikzcd}
      1
      \arrow{r}[description,inner sep=2pt]{i}
      \arrow[bend right=40]{rd}[description,inner sep=2pt]{j}
      \descto{dr}{$\commutes$}
      & X
      \arrow{r}[description,inner sep=2pt]{\xi}
      \arrow{d}[description,inner sep=2pt]{f}
      \descto{dr}{\downinclusion}
      & FX
      \arrow{d}[description,inner sep=2pt]{Ff}
      \\
      & Y
      \arrow{r}[description,inner sep=2pt]{\zeta}
      & FY
    \end{tikzcd}
\end{minipage}
~\\

In the case of the functor $\pow(A\times\_)$, we can consider the pointwise inclusion on 
every Hom-set $\Set(X,\pow(A\times Y))$. With this, pointed coalgebras and lax 
homomorphisms form a category which is isomorphic to the category $\LTS$.
However, it is not true in general that they form a category, as a compatibility of $\sqsubseteq$ 
with the composition is needed as follows:
\begin{definition}
\label{def:partialOrder}
A \emph{partial order on $F$} is a collection of partial orders $\sqsubseteq$, one for each 
Hom-set of the form $\C(X,FY)$ such that
\[
	\forall X\xrightarrow{f_1,f_2} FY,~
  X'\xrightarrow{g} X,~
  Y\xrightarrow{h} Y'\colon ~~
  f_1 \sqsubseteq f_2 ~~\Rightarrow ~~
		Fh\cdot f_1 \cdot g \sqsubseteq Fh\cdot f_2 \cdot g.
\]
This is equivalent to the requirement that the Hom-functor $\C(\arg, F\arg)$ factors through partially ordered sets:
\(
\C(\arg, F\arg)\colon \C^\dual\times \C\to \Pos.
\)
\end{definition}
\begin{remark} \label{remOrderHJ}
The present definition subsumes the definition of order on a Set-functor
established by Hughes and Jacobs~\cite[Def 2.1]{HughesJ04} (details in the
appendix).
\end{remark}
\begin{notheorembrackets}
\begin{lemma}[\cite{WDKH2019}]
When $\sqsubseteq$ is a partial order on $F$, pointed coalgebras and lax homomorphisms
form a category, which we denote by $\FSim(1,F)$.
\end{lemma}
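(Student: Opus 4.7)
The plan is to verify the three pieces of categorical data (identities, composition, and the laws), using that the underlying morphisms live in $\C$, which already is a category. Hence associativity and the unit laws are automatic, and the only real work is to check that identities and composites do satisfy the lax homomorphism condition.

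First I would handle identities. For a pointed coalgebra $(X,\xi,i)$ the candidate identity morphism is $\id_X\colon X\to X$ of $\C$. Its preservation of the initial state $\id_X\cdot i = i$ is trivial, and the lax square $F\id_X\cdot\xi = \xi = \xi\cdot\id_X$ is even a strict equality, which gives $\sqsubseteq$ by reflexivity of the partial order on the Hom-set $\C(X,FX)$.

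The substantive step is closure under composition. Given lax homomorphisms $f\colon(X,\xi,i)\to(Y,\zeta,j)$ and $g\colon(Y,\zeta,j)\to(Z,\chi,k)$, I would argue that $g\cdot f$ is again lax. Preservation of the initial state composes by $(g\cdot f)\cdot i = g\cdot(f\cdot i) = g\cdot j = k$. For the lax square I want $F(g\cdot f)\cdot\xi\sqsubseteq \chi\cdot(g\cdot f)$. Starting from $Ff\cdot \xi\sqsubseteq \zeta\cdot f$, the compatibility in \autoref{def:partialOrder} (applied with postcomposing $Fg$ on the left and precomposing $\id_X$ on the right) gives $Fg\cdot Ff\cdot \xi \sqsubseteq Fg\cdot \zeta\cdot f$. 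Starting instead from $Fg\cdot \zeta\sqsubseteq \chi\cdot g$, applying the same compatibility with identity on the left and postcomposing $f$ on the right yields $Fg\cdot \zeta\cdot f\sqsubseteq \chi\cdot g\cdot f$. Functoriality of $F$ turns the left hand side of the first inequality into $F(g\cdot f)\cdot\xi$, and transitivity of $\sqsubseteq$ chains the two inequalities into the desired one.

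Finally, associativity and unit laws for this composition follow immediately because lax homomorphisms are just morphisms of $\C$ satisfying extra conditions, and composition of lax homomorphisms is simply composition in $\C$. The main (and really only) obstacle is the composition step, which is exactly the reason why the compatibility condition on $\sqsubseteq$ in \autoref{def:partialOrder} is phrased the way it is: without the bi-sided stability under pre- and post-composition, we could not propagate the two lax inequalities along $Fg$ and $f$ to reassemble them into one lax square for $g\cdot f$.
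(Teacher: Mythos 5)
Your proof is correct: identities are lax by reflexivity of $\sqsubseteq$, and the composite is handled exactly as one should, by whiskering the two lax inequalities with $Fg$ on the left and $f$ on the right (both instances of the compatibility condition in Definition~\ref{def:partialOrder}) and chaining them by transitivity, with the category axioms inherited from $\C$. The paper itself gives no proof (the lemma is imported from \cite{WDKH2019}), but your argument is the standard one and matches what that condition is designed for; the only cosmetic slip is calling precomposition with $f$ ``postcomposing on the right,'' which does not affect the mathematics.
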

\end{notheorembrackets}

Much as with open maps, many flavors of bisimilarity can be recovered using spans of 
proper homomorphisms:
\begin{definition}
We say that two pointed coalgebras are coalgebraically bisimilar if there is a span of 
proper homomorphisms between them.
\end{definition}

There are many ways of defining bisimilarities in coalgebra (see \cite{jacobs16} for an overview), 
but they coincide for the purpose of the present paper.

\section{Weighted Bisimilarity and Open Maps}
\label{sec:negativeResults}

In this section, we describe known attempts to model weighted systems, and 
particularly probabilistic ones, using open maps. They all work with some variations 
of the (discrete) distribution functor
on $\Set$. We will denote this functor, which maps a
set $X$ to the set
\[
	\Dist X = \big\{ f\colon X \to [0,1] \mid f^{-1}\big((0,1]\big)\text{ is finite and } \sum_{x\in X} f(x) = 1\big\},
\]
by $\Dist$ and the variation where the condition $= 1$ is replaced by $\leq 1$ by $\Subdist$
(i.e.~$\Subdist X := \Dist (X + 1)$).
We will prove that, even though Larsen-Skou bisimulations for reactive systems
can be modeled with open maps, that is impossible for bisimulations for 
generative systems.

\subsection{Larsen-Skou bisimilarity for reactive systems using open maps}

In \cite{cheng95}, Cheng et al. describe an open map situation for Probabilistic Transition Systems (PTSs), 
which corresponds to coalgebras for the functor $(\Dist(\arg)+1)^A$. In this setting, they 
consider Partial PTSs (PPTS) which are coalgebras for  
$(\Subdist^\varepsilon(\arg)+1)^A$ where 
the sub-probability distributions can have values in hyperreals, allowing infinitesimals $\varepsilon$. The category of PTSs 
embeds in that of PPTSs, and the path category is the full subcategory of PPTSs consisting of finite linear systems 
whose probabilities of transitions are infinitesimals. It is then proved that $J$-bisimilarity, restricted to PTSs,
for this path category 
corresponds to Larsen-Skou's probabilistic bisimilarity \cite{larsen91}.

This open map situation has been reformulated in \cite{dubut18} in terms of coreflections: the obvious functor from 
PPTSs to TSs is a coreflection whose left-adjoint maps a LTS $T$ to the PPTS whose 
underlying LTS is $T$ and where all transitions have infinitesimal probabilities. 
In general, given a coreflection $F:\,\mathcal{C}\,\to\,\mathcal{D}$ with left-adjoint $G$ and a 
path category $J$ on $\mathcal{D}$, one automatically has the path category $G\circ J$ on 
$\mathcal{C}$, and this construction preserves good properties of $J$. 
In particular, one has that two systems $A$ and $B$ are $(G\circ J)$-bisimilar if and only if 
$FA$ and $FB$ are $J$-bisimilar. Cheng et al.'s path category is obtained in this manner
with the coreflection above and the standard path category on LTSs. In particular, it means that two 
PPTSs are bisimilar if and only if their underlying TSs are strongly bisimilar.

\subsection{Impossibility result for generative systems}
\label{sec:impossibility}

In \cite{desharnais02}, Desharnais \etal~ describe several bisimilarities for 
generative probabilistic systems, that is, coalgebras for the functor
$\Subdist(A\times \arg)$, in a coalgebraic way. 
They pointed out that their efforts to model those bisimilarities using open maps
failed \cite[p.~188]{desharnais02}.
In the following, we see that it is in fact not possible. 
We will show that for generative probabilistic 
systems modeled by the category
$M:=\FSim(1,\Subdist(A\times \arg))$, there is no open map characterization 
of the coalgebraic bisimilarity.
Actually, the argument here is valid for many 
other types of weights and is not limited to reals.
Here, for two functions $f,g\colon X \to \Subdist(Y)$, $f \sqsubseteq g$ means 
that for all $x \in X$, for all $y \in Y$, $f(x)(y) \leq g(x)(y)$, where $\leq$ is 
the usual ordering on $[0,1]$.

In this situation:

\begin{theorem}
\label{thm:impossibility}
  For $\M:= \FSim(1,\Subdist(A\times \arg))$ there is no category $\bbP$ and no
  functor $J\colon \bbP\to \M$ such that for every $h\colon X\to Y$ with
  reachable $X$ the following equivalence holds:
  \[
    h\text{ is $J$-open}
    \quad
    \Longleftrightarrow
    \quad
    h\text{ is a proper homomorphism}
  \]
  and there is no $\bbP$ and no functor $J$ such that for every $X$ and $Y$:
  \[
  	X\text{ and }Y\text{ are }J\text{-bisimilar}
        \quad
        \Longleftrightarrow
        \quad
        X\text{ and }Y\text{ are coalgebraically bisimilar}.
   \]
\end{theorem}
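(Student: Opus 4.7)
The plan is to proceed by contradiction via a dichotomy on the paths in $\bbP$. Assume that $(\bbP, J)$ witnesses the first equivalence, and introduce two families of witnesses in $\M$: for each $n \geq 2$, the reachable $n$-state coalgebra $X_n$ on $\{x_0, \dots, x_{n-1}\}$ (initial $x_0$) with $\xi(x_i)(a, x_j) = 1/n$ for all $i, j$, together with the one-state coalgebra $Y$ with $\zeta(y)(a, y) = 1$ and the proper collapse homomorphism $h_n \colon X_n \to Y$; and for $0 \leq \alpha < \beta \leq 1$, the one-state coalgebras $X_\alpha$ with self-loop weight $\alpha$ and the identity-on-underlying-sets map $h_{\alpha,\beta} \colon X_\alpha \to X_\beta$, which is lax but not proper.

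The dichotomy splits on whether some $Jp \in \bbP$ contains a single transition of positive weight $w > 0$. In \emph{case (a)}, some $Jp$ has such a transition; choose $n$ with $1/n < w$. Any lax $r \colon Jp \to X_n$ would force $w \leq 1/n$ at the $r$-image of that transition, which fails, so no such $r$ exists. Yet $Jp \to Y$ exists trivially, yielding (via a suitable commuting square extracted from $\bbP$'s morphism structure) a lifting problem for $h_n$ with no solution, so $h_n$ is not $J$-open despite being proper, a contradiction. In \emph{case (b)}, every transition in every $Jp$ carries weight $0$; then any function out of $Jp$ into a coalgebra is vacuously lax, so every lifting problem for $h_{1/2, 1}$ has a trivial lift, making $h_{1/2, 1}$ $J$-open despite not being proper, again a contradiction.

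For the second statement, the same dichotomy supplies separating pairs. In case (b), the span $X_{1/2} \xleftarrow{\mathrm{id}} X_{1/2} \xrightarrow{h_{1/2, 1}} X_1$ witnesses $J$-bisimilarity, but no span of proper homomorphisms exists since any intermediate coalgebra would need outgoing $a$-weight simultaneously equal to $1/2$ and $1$. In case (a), $X_n$ and $Y$ are coalgebraically bisimilar via the proper span $X_n \xleftarrow{\mathrm{id}} X_n \xrightarrow{h_n} Y$, but any putative $J$-open span $X_n \xleftarrow{g} Z \xrightarrow{f} Y$ forces $Z$'s transitions to be bounded by $1/n$ (for $g$ to be lax), and then the same weight argument applied to $f$ rules out $J$-openness.

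The main obstacle is formalizing the lifting-problem setup in case (a): one must identify a commuting square in $\M$, derived from $\bbP$'s morphisms together with reachability on $\M$, that witnesses the failure of the lifting property. In extremely impoverished path categories where no such square can be extracted, the equivalence degenerates and the argument reduces to the vacuous triviality of case (b), applied to $h_{1/2, 1}$ directly.
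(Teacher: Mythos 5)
Your counterexample systems and the weight-bounding argument are essentially the paper's (the $n$-fold equal split $X_n$, the one-point target $Y$, the collapse $h_n$, and the observation that a lax map cannot send a transition of weight $w$ into a system all of whose transitions weigh $1/n<w$). However, there is a genuine gap exactly where you flag ``the main obstacle'': in case (a) you never actually produce the commuting square that witnesses non-openness of $h_n$. Knowing that some $Jp$ carries a transition of positive weight $w$ does not give you a morphism $e\colon v\to p$ in $\bbP$ together with a run $Jv\to X_n$ closing a square over $h_n$; if $p$ admits only identity morphisms into it, every lifting problem it generates is trivially solvable and $p$ detects nothing. Your proposed fallback --- ``the argument reduces to case (b)'' --- does not repair this: take $\bbP$ to be the disjoint union of an isolated heavy object with a hypothetical path category that does satisfy the equivalence. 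This configuration lies in your case (a), your $h_n$ argument produces no failing square, and $h_{1/2,1}$ is genuinely not open (the other component detects it), so neither branch yields a contradiction. The dichotomy ``does some $Jp$ have a positive-weight transition'' is therefore not the right case split. A secondary issue in the same branch: your chosen $Jp$ need not admit any lax morphism to $Y$ at all (e.g.\ if it carries transitions with a second label), so even the horizontal edge of your intended square may fail to exist.

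The paper's missing idea is to run the assumed equivalence in the \emph{other} direction first, on a map that is lax but not proper and has reachable domain: the initial morphism $!_Y\colon 0_\M\to Y$. By hypothesis it is not $J$-open, so one obtains a \emph{concrete} commuting square $q\cdot J\phi = {!_Y}\cdot p$ with no diagonal. Analysing that square shows $JP\cong 0_\M$ (so $JP$ is initial in $\M$) and that $JQ$ has an edge $r\xtransto{w,a}z$ with $w>0$ and label $a$ (forced because $q$ lands in $Y$). Initiality of $JP$ then lets one reuse the \emph{same} $\phi$ and $q$ to form the square $q\cdot J\phi = h_n\cdot{!_{X_n}}$ over $h_n$, whose hypothetical diagonal $JQ\to X_n$ is killed by your weight argument. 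The same pre-packaged square also drives the second statement: openness of the leg $Z\to Y$ of any putative span forces a transition of weight at least $w$ into $Z$, which the other leg cannot map laxly into $X_n$. In short, your proof needs to \emph{derive} the heavy transition from a failing lifting problem rather than hunt for it in $\bbP$ directly; as written, the central step is asserted but not established.
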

\begin{proof}[Sketch]
By contradiction, assume that there is such a $J$.
We prove that there is a proper homomorphism of the form:
 \[
    X = \simplepath[30]{{1/n,a}/x_1,{}/{\raisebox{6pt}{\ensuremath{\vdots}}},{1/n,a}/x_n} \quad\quad\xrightarrow{~h~}\quad\quad
    Y = \simplepath{{1,a}/y}
\]
which cannot be $J$-open.
Consider first the unique lax homomorphism $0_\M \to Y$ where $0_\M$ consists in one state 
and no transition. This is not a proper homomorphism, so it is not open by assumption.
That is there is a square:
  \[
    \begin{tikzcd}
      JP
      \arrow{r}[description,inner sep=2pt]{p}
      \arrow{d}[description,inner sep=2pt]{J\phi}
      & 0_\M
      \arrow{d}[description,inner sep=2pt]{!_Y}
      \\
      JQ \arrow{r}[description,inner sep=2pt]{q}
      & Y
    \end{tikzcd}
  \]
with no lifting. It is mechanical to check that $JP \simeq 0_\M$ and $JQ$ has at least 
one transition from its initial state to another state $r \xtransto{~w,a~} z$ with $w \neq 0$.
With $n = 2\cdot \lceil \frac{1}{w}\rceil$, the proper homomorphism $h$ above is not open: 
there cannot be a morphism
from $JQ$ to $X$ because $w > \frac{1}{n}$.
\qed
\end{proof}

\section{Generalized Open Maps}
\label{sec:generalOpenMaps}

The main argument of the proof of impossibility is
the fact that sometimes, a transition with some probability $w$ in the codomain comes from
probabilities $w_1, \ldots, w_n$ with $\sum_i w_i = w$ in the domain, which makes a lifting morphism impossible
with the current framework of open maps.

In this section, we will extend the open map framework with the main intuition that the lifting morphism \emph{splits} the probability $w$ into smaller parts $w_1,\ldots, w_n$.
After defining these generalized open maps, we
show some basic properties of the bisimilarity generated by them.
 
 \subsection{Generalized Open Maps Situation}
 
 Here, we describe our extension of the open maps framework. 
 The data is similar: we start with a category of models $\M$, but we need more 
 than just a functor $J\colon \bbP\to\M$. Assume:
 \begin{itemize} 
 	\item a set $V$ together with a function $J\colon V \to \ob{\M}$,
	\item two small categories $\E$ and $\bbS$ whose sets of objects are $V$,
	\item two functors $\JE\colon \E\to\M$ and $\JS\colon \bbS\to\M$ coinciding with $J$ on objects.
\end{itemize}

The classical open maps situation $J\colon \bbP\to\M$ fits in this extension 
as follows. The category $\E$ is given by $\bbP$ with the intention that they 
model path shapes and their \emph{extensions}. The functor $\JE$ is 
given by $J$. The category $\bbS$ is given by the discrete category 
$\lvert\bbP\rvert$, that is, the category whose objects are those of $\bbP$ 
and whose morphisms are only identities. The functor $\JS$ is the only 
possible one respecting the conditions of the definition above.

In the general context of this extension, the interpretation is a bit different. 
Now $V$ is meant to be a set of trees labelled by alphabets and weights. 
$\E$ still consists in extensions, extending trees into trees with longer branches. 
$\bbS$ then consists in \emph{merging morphisms}, similar to the description above:
for the example of weighted systems, those morphisms are allowed to merge states
into one, as long as they sum up the weights of the in-going branches.
Generally, those morphisms are allowed to perform some merges that are 
harmless for bisimilarity.

With this data, we can define generalized open maps:

\begin{definition}
\label{def:generalized-open}
A morphism $f\colon X\to Y$ in $\M$ is called \emph{(\E,\bbS)-open} if it satisfies the following \emph{lifting property} for all $e\colon v\to w$ in $\E$:
    \[
        \text{for all }
        \quad
        \begin{tikzcd}
            Jv
            \arrow{r}[description,inner sep=2pt]{x}
            \arrow{d}[description,inner sep=2pt]{\JE e}
            \descto{dr}{\commutes}
            & X
            \arrow{d}[description,inner sep=2pt]{f}
            \\
            Jw
            \arrow{r}[description,inner sep=2pt]{y}
            & Y
        \end{tikzcd}
        \quad
        \text{ there is }
        \quad
        \begin{tikzcd}
            Jv
            \arrow{rr}[description,inner sep=2pt]{x}
            \arrow[dashed]{dr}[description,inner sep=2pt]{\JE e'}
            \arrow{dd}[description,inner sep=2pt]{\JE e}
            \descto[pos=0.45]{drr}{\commutes}
            \descto[pos=0.75]{ddrr}{\commutes}
            & & X
            \arrow{dd}[description,inner sep=2pt]{f}
            \\
            \phantom{}
            \descto[pos=0.5]{r}{\commutes}
            &Ju
            \arrow[dashed]{ur}[description,inner sep=2pt]{x'}
            \arrow[dashed]{dl}[description,inner sep=2pt]{\JS s}
            & \phantom{}
            \\
            Jw
            \arrow{rr}[description,inner sep=2pt]{y}
            & & Y
        \end{tikzcd}
    \]
\end{definition}
The interpretation starts the same as in usual open maps. 
Assume that we have a tree $y$ in $Y$ 
extending the image by $f$ of the tree $x$ in $X$. 
If $f$ is open, there should be a tree $x'$ extending $x$ and whose image 
by $f$ is $y$. However, $x'$ may have a different shape than $y$, since it might 
be necessary to split transitions. That is what $u$ and $s$ are modeling: 
$w$ is obtained from $u$ by merging some states.

The connection with the classical open maps can be formulated as follows
\begin{proposition}
Given a functor $J\colon \bbP\to\M$ and a morphism $f\colon X\to Y$,
\[
    f\text{ is }J\text{-open if and only if }f\text{ is }(\bbP,\lvert\bbP\rvert)\text{-open.}
\]
\end{proposition}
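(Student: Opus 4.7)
The plan is to unfold Definition~\ref{def:generalized-open} in the special case where $\E = \bbP$ (with $\JE = J$) and $\bbS = \lvert\bbP\rvert$ is the discrete category on $\ob{\bbP}$, and observe that the hexagonal lifting diagram degenerates to the square lifting diagram of the classical definition of $J$-open. The key syntactic observation is that every morphism $s\colon u\to w$ in $\lvert\bbP\rvert$ is necessarily an identity, which forces $u=w$ and $\JS s = \id_{Jw}$ by functoriality of $\JS$. Once this is recognised, the proof is essentially a rewriting exercise and there is no genuine obstacle.

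For the direction $J$-open $\Rightarrow (\bbP,\lvert\bbP\rvert)$-open, I would take a commutative square $f\cdot x = y\cdot Je$ with $e\colon v\to w$ in $\bbP$ and apply the classical lifting property to obtain $d\colon Jw\to X$ satisfying $d\cdot Je = x$ and $f\cdot d = y$. Then I set $u := w$, $e' := e$ (so $\JE e' = Je$), $s := \id_w$ in $\lvert\bbP\rvert$ (so $\JS s = \id_{Jw}$), and $x' := d$. The upper triangle becomes $d\cdot Je = x$, the lower triangle becomes $\id_{Jw}\cdot Je = Je$, and the remaining cell becomes $f\cdot d = y = y\cdot\id_{Jw}$, so all commutations in Definition~\ref{def:generalized-open} are witnessed.

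For the converse, starting from a classical square $f\cdot x = y\cdot Je$, I would view it as a generalised square and apply the $(\bbP,\lvert\bbP\rvert)$-lifting to extract $u$, $e'\colon v\to u$ in $\bbP$, $s\colon u\to w$ in $\lvert\bbP\rvert$, and $x'\colon Ju\to X$. Since $\bbS$ is discrete, $s = \id_w$ and in particular $u = w$ and $\JS s = \id_{Jw}$. The commutation $\JS s\cdot \JE e' = \JE e$ collapses to $\JE e' = \JE e$, and the remaining two commutations give $x'\cdot \JE e = x'\cdot \JE e' = x$ and $f\cdot x' = y\cdot \JS s = y$. Thus $x'\colon Jw\to X$ is exactly the diagonal filler required by the classical lifting property, showing that $f$ is $J$-open.
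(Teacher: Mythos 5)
Your proof is correct; the paper states this proposition without proof, treating it as the routine unfolding of Definition~\ref{def:generalized-open} in the case $\bbS = \lvert\bbP\rvert$, which is exactly what you carry out. Both directions are handled properly, including the slightly delicate point in the converse that discreteness of $\lvert\bbP\rvert$ forces $u=w$ and $\JS s = \id_{Jw}$, so that $x'$ itself serves as the classical diagonal filler via $x'\cdot Je = x'\cdot \JE e' = x$.
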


Again, bisimilarity can be defined as the existence of a span of open maps
\begin{definition}
We say that $X$ and $Y$ are $(\E,\bbS)$-bisimilar if there is a span of 
$(\E,\bbS)$-open maps between them.
\end{definition}

 \subsection{Basic Properties}
 \label{sec:basic-properties}
 
 In this section, we will prove general properties of
 $(\E,\bbS)$-bisimilarity similar to the classical case. First, we show that if
$\M$ has pullbacks, then $(\E,\bbS)$-bisimilarity is an equivalence relation.
Secondly, we describe two notions of path bisimulations, both implied by $(\E,\bbS)$-bisimilarity.
Finally, we prove that it is enough to check openness on some generators of $\E$.

 \medskip

\noindent\medskip%
\begin{minipage}{.65\linewidth}
 In order to see when $(\E,\bbS)$-bisimilarity is an equivalence relation, we need to check symmetry, reflexivity, and transitivity.
 \emph{Symmetry} always holds because we can always swap the legs of the span.
 For \emph{reflexivity}, it is enough to prove that identities are open
  which is valid because $\bbS$ is a category and $\JS$ is a functor, as shown in the diagram on the right.
  The proof of \emph{transitivity} relies on composition and pullbacks:
 \end{minipage}%
 \begin{minipage}{.35\linewidth}%
 \hspace*{0pt}\hfill
     \begin{tikzcd}
            Jv
            \arrow{rr}[description,inner sep=2pt]{x}
            \arrow[dashed]{dr}[description,inner sep=2pt]{\JE e}
            \arrow{dd}[description,inner sep=2pt]{\JE e}
            \descto[pos=0.45]{drr}{\commutes}
            \descto[pos=0.75]{ddrr}{\commutes}
            & & X
            \arrow{dd}[description,inner sep=2pt]{\id}
            \\
            \phantom{}
            \descto[pos=0.5]{r}{\commutes}
            &Jw
            \arrow[dashed]{ur}[description,inner sep=2pt]{y}
            \arrow[dashed]{dl}[description,inner sep=2pt]{\JS\id = \id}
            & \phantom{}
            \\
            Jw
            \arrow{rr}[description,inner sep=2pt]{y}
            & & Y
        \end{tikzcd}
 \end{minipage}
  
  \begin{lemma}
  \label{lem:open-composition}
  	$(\E,\bbS)$-open maps are closed under composition and pullbacks.
  \end{lemma}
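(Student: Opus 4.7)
The plan is to handle composition and pullback stability separately, both by reducing to diagram chases with a little extra bookkeeping of the merging morphisms in $\bbS$.

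For \emph{composition}, suppose $f\colon X\to Y$ and $g\colon Y\to Z$ are both $(\E,\bbS)$-open, and fix a lifting problem for $g\circ f$ with $e\colon v\to w$ in $\E$, $x\colon Jv\to X$, and $z\colon Jw\to Z$. First I would regard this as a lifting problem for $g$ alone, with top map $f\circ x$ and bottom map $z$; openness of $g$ delivers an intermediate $u_1\in V$, an extension $e_1\colon v\to u_1$ in $\E$, a merging $s_1\colon u_1\to w$ in $\bbS$, and a lift $y_1\colon Ju_1\to Y$. The equation $y_1\circ\JE e_1=f\circ x$ then yields a fresh lifting problem for $f$ along $\JE e_1$, to which openness of $f$ applies, producing $u_2$, $e_2\colon v\to u_2$ in $\E$, $s_2\colon u_2\to u_1$ in $\bbS$, and a lift $x'\colon Ju_2\to X$. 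The candidate witness for $g\circ f$ is then $(u_2,\,e_2,\,s_1\circ s_2,\,x')$, where the composition $s_1\circ s_2$ is taken in $\bbS$. Functoriality of $\JS$ gives $\JS(s_1\circ s_2)=\JS s_1\circ\JS s_2$, from which the three required equations follow by straightforward substitution.

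For \emph{pullbacks}, assume $f\colon X\to Z$ is $(\E,\bbS)$-open and consider a pullback square $p_2\colon P\to X$, $p_1\colon P\to Y$, $g\colon Y\to Z$ with $f\circ p_2=g\circ p_1$. Given a lifting problem for $p_1$ with data $x\colon Jv\to P$ and $y\colon Jw\to Y$, postcomposing the top with $p_2$ and the bottom with $g$ turns it into a lifting problem for $f$. Openness of $f$ yields $u$, $e'\colon v\to u$, $s\colon u\to w$ in $\bbS$, and $x'\colon Ju\to X$ satisfying $f\circ x'=g\circ y\circ\JS s$. The universal property of the pullback then supplies a unique $\phi\colon Ju\to P$ with $p_2\circ\phi=x'$ and $p_1\circ\phi=y\circ\JS s$, and this $\phi$ is the required lift. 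The only verification needing attention is the top triangle $\phi\circ\JE e'=x$, which follows by comparing the two composites $p_2\circ\phi\circ\JE e'=x'\circ\JE e'=p_2\circ x$ and $p_1\circ\phi\circ\JE e'=y\circ\JS s\circ\JE e'=y\circ\JE e=p_1\circ x$ and invoking uniqueness of the pullback mediator.

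I do not anticipate a genuine obstacle, since both arguments are formal. The one place where some care is needed is the composition step: the composite merging $s_1\circ s_2$ has to live in $\bbS$ and have the correct image under $\JS$, which is precisely what the assumption that $\bbS$ is a category and $\JS$ is a functor was set up to guarantee. In the pullback step the only subtle point is the uniqueness half of the pullback universal property, which is entirely standard.
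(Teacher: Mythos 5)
Your proof is correct and follows essentially the same route as the paper's: for composition, lift first through $g$, then through $f$ along the intermediate extension, and compose the two merging morphisms in $\bbS$ (using functoriality of $\JS$); for pullbacks, transport the lifting problem to the open leg, then use the existence and uniqueness parts of the pullback's universal property to obtain the mediating lift and verify the top triangle. No gaps.
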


  \begin{theorem}
  \label{thm:equivalenceRelation}
  If $\M$ has pullbacks, then $(\E,\bbS)$-bisimilarity is a transitive relation, 
  and thus is an equivalence relation.
  \end{theorem}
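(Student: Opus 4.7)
The plan is to follow the classical Joyal--Nielsen--Winskel transitivity argument, transported to the generalized setting. Reflexivity and symmetry have already been justified in the preceding discussion, so only transitivity remains to be shown.

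Suppose $X$ is $(\E,\bbS)$-bisimilar to $Y$ via a span $X \xleftarrow{f_1} Z_1 \xrightarrow{g_1} Y$ of $(\E,\bbS)$-open maps, and $Y$ is $(\E,\bbS)$-bisimilar to $W$ via a span $Y \xleftarrow{f_2} Z_2 \xrightarrow{g_2} W$ of $(\E,\bbS)$-open maps. I would form the pullback $P$ of $g_1$ along $f_2$ in $\M$ (which exists by assumption), obtaining projections $p_1\colon P\to Z_1$ and $p_2\colon P\to Z_2$ with $g_1\cdot p_1 = f_2\cdot p_2$. By Lemma~\ref{lem:open-composition}, $(\E,\bbS)$-open maps are closed under pullback, so $p_1$ is open (being a pullback of the open map $f_2$) and $p_2$ is open (being a pullback of the open map $g_1$). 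Closure under composition, from the same lemma, then guarantees that $f_1\cdot p_1\colon P\to X$ and $g_2\cdot p_2\colon P\to W$ are both $(\E,\bbS)$-open, so that together they provide the desired span witnessing $(\E,\bbS)$-bisimilarity between $X$ and $W$.

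Given that the proof of the theorem itself amounts to invoking Lemma~\ref{lem:open-composition} twice over the pullback, the real work is in that lemma. Composition closure should be a direct diagram chase: for a lifting problem against $f_2\cdot f_1$, first apply openness of $f_2$ to obtain an intermediate object with an $\E$-extension and an $\bbS$-merging, then apply openness of $f_1$ against the resulting inner square, and finally concatenate the two $\E$-extensions and compose the two $\bbS$-morphisms, using functoriality of $\JE$ and $\JS$. Pullback closure is the subtler half, and is the step I expect to be the main obstacle: one has to show that when a lifting problem against a pullback projection is transported to a problem against the opposite leg, the resulting lifting datum of Definition~\ref{def:generalized-open} — including the intermediate object $Ju$ and the merging $\JS s$ — factors back through the pullback in $\M$. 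This is where the assumed existence of pullbacks in $\M$ gets genuinely used, via its universal property applied to the image of $Ju$.
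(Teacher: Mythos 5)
Your proposal is correct and follows essentially the same route as the paper: form the pullback of the two inner legs, invoke Lemma~\ref{lem:open-composition} for stability of $(\E,\bbS)$-open maps under pullback to get the projections open, and invoke it again for closure under composition to obtain the composite span. Your sketch of how the lemma itself is proved (chaining two liftings for composition; using the universal property of the pullback in $\M$ to factor the lifting datum for pullback stability) also matches the paper's appendix argument.
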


 \subsubsection{Generalized Path Bisimulations.}
 In the classical open map setup~\cite{joyal96},
 another notion of bisimilarity can be defined by using path extensions directly:
 so-called strong path and path 
 bisimulations, which can be generalized as follows.
 Like originally~\cite{joyal96}, we assume that there is an element $0 \in V$, such that
 $J0$ is an initial object of $\M$ (note that $0$ is not required to be initial in $\E$ or $\bbS$). The intuition is that the unique morphism $!_X\colon J0\to X$ points to 
 the initial state of $X$.
 For example, $J0$ can be given by $(1,\id_1,\bot)$ in a category of pointed coalgebras if $1$ is the final object of $\C$ and if $\C(1,F1)$ has the least element $\bot\colon 1\to F1$
  (those conditions hold in the cases of interest).
 
 \begin{definition}
 \label{def:path-simulation}
 A \emph{path simulation} from $A$ to $B$ in $\M$
 is a set $R$ of spans of the form
 $A \xleftarrow{~a~} Jv \xrightarrow{~b~} B$ (for $v\in V$)
 satisfying the following two properties
 \begin{itemize}
 	\item \textbf{initial condition:} the span $A \xleftarrow{~!_A~} J0 \xrightarrow{~!_B~} B$ belongs to $R$.
	\item%
  \begin{minipage}[t]{.65\linewidth}
  \textbf{forward closure:} for all spans $A \xleftarrow{~a~} Jv \xrightarrow{~b~} B$ in $R$,
	all $e\colon v\to w \in \E$ and all $a'\colon Jw\to A \in \M$ such that $a = a'\cdot \JE e$, there are
	$e'\colon v\to u \in \E$, $s\colon u\to w \in \bbS$, and $b'\colon Ju\to B \in \M$ such that
	$\JE e = \JS s\cdot \JE e'$, $b = b'\cdot \JE e'$, and the span 
	$A \xleftarrow{~a'\cdot \JS s~} Ju \xrightarrow{~b'~} B$ belongs to $R$.
  \end{minipage}%
  \begin{minipage}[t]{.35\linewidth}%
  \hspace*{0pt}%
  \hfill%
	\begin{tikzcd}[baseline=(Jw.base)]
    |[alias=Jw]|
		Jw
		\arrow{dd}[description,inner sep=2pt]{a'}
		& \phantom{}
		\descto[pos=0.5]{d}{\commutes}
		& Ju
		\arrow[dashed]{ll}[description,inner sep=2pt]{\JS s}
		\arrow[dashed]{dd}[description,inner sep=2pt]{b'}
		\\
		\phantom{}
		\descto[pos=0.5]{r}{\commutes}
		& Jv
		\arrow{lu}[description,inner sep=2pt]{\JE e}
		\arrow[dashed]{ru}[description,inner sep=2pt]{\JE e'}
		\arrow{ld}[description,inner sep=2pt]{a}
		\arrow{rd}[description,inner sep=2pt]{b}
		\descto[pos=0.5]{r}{\commutes}
		\descto[pos=0.5]{d}{\ensuremath{\in R}\xspace}
		& 
    \phantom{}
    \\
		A
		&\phantom{}
		& B
       	\end{tikzcd}%
    \end{minipage}%
\end{itemize}
We say that $R$ is a \emph{strong path simulation} if it additionally satisfies the following:
\begin{itemize}
	\item\begin{minipage}[t]{.65\linewidth}
  \textbf{backward closure:} for all spans $A \xleftarrow{\,a\,} Jv \xrightarrow{\,b\,} B$ in $R$
	and all $e\colon w\to v \in \E$, we have that the span $A \xleftarrow{~a\cdot \JE e~} Jw \xrightarrow{~b\cdot \JE e~} B$
	belongs to $R$.
  \end{minipage}
  \begin{minipage}[t]{.34\linewidth}
  \hspace*{0pt}\hfill%
	\begin{tikzcd}[baseline=(Jw.base)]
		Jv \arrow{d}[description]{a}
		& 
    |[alias=Jw]|
		Jw
		\arrow{l}[description,inner sep=2pt]{\JE e}
		\arrow{r}[description,inner sep=2pt]{\JE e}
		& Jv
    \arrow{d}[description]{b}
		\\
		A
		& \in R
		& B
  \end{tikzcd}
  \end{minipage}
 \end{itemize}
	We say that $R$ is a \emph{(strong) path bisimulation} from $A$ to $B$ if $R$ and 
	$R^\dagger = \{B \xleftarrow{~b~} Jv \xrightarrow{~a~} A \mid A \xleftarrow{~a~} Jv 
	\xrightarrow{~b~} B \in R\}$ are (strong) path simulations.
 \end{definition}
 
 Remark that this version of (strong) path bisimulations has the same type as
 the one by Joyal \etal~\cite{joyal96}, but satisfies more general conditions.
 In particular, when $\bbS$ is a discrete category, the formulation 
 above is exactly the one from \cite{joyal96}.
 Obviously, a strong path bisimulation is a path bisimulation.
 
 The main result of this section is the following.
 \begin{theorem}
 \label{thm:ES-implies-strong-path}
 Assume two models $A$ and $B$ in $\M$.
 If there is a span $A \xleftarrow{~f~} C \xrightarrow{~g~} B$ where $g$ is a morphism 
 of $\M$ and $f$ is an $(\E,\bbS)$-open map, then the following set is a strong path simulation:
 \[
  R_{f,g} := \set{
  A \xleftarrow{~a~} Jv \xrightarrow{~b~} B
  \mid
  \exists
  \text{$c\colon Jv\to C$ with $a = f\cdot c$ and $b = g\cdot c$}
  }
 \]
	\[
	\begin{tikzcd}
		\phantom{}
		&\phantom{}
		& Jv
		\arrow{lld}[description,inner sep=2pt]{a}
		\arrow{rrd}[description,inner sep=2pt]{b}
		\arrow[dashed]{d}[description,inner sep=2pt]{c}
		\descto[pos=0.6]{dl}{\commutes}
		\descto[pos=0.6]{dr}{\commutes}
		& \phantom{}
		& \phantom{}\\
		A
		&\phantom{}
		&C
		\arrow{ll}[description,inner sep=2pt]{f}
		\arrow{rr}[description,inner sep=2pt]{g}
		&\phantom{}
		& B
       	\end{tikzcd}
    	\]
 Consequently, if $A$ and $B$ are $(\E,\bbS)$-bisimilar, then there is 
 strong path bisimulation between them.
 \end{theorem}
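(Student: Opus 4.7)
The plan is to verify the three conditions of a strong path simulation for $R_{f,g}$ directly, using initiality of $J0$ for one of them, openness of $f$ for another, and mere composition for the last. Then the bisimulation statement follows by applying the first claim twice.

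For the \textbf{initial condition}, I would use the assumption that $J0$ is an initial object of $\M$. Since $C \in \M$, there is a unique morphism $!_C\colon J0 \to C$, and both $f \cdot {!_C}\colon J0 \to A$ and ${!_A}$ are morphisms from the initial object, hence equal; similarly $g\cdot {!_C} = {!_B}$. So taking $c := {!_C}$ witnesses $A \xleftarrow{{!_A}} J0 \xrightarrow{{!_B}} B \in R_{f,g}$.

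For the \textbf{forward closure}, suppose a span $A \xleftarrow{a} Jv \xrightarrow{b} B$ lies in $R_{f,g}$ with witness $c\colon Jv \to C$, and suppose $e\colon v\to w$ in $\E$ and $a'\colon Jw\to A$ satisfy $a = a'\cdot \JE e$. Then the outer square
\[
\begin{tikzcd}
Jv \arrow{r}{c} \arrow{d}[swap]{\JE e} & C \arrow{d}{f} \\
Jw \arrow{r}{a'} & A
\end{tikzcd}
\]
commutes, since $f\cdot c = a = a'\cdot \JE e$. By $(\E,\bbS)$-openness of $f$ (Definition~\ref{def:generalized-open}), there exist $e'\colon v\to u$ in $\E$, $s\colon u\to w$ in $\bbS$, and $c'\colon Ju\to C$ in $\M$ with $\JE e = \JS s\cdot \JE e'$, $c'\cdot \JE e' = c$, and $f\cdot c' = a'\cdot \JS s$. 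Setting $b' := g\cdot c'$, the triple $(e',s,b')$ is the required witness: $b = g\cdot c = g\cdot c'\cdot \JE e' = b'\cdot \JE e'$, and $c'$ exhibits $A \xleftarrow{a'\cdot \JS s} Ju \xrightarrow{b'} B \in R_{f,g}$.

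For the \textbf{backward closure}, given $A \xleftarrow{a} Jv \xrightarrow{b} B \in R_{f,g}$ with witness $c$ and any $e\colon w\to v$ in $\E$, the morphism $c\cdot \JE e\colon Jw \to C$ is a witness for $A \xleftarrow{a\cdot \JE e} Jw \xrightarrow{b\cdot \JE e} B \in R_{f,g}$, since $f\cdot(c\cdot \JE e) = a\cdot \JE e$ and $g\cdot(c\cdot \JE e) = b\cdot \JE e$. Finally, if $A$ and $B$ are $(\E,\bbS)$-bisimilar via a span $A \xleftarrow{f} C \xrightarrow{g} B$ of $(\E,\bbS)$-open maps, then by the first claim $R_{f,g}$ is a strong path simulation, and so is $R_{f,g}^\dagger = R_{g,f}$ by symmetry (applying the first claim with the roles of $f$ and $g$ swapped, using openness of $g$); hence $R_{f,g}$ is a strong path bisimulation.

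The only delicate step is the forward closure, which is essentially a direct unpacking of the generalized open-map diagram; the main thing to be careful about is matching the triangles obtained from openness with the factorizations required by Definition~\ref{def:path-simulation}, in particular remembering that the extension morphism $e'$ produced by openness ends at the intermediate object $u$, not at $w$.
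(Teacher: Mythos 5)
Your proof is correct and follows essentially the same route as the paper's: initiality of $J0$ for the initial condition, one application of the $(\E,\bbS)$-openness of $f$ to the square $f\cdot c = a'\cdot \JE e$ for the forward closure (with $b' := g\cdot c'$), precomposition with $\JE e$ for the backward closure, and symmetry via $R_{f,g}^\dagger = R_{g,f}$ for the final claim. Nothing is missing.
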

 
 As in the classical case of \cite{joyal96}, there is no reason 
 for the converse to be true in general: there might be a 
 strong path bisimulation between two models, but no span 
 of generalized open maps.
 However, conditions from \cite{dubut16} could be 
 accommodated to describe a general framework in which the 
 converse holds. Since this is not the main focus of this paper, 
 we will not do it here, but will show a particular case in 
 Section~\ref{sec:weak-bisimulations}.
 
 \subsubsection{Generators of the Category of Extensions.}
 
 In the first example of open maps for LTSs introduced in Section~\ref{sec:pathCat}, 
 the path category was described as the poset of words with the prefix order.
 Consequently, to prove that a functional simulation is $J$-open,
 we have to prove the lifting property of Definition~\ref{def:generalized-open} with respect
 to all pairs $w \leq w'$. However, it is sufficient to check the
 lifting property for extensions by one letter: $w' = w.a$ for some $a \in A$. The general reason is 
 that, as a category, $(A^\ast,\leq)$ is generated by the morphisms $w \leq w.a$, 
 and verifying the lifting property with respect to generators of the category $\bbP$ is enough 
 to obtain $J$-openness.
 This can be extended to generalized open maps, with additional care.
 
 \begin{proposition}
 \label{prop:generators}
 Assume a subgraph $\E'$ of $\E$ that generates $\E$, that is,
 every morphism of $\E$ is a finite composition of morphisms 
 of $\E'$. Assume additionally, that for every $e \in \E'$ and $s \in \bbS$ for which
 $\JE e \cdot \JS s$ is well-defined, there are $s' \in \bbS$ and $e' \in \E'$ such that
 $\JE e \cdot \JS s = \JS s' \cdot \JE e'$.

 In that case, if a morphism of $\M$ satisfies the lifting property of
 Definition~\ref{def:generalized-open} for all morphisms in $\E'$,
 then it is $(\E,\bbS)$-open.
 Also, if a set of spans satisfies the conditions of Definition~\ref{def:path-simulation}, where $\E$ is replaced by $\E'$, 
 then it is a (strong) path bisimulation.
 \end{proposition}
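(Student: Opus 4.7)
The plan is to prove both parts by induction on the length of a decomposition of an $\E$-morphism into morphisms of $\E'$, using at each step the compatibility hypothesis to push $\bbS$-morphisms across generators. As a preliminary I would establish a compatibility sublemma extending the hypothesis from $\E'$ to all of $\E$: for every $e \in \E$ and $s \in \bbS$ for which $\JE e \cdot \JS s$ is defined, there exist $e'' \in \E$ and $s'' \in \bbS$ with $\JE e \cdot \JS s = \JS s'' \cdot \JE e''$. This follows by induction on a decomposition $e = e_n \circ \cdots \circ e_1$ with $e_i \in \E'$: the base case $n = 1$ is the hypothesis, and in the inductive step one pushes $s$ through $e_1$ by the hypothesis, then pushes the resulting $\bbS$-morphism through $e_n \circ \cdots \circ e_2$ by the induction hypothesis.

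For the openness claim, given $e = e_n \circ \cdots \circ e_1$ with $e_i \in \E'$ and a commutative square $f \cdot x = y \cdot \JE e$, I would iterate across the generators, accumulating data $(\eta_i, s_i, x_i)$ with $\eta_i \colon v \to u_i$ in $\E$, $s_i \colon u_i \to v_i$ in $\bbS$ (where $v_i$ denotes the codomain of $e_i$), and $x_i \colon Ju_i \to X$, subject to the invariants $x = x_i \cdot \JE \eta_i$, $\JE(e_i \circ \cdots \circ e_1) = \JS s_i \cdot \JE \eta_i$, and $f \cdot x_i = y \cdot \JE(e_n \circ \cdots \circ e_{i+1}) \cdot \JS s_i$. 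To go from step $i-1$ to step $i$, one first uses the hypothesis on $e_i \in \E'$ and $s_{i-1}$ to write $\JE e_i \cdot \JS s_{i-1} = \JS \tau_i \cdot \JE \sigma_i$ with $\sigma_i \in \E'$, then applies the assumed lifting property of $f$ to $\sigma_i$ and the induced commutative square to obtain $\epsilon_i \in \E$, $\rho_i \in \bbS$, and $x_i$, and finally sets $\eta_i := \epsilon_i \circ \eta_{i-1}$ and $s_i := \tau_i \circ \rho_i$. At $i = n$, the triple $(\eta_n, s_n, x_n)$ is exactly the lifting required by Definition~\ref{def:generalized-open}.

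The path bisimulation claim follows by the same strategy. Forward closure is proved by the same iterative scheme, with the forward closure hypothesis on $R$ restricted to $\E'$ playing the role of the lifting property of $f$. Backward closure is more direct: given $e = e_n \circ \cdots \circ e_1$ in $\E$ with $e_i \in \E'$, applying the backward closure hypothesis successively along $e_n, e_{n-1}, \ldots, e_1$ produces the desired span $A \xleftarrow{a \cdot \JE e} Jw \xrightarrow{b \cdot \JE e} B$ in $R$. The main technical obstacle will be the careful bookkeeping of the iterative construction, in particular ensuring at each step that the accumulated $\bbS$-morphism can be pushed through the next generator; this is precisely what the compatibility hypothesis in the statement guarantees, via the sublemma.
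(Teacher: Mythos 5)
Your proposal is correct and follows essentially the same route as the paper: peel the generators off from the domain side one at a time, using the compatibility hypothesis to push the accumulated $\bbS$-morphism past the next generator of $\E'$ before applying the assumed lifting property (resp.\ forward closure) to the resulting generator; the paper only writes out the two-generator step and leaves the induction implicit, whereas you make the invariant explicit. One small remark: the preliminary sublemma extending compatibility from $\E'$ to all of $\E$ is never used in the iteration you describe --- only the hypothesis for single generators $e_i \in \E'$ against the accumulated $s_{i-1} \in \bbS$ is needed --- so it can be dropped.
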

 
 The first condition is satisfied when $\E$ is a free category and $\E'$ is its class of generators.
 The second condition is satisfied for e.g.\ $\E = \bbP$ and $\bbS = \lvert \bbP \rvert$.

\section{Open Maps for Weighted Systems}
\label{sec:open-weighted}

In this section, we will prove that weighted systems can be 
captured by this generalized open map theory for a large variety
of weights, including those needed to capture probabilistic systems.

\subsection{Category of Coalgebras for Weighted Systems}

In this section, we will consider weighted functors as follows.

\begin{definition}
Given a commutative monoid $(K, +, e)$, 
the K-weighted functor $\weighted{(K,+,e)}{\_}\colon\Set\to\Set$ is defined as follows
on sets and maps:
\begin{align*}
	\text{sets:}&& X 
		~~&\mapsto~~ \weighted{(K,+,e)}{X}=
			\set[\big]{\mu\colon X\to K \mid \mu^{-1}(K\setminus \{e\}) \text{ is finite}}\\
	\text{maps:}&& f\colon X\to Y 
		~&\mapsto~ \weighted{(K,+,e)}{f}(\mu) = 
			\big(y \in Y \mapsto \sum\set{\mu(x) \mid x\in X, f(x) = y}
        \big)
\end{align*}
\end{definition}
An element $\mu$ of $(K,+,e)^{(X)}$ is a finite distributions sending each
$x\in X$ to a weight in $K$. Whenever a map $f\colon X\to Y$ identifies elements $f(x_1) = f(x_2) = \cdots$, then
the functor action turns $\mu$ into a distribution on $Y$ by adding up the
weights $\mu(x_1) + \mu(x_2) + \cdots$ as elements of $X$ are sent to the same
element in $Y$. Since $\mu$ is finite and $K$ is commutative, this addition is
well-defined.

Given a commutative monoid $(K, +, e)$ and an alphabet $A$,
we want to consider weighted systems as 
coalgebras for the functor $\weighted{(K,+,e)}{A\times\_}$.
As described in Section~\ref{sec:pathCat}, we want to be able to talk about 
lax homomorphisms, so we need an order on 
$\weighted{(K,+,e)}{A\times\_}$  as in
Definition~\ref{def:partialOrder}.
For that, we need to assume an ordered commutative monoid
$(K,+,e,\sqsubseteq)$, that is, a monoid $(K, +, e)$ with a 
partial order $\sqsubseteq$ such that $+$ is monotone in both 
its arguments.

\begin{lemma}
\label{lem:orderMonoidFunctor}
Given an ordered commutative monoid $(K,+,e,\sqsubseteq)$, 
then for all sets $X$ and $Y$, the relation on
the hom-set
$\Set\big(X,\weighted{(K,+,e)}{A\times Y}\big)$ defined by
\[
	f_1 \sqsubseteq f_2 ~\Longleftrightarrow~ \forall x\in X,\, \forall y\in Y,\, \forall a \in A,\,
		f_1(x)(a,y) \sqsubseteq f_2(x)(a,y)
\]
is an order on $\weighted{(K,+,e)}{A\times\_}$.
\end{lemma}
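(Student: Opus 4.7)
\medskip

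The plan is to verify the two requirements in Definition~\ref{def:partialOrder}: first that $\sqsubseteq$ is a genuine partial order on each hom-set $\Set(X, \weighted{(K,+,e)}{A\times Y})$, and second that the collection of these orders is stable under pre- and post-composition by arbitrary Set-maps.

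The first step is immediate: reflexivity, antisymmetry, and transitivity of the pointwise relation all follow by evaluating at each triple $(x,a,y)$ and applying the corresponding property of $\sqsubseteq$ on $K$. This part requires no use of the monoid structure.

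Next I would turn to the compatibility condition. Given $f_1 \sqsubseteq f_2\colon X \to \weighted{(K,+,e)}{A\times Y}$, a map $g\colon X' \to X$, and a map $h\colon Y \to Y'$, the task is to show
\[
\weighted{(K,+,e)}{A\times h}\cdot f_1 \cdot g \;\sqsubseteq\; \weighted{(K,+,e)}{A\times h}\cdot f_2\cdot g.
\]
Pre-composition with $g$ is trivial, since evaluating the left- and right-hand sides at any $(x', a, y')$ just reduces to comparing the original $f_i$'s at $(g(x'), a, \cdot)$. Post-composition with the functor action is where the monoid structure enters: by the definition of $\weighted{(K,+,e)}{A\times h}$, evaluating at $(x', a, y')$ yields
\[
\sum\set{f_i(g(x'))(a,y) \mid y\in Y,\; h(y)=y'},
\]
a finite sum in $K$ (finiteness comes from the finite-support property built into $\weighted{(K,+,e)}{-}$).

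The main technical step, and the only genuine obstacle, is then to conclude that these two finite sums are related by $\sqsubseteq$. For this I would choose a single finite set $T \subseteq h^{-1}(y')$ containing the supports of both $f_1(g(x'))(a,\_)$ and $f_2(g(x'))(a,\_)$ restricted to $h^{-1}(y')$; outside $T$ both summands are $e$, so the two sums above equal the respective sums indexed by $T$. A straightforward induction on $|T|$, using the monotonicity of $+$ in each argument together with $f_1(g(x'))(a,y) \sqsubseteq f_2(g(x'))(a,y)$ for every $y\in T$, then gives the desired inequality. This establishes the compatibility condition, completing the verification that $\sqsubseteq$ is an order on $\weighted{(K,+,e)}{A\times \_}$.
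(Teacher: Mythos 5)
Your proposal is correct and follows essentially the same route as the paper: the partial-order axioms are checked pointwise, and the compatibility with composition reduces to comparing the finite sums produced by the functor action on $h$, which is settled by monotonicity of $+$. Your explicit choice of a common finite support set $T$ and the induction on $|T|$ merely spell out the step the paper compresses into ``we conclude by monotonicity of $+$'' (the paper instead treats the empty-fibre case $e \sqsubseteq e$ separately), so the two arguments are the same in substance.
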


So, we have a category $\FSim\left(1,\weighted{(K,+,e)}{A\times\_}\right)$ of 
pointed coalgebras and lax homomorphisms. The goal of this section
is to design a generalized open maps situation for which 
$(\E,\bbS)$-bisimilarity characterizes coalgebraic bisimilarity and more 
precisely for which $(\E,\bbS)$-openness characterizes 
proper homomorphisms.

In the course of the constructions and proofs, we will need additional 
assumptions that we list here.
\begin{definition}
We call an ordered commutative monoid $(K,+,e,\sqsubseteq)$ a \emph{rearrangement monoid}
if it satisfies the additional requirement that if $n,m\geq 1$ and
\[
	\sum\limits_{i=1}^n x_i  \sqsubseteq
		\sum\limits_{j=1}^m y_j,
\]
then there exists a family $(u_{i,j})_{1\leq i \leq n,1\leq j\leq m}$ such that
\[ \text{for all } j,\, \sum\limits_{i=1}^n u_{i,j} \sqsubseteq 
		y_j ~~\text{and}~~
		\text{for all } i,\, \sum\limits_{j=1}^m u_{i,j} = 
		x_i.
\]
In addition, we say that a rearrangement monoid is \emph{strict} 
if the condition above holds 
also when replacing $\sqsubseteq$ with $=$.
\end{definition}
The intuition is as follows. 
We have some weights arranged as $x_1, \ldots, x_n$.
We want to be able to decompose those weights into 
smaller weights, the $u_{i,j}$s, and by rearranging those 
small weights obtaining weights smaller than the $y_j$.
This condition states that this is possible when there is 
enough weight in total. 
The special case of strictness is called the \emph{row-column property} in \cite{klin09}.

\begin{lemma}
\label{lem:rearrangementMonoids}
For any subgroup $G$ of the real numbers $(\Real^n,+,-,0)$
such that
for all $x$, $y$ in $G$
$(\min(x_1,y_1), \ldots, \min(x_n,y_n)) \in G$,
the monoids
$(G,+,0,\leq)$ and $(G_{\geq 0},+,0,\leq)$, 
where $\leq$ is the usual order on $\Real^n$,
are strict rearrangement monoids.

For any lattice with bottom element 
$(L, \leq, \sqcup, \sqcap, \bot)$,
$(L,\sqcup, \bot, \leq)$ is a rearrangement monoid
if and only if $(L, \leq, \sqcup, \sqcap)$ is distributive.
Furthermore, in that case, it is always strict.

\end{lemma}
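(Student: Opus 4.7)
The plan is to handle the two claims of the lemma in turn, as they require quite different techniques. I will start with the lattice claim, which admits a short algebraic proof, and then turn to the subgroup claim, where a transportation-style construction is needed.

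For the lattice case, my approach to the ``if'' direction is to use the universal witness $u_{i,j} := x_i \sqcap y_j$. Distributivity yields $\bigsqcup_j u_{i,j} = x_i \sqcap \bigsqcup_j y_j = x_i$, where the last step uses $x_i \leq \bigsqcup_i x_i \leq \bigsqcup_j y_j$, and dually $\bigsqcup_i u_{i,j} = (\bigsqcup_i x_i) \sqcap y_j \leq y_j$. Strictness is automatic: under the hypothesis $\bigsqcup_i x_i = \bigsqcup_j y_j$, the second expression reduces to $y_j \sqcap y_j = y_j$. For the ``only if'' direction, I will extract distributivity from the rearrangement property. Given $x,y,z \in L$, the inequality $x \sqcap (y \sqcup z) \leq y \sqcup z$ is an instance of the premise with $x_1 := x \sqcap (y \sqcup z)$, $y_1 := y$, $y_2 := z$. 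The resulting $u_{1,1} \leq y$, $u_{1,2} \leq z$ with $u_{1,1} \sqcup u_{1,2} = x_1$ also satisfy $u_{1,j} \leq x_1 \leq x$, so $u_{1,1} \leq x \sqcap y$ and $u_{1,2} \leq x \sqcap z$, whence $x \sqcap (y \sqcup z) \leq (x \sqcap y) \sqcup (x \sqcap z)$. The reverse inequality holds in any lattice, so $L$ is distributive.

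For the subgroup case, my first step is to reduce the inequality instance to an equality instance. Given $\sum_i x_i \leq \sum_j y_j$ in $G$, the slack $\delta := \sum_j y_j - \sum_i x_i$ lies in $G$ by group closure and is non-negative by hypothesis, so $\delta \in G_{\geq 0}$. Appending a dummy row $x_{n+1} := \delta$ produces $\sum_i x_i = \sum_j y_j$; the entries $u_{n+1,j}$ of any equality solution then absorb the slack in each column. For the equality problem, I will iterate a coordinate-wise northwest-corner step: set $u_{1,1} := \min(x_1, y_1)$, which remains in $G$ by the min-closure hypothesis, replace $x_1$ and $y_1$ by their residuals (still in $G$ by additive closure, still coordinate-wise non-negative in the $G_{\geq 0}$ case because $\min(x_1, y_1) \leq x_1, y_1$), and recurse. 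Strictness is built in, since every step contributes exact row and column sums.

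The main obstacle will be showing that this iteration actually terminates with a valid plan in dimension $n > 1$. Over $\Real$, each step zeros one of $x_1$ or $y_1$ and reduces the problem size by one, so $n+m$ is a valid induction measure; over $\Real^n$, the coordinate-wise $\min$ may zero only some coordinates, and the na\"ive measure stalls. I will instead use the total count of non-zero coordinates across all remaining $x_i$ and $y_j$, which strictly decreases at each step because at least one coordinate of either $x_1 - u_{1,1}$ or $y_1 - u_{1,1}$ vanishes. The min-closure assumption is essential here: a purely coordinate-wise construction would produce a solution in $\Real^n$ but the entries need not lie in $G$, and min-closure is what keeps each $u_{i,j}$ inside $G$ throughout the recursion.
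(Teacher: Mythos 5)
Your lattice argument is correct and is essentially the paper's proof (witness $u_{i,j}=x_i\sqcap y_j$ for one direction and strictness; a one-row, two-column instance of the rearrangement property yielding $x\sqcap(y\sqcup z)\le(x\sqcap y)\sqcup(x\sqcap z)$ for the other). The gap is in the subgroup half, specifically in the termination of your recursion. You fix the pair $(x_1,y_1)$, put $u_{1,1}:=\min(x_1,y_1)$, and claim the total number of non-zero coordinates drops because ``at least one coordinate of either $x_1-u_{1,1}$ or $y_1-u_{1,1}$ vanishes.'' This is false when $x_1$ and $y_1$ have disjoint supports: with $G=\mathbb{Z}^2$, $x_1=(1,0)$, $x_2=(0,1)$, $y_1=(0,1)$, $y_2=(1,0)$, the northwest step gives $u_{1,1}=(0,0)$, leaves both residuals unchanged, and your measure stalls, so the recursion never progresses. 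The repair --- and what the paper actually does --- is to choose the pair adaptively: pick $i_0$ and a coordinate $k$ with $x_{i_0,k}>0$; since $\sum_i x_{i,k}\le\sum_j y_{j,k}$ there is some $j_0$ with $y_{j_0,k}>0$, and subtracting $\min(x_{i_0},y_{j_0})$ from that pair zeroes coordinate $k$ of at least one residual while never turning a zero coordinate into a non-zero one. With that selection your induction measure does decrease and the $G_{\ge 0}$ case goes through exactly as in the paper.

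The second problem is that the full group $G$ is not actually handled. Your recursion's invariants (residuals staying non-negative, $\min$ lying below both arguments and thus achieving a zero) are only available over $G_{\ge 0}$; over $G$ the greedy $\min$ can produce wrong row sums (e.g.\ $x_1=1$, $x_2=-1$, $y_1=-1$, $y_2=1$ in $\mathbb{Z}$). Moreover, your reduction of the inequality instance to an equality instance via a dummy row $x_{n+1}:=\delta$ only gives $\sum_{i\le n}u_{i,j}=y_j-u_{n+1,j}\le y_j$ when $u_{n+1,j}\ge 0$, which is not guaranteed once entries may be negative. The paper deals with $G$ by splitting each element into positive and negative parts $x=x^+-x^-$ with $x^-=-(0\sqcap x)$, solving an enlarged problem in $G_{\ge 0}$, and recombining; alternatively the group case admits a direct explicit witness ($u_{i,1}=x_i$ for $i\ge 2$, $u_{1,j}=y_j$ for $j\ge 2$, $u_{1,1}=x_1-\sum_{j\ge 2}y_j$, all other entries $0$), but some such argument is needed and is absent from your proposal.
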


Another property is a form of positivity: we say that an ordered monoid is 
\emph{positively ordered}
if $e$ is the bottom element for $\sqsubseteq$, that is, for all $k \in K$, $e \sqsubseteq k$.

\begin{example}
The positive real line $(\Realp,+,0,\leq)$ is a positively ordered strict rearrangement monoid
 and it is necessary to define 
probabilistic systems. Another example is the monoid of natural numbers $(\Nat,+,0,\leq)$, which 
defines the bag functor. 
Finally, any distributive lattice with bottom element $(L,\sqcup,\bot,\leq)$,
typically powerset lattices $(\pow(X),\cup,\varnothing,\subseteq)$, is too.
On the contrary, $(\Real,+,0,\leq)$ and $(\mathbb{Z},+,0,\leq)$ are 
strict rearrangement monoids but are
not positively ordered. 
Conversely $(\Nat_{\geq 1},\times,1,\leq)$ is positively ordered but not a rearrangement monoid.
Indeed, it is impossible to rearrange the inequality $2\times 5 \leq 3\times 4$.
\end{example}

\subsection{Generalized Open Maps Situation for Weighted Systems}
\label{sec:genOpenWeighted}
Let $(K,+,e,\sqsubseteq)$ be a commutative ordered monoid.
Elements of $V_K$ are 
\begin{itemize}
	\item either words on $A\times (K\setminus\{e\})$, $w = (a_1,k_1), \ldots, (a_n,k_n)$,
	\item or triples $(w_1,b,w_2)$ of a word $w_1$ on $A\times (K\setminus\{e\})$, 
		a letter $b \in A$,
		and a non-empty word $w_2$ on $(K\setminus\{e\})$.
\end{itemize}
\noindent
The function $J_K$ maps
\begin{itemize}
	\item a word $w = (a_1,k_1), \ldots, (a_n,k_n)$ to the system
\[
	Jw =
   \weightedsystem[0]{
   \begin{scope}[x=15mm,y=5mm]
    \node[state] (s1) at (1,0) {1};
    \node[state] (s2) at (2,0) {\cdots};
    \node[state] (sn) at (3,0) {n};
    \draw[transition] (r) to node {\scriptstyle (a_1,k_1)} (s1);
    \draw[transition] (s1) to node {\scriptstyle (a_2,k_2)} (s2);
    \draw[transition] (s2) to node {\scriptstyle (a_n,k_n)} (sn);
   \end{scope}
   }
\]
that is, to the coalgebra $Jw\colon \{0, \ldots, n\}\to\weighted{K}{A\times\{0,\ldots,n\}}$ such that 
if $b = a_{i+1}$ and $j = i+1$ then $Jw(i)(b,j) = k_{i+1}$, else $= e$.
	\item a triple $(w_1,b,w_2)$ with $w_1 = (a_1,k_1), \ldots, (a_n,k_n)$
	and $w_2 = l_1, \ldots, l_m$ is mapped to the system
   \[J(w_1,b,w_2) = 
   \weightedsystem[0]{
   \begin{scope}[x=15mm,y=5mm]
    \node[state] (s1) at (1,0) {1};
    \node[state] (s2) at (2,0) {\cdots};
    \node[state] (sn) at (3,0) {n};
    \node[state] (first leaf) at (4.5,1) {(n+1,1)};
    \node[state] (last leaf) at (4.5,-1) {(n+1,m)};
    \draw[transition] (r) to node {\scriptstyle (a_1,k_1)} (s1);
    \draw[transition] (s1) to node {\scriptstyle (a_2,k_2)} (s2);
    \draw[transition] (s2) to node {\scriptstyle (a_n,k_n)} (sn);
    \draw[transition] (sn) to node[sloped,above] {\scriptstyle (b,l_1)} (first leaf);
    \draw[transition] (sn) to node[sloped,below] {\scriptstyle (b,l_m)} (last leaf);
    \node[rotate=90] at ($ (first leaf) !.5! (last leaf) $) {\ensuremath{\cdots}};
   \end{scope}
   }
  \]
  \jdnote{}
that is, $J(w_1,b,w_2)(n)(n+1,i) = (b,l_i)$.
\end{itemize}
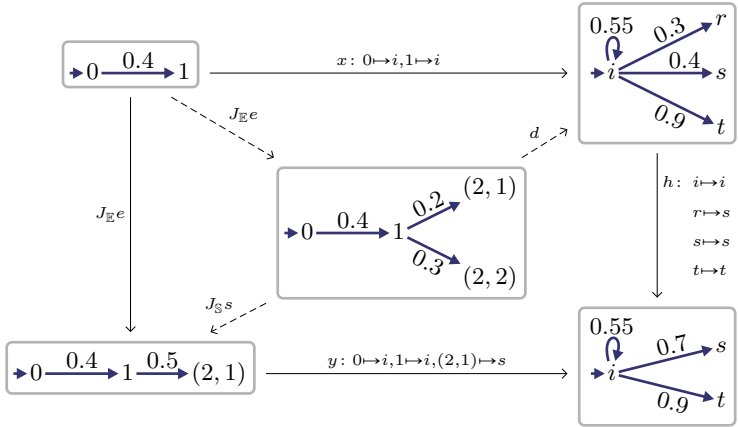
\begin{figure}[t]\centering
  \begin{tikzcd}[row sep=-4pt,column sep=0pt]
    \weightedsystem[0]{
      \node[state] (s) at (1,0) {1};
      \draw[transition] (r) to node{0.4} (s);
    }
    \arrow{rr}{x\colon 0\mapsto i, 1\mapsto i}
    \arrow{dd}[swap]{\JE e}
    \arrow[dashed]{dr}{\JE e}
    &&[4mm]
    \weightedsystem[i]{
      \node[state] (state r) at (1.2,0.6) {r};
      \node[state] (state s) at (1.2,0.0) {s};
      \node[state] (state t) at (1.2,-0.6) {t};
      \draw[transition] (r) to[loop above] node {0.55} (r);
      \draw[transition] (r) to node[pos=0.6,sloped,above] {0.3} (state r);
      \draw[transition] (r) to node[pos=0.7] {0.4} (state s);
      \draw[transition] (r) to node[pos=0.6,sloped,below] {0.9} (state t);
    }
    \arrow{dd}{h\colon\begin{array}[t]{l}
      i\mapsto i\\
      r\mapsto s\\
      s\mapsto s\\
      t\mapsto t
      \\
    \end{array}}
    \\[2mm]
    & \weightedsystem[0]{
      \node[state] (s) at (1,0) {1};
      \node[state] (t1) at (2,0.5) {(2,1)};
      \node[state] (t2) at (2,-0.5) {(2,2)};
      \draw[transition] (r) to node {0.4} (s);
      \draw[transition] (s) to node[sloped,above] {0.2} (t1);
      \draw[transition] (s) to node[sloped,below] {0.3} (t2);
    }
    \arrow[dashed]{dl}[swap]{\JS s}
    \arrow[dashed]{ur}{d}
    \\
    \weightedsystem[0]{
      \node[state] (s) at (1,0) {1};
      \node[state] (t) at (2,0) {(2,1)};
      \draw[transition] (r) to node{0.4} (s);
      \draw[transition] (s) to node{0.5} (t);
    }
    \arrow{rr}{y\colon 0\mapsto i, 1\mapsto i, (2,1)\mapsto s}
    &&
    \weightedsystem[i]{
      \node[state] (state s) at (1.2,0.3) {s};
      \node[state] (state t) at (1.2,-0.3) {t};
      \draw[transition] (r) to[loop above] node {0.55} (r);
      \draw[transition] (r) to node[pos=0.6,sloped,above] {0.7} (state s);
      \draw[transition] (r) to node[pos=0.6,sloped,below] {0.9} (state t);
    }
  \end{tikzcd}%
  \caption{Example of a lifting of a path extension in $\Realp$-weighted systems and for a singleton label alphabet $|A|=1$, thus omitting action labels.}
  \label{ex:weightedLift}
\end{figure}
  
The category $\E_K$ is defined as follows. For every $w_1$, $b$, and $w_2$, there is a unique 
edge $e$ from $w_1$ to $(w_1, b, w_2)$.
The functor then maps this edge $e$ to $\JE e$, the obvious injection.

The category $\bbS_K$ has two types of morphisms:
\begin{itemize}
	\item identities on words $w_1$,
	\item morphisms from $(w_1, b, w_2')$ to $(w_1, b, w_2)$, with $w_2' = l_1',\ldots,l_{m'}'$ and 
	$m \leq m'$, which are given by surjective monotone functions 
	$s\colon\{1,\ldots,m'\}\to\{1,\ldots,m\}$ such that for all $i \leq m$, 
	$
	l_i = \sum_{\{j \mid s(j) = i\}} l_j'.
	$
\end{itemize}
The functor $\JS$ then maps $s$ of the second type to the proper homomorphism 
$\JS s$ which maps $i$ to $i$ and $(n+1,j)$ to $(n+1,s(j))$.

As a piece of notation, for a morphism $x\colon Jw_1\to X$, with $w_1$ of length $n$
we denote $x(n) \in X$ by $\text{end}(x)$. We then say that a state $p$ of $X$ is 
reachable if there is a morphism of type $x\colon Jw_1\to X$ with 
$\text{end}(x) = p$. By extension, we say that $X$ is reachable if all its states are reachable.

\subsection{Equivalence between Open Maps and Proper Homomorphisms}
An example of an $(\E,\bbS)$-open map $h$ is provided in
\autoref{ex:weightedLift}, together with a path extension that is lifted. Like
it is often the case in the non-deterministic systems, the lifting map $d$ is not unique. Hence, only
existence (and no uniqueness) is required in the lifting property.
Since $h$ is a proper homomorphism, it provides a lifting for all extensions, as we show in general:
\begin{theorem}
\label{thm:weightedOpen}
Assume a lax homomorphism $f\colon X\to Y$.
If $f$ is $(\E_K,\bbS_K)$-open, $X$ is reachable, and $K$ is positively ordered, then 
$f$ is a proper homomorphism. Conversely, 
if $f$ is a proper homomorphism and $K$ is a rearrangement monoid, 
then $f$ is $(\E_K,\bbS_K)$-open.
In particular, if $K$ is a positively ordered rearrangement monoid, 
two weighted systems $X$ and $Y$ are $(\E_K,\bbS_K)$-bisimilar
if and only if they are coalgebraically bisimilar.
\end{theorem}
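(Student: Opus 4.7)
My plan is to prove three implications in turn: (A) proper $+$ rearrangement $\Rightarrow$ open, (B) open $+$ reachable $+$ positively ordered $\Rightarrow$ proper, and (C) the bisimilarity equivalence by combining (A) and (B). By Proposition~\ref{prop:generators}, checking $(\E_K,\bbS_K)$-openness reduces to the generating edges $e\colon w_1 \to (w_1, b, w_2)$, since no two such edges compose in $\E_K$, and the compatibility with $\bbS_K$ is vacuous as the codomain of any non-identity $\JS s$ is branched, not a linear word.

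For (A), take a commuting square with $x\colon Jw_1 \to X$ and $y\colon J(w_1, b, w_2) \to Y$, and write $p = \runend(x)$, $q_j = y(n+1, j)$, $l_j$ for the $j$-th weight of $w_2$, $I_q = \set{j : q_j = q}$, and $J_q = \set{p' \in X : f(p') = q}$. Lax-ness of $y$ at state $n$ combined with properness of $f$ yields $\sum_{j \in I_q} l_j \sqsubseteq \sum_{p' \in J_q} \xi(p)(b, p')$ for every $q$. Applying the rearrangement property per $q$ supplies $(u^q_{j,p'})_{j \in I_q, p' \in J_q}$ with $\sum_j u^q_{j,p'} \sqsubseteq \xi(p)(b, p')$ and $\sum_{p'} u^q_{j,p'} = l_j$. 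I then take $u = (w_1, b, w_2')$ with $w_2'$ listing the non-zero $u^q_{j,p'}$ in lexicographic order on $(j,p')$, $s\colon u \to w$ in $\bbS_K$ projecting each leaf $(j, p')$ to $j$, and $x'\colon Ju \to X$ sending $(n+1, (j, p'))$ to $p'$ and otherwise agreeing with $x$. The rearrangement equalities translate directly into the weight-summing property of $s$, and the rearrangement inequalities into the lax condition for $x'$ at state $n$; the three required commutations are immediate.

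For (B), fix $p \in X$ and $a \in A$; lax-ness gives $(Ff\cdot\xi)(p)(a, q) \sqsubseteq \zeta(f(p))(a, q)$, so only the reverse inequality pointwise in $q$ is missing. If $\zeta(f(p))(a, q) = e$ for all $q$, positive ordering forces both sides to equal $e$. Otherwise, using reachability, pick $x\colon Jw_1 \to X$ with $\runend(x) = p$, let $S = \set{q : \zeta(f(p))(a,q) \neq e}$ (finite), and build $y\colon J(w_1, a, w_2) \to Y$ extending $f\cdot x$, where $w_2$ enumerates the weights $\zeta(f(p))(a, q)$ for $q \in S$ and each leaf of $J(w_1, a, w_2)$ is sent by $y$ to its corresponding $q$. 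Openness of $f$ applied to this square produces a merging $s\colon u \to w$ with $u = (w_1, a, w_2')$ and a lifting $x'\colon Ju \to X$; then for each $q \in S$ with index $i_q$,
\begin{align*}
\zeta(f(p))(a, q)
&= \sum\nolimits_{j : s(j) = i_q} l'_j
= \sum\nolimits_{p' : f(p') = q} \sum\nolimits_{j : x'(n+1, j) = p'} l'_j\\
&\sqsubseteq \sum\nolimits_{p' : f(p') = q} \xi(p)(a, p') = (Ff\cdot\xi)(p)(a, q),
\end{align*}
using the weight-summing property of $s$, regrouping by $x'(n+1, j)$, and then the lax condition of $x'$ at state $n$ with monotonicity of $+$.

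For (C), ``$\Leftarrow$'' follows by applying (A) to both legs of a span of proper homomorphisms. For ``$\Rightarrow$'', the tip $Z$ of a span of open maps need not be reachable, so I would restrict to its reachable sub-coalgebra $Z^r$; paths into $Z$ from the initial state already land in $Z^r$, and every lifting $x'$ supplied by openness of a leg $Z \to X$ has image in $Z^r$ because its domain is itself a reachable tree, so the restricted legs remain $(\E_K, \bbS_K)$-open and (B) turns them into proper homomorphisms. The main obstacle is the bookkeeping in (A): aligning the $q$-indexed rearrangement families so that the leaves of $u$ simultaneously realize the summing condition for $s$ and the pointwise lax condition for $x'$ at state $n$, together with verifying that $x'$ is a genuine lax homomorphism at the newly added leaves --- the latter reduces to $e \sqsubseteq \xi_X(p')$ pointwise at reachable states, which follows from lax-ness of any path morphism ending at $p'$.
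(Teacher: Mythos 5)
Your proposal is correct and follows essentially the same route as the paper's proof: reachability plus a one-step path extension and openness to obtain the missing inequality $\zeta(f(p))(a,q)\sqsubseteq\sum_{f(p')=q}\xi(p)(a,p')$ (you batch all successors $q$ into one multi-leaf extension where the paper treats one target state at a time with a single-leaf extension), the rearrangement property to manufacture $u$, $s$, and the lifting $x'$ in the converse direction, and restriction of the span's tip to its reachable part for the bisimilarity statement (you argue directly that the restricted legs stay open, where the paper shows the inclusion of the reachable part is proper, hence open, and composes). The bookkeeping issues you flag at the end (finiteness of the index sets fed to the rearrangement property, lax-ness of $x'$ at the new leaves) are handled no more explicitly in the paper's own proof, so nothing essential is missing.
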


For an endofunctor on $\Set$, to prove that coalgebraic bisimilarity is an 
equivalence relation it is enough
to show that the functor preserves weak-pullbacks. In the case of the weighted
functor, this is given by strictness (see also \cite{klin09}):

\begin{corollary}
\label{cor:weightedTransitivity}
If $K$ is a positively ordered strict rearrangement monoid, 
then $(\E_K,\bbS_K)$-bisimilarity is an equivalence relation.
\end{corollary}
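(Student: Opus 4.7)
The plan is to reduce the corollary to a standard fact about coalgebraic bisimilarity on $\Set$, rather than trying to check pullbacks directly in $\FSim(1, \weighted{K}{A \times \arg})$ so as to invoke Theorem~\ref{thm:equivalenceRelation}. Since $K$ is a positively ordered rearrangement monoid (strictness is not needed here), Theorem~\ref{thm:weightedOpen} already tells us that $(\E_K, \bbS_K)$-bisimilarity coincides with coalgebraic bisimilarity, i.e., with the existence of a span of \emph{proper} homomorphisms. So it suffices to show that coalgebraic bisimilarity is reflexive, symmetric, and transitive.

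Reflexivity and symmetry are immediate: identities are proper homomorphisms, and spans can be flipped. For transitivity, consider two spans of proper homomorphisms $X \xleftarrow{f_1} Z_1 \xrightarrow{g_1} Y$ and $Y \xleftarrow{f_2} Z_2 \xrightarrow{g_2} W$. The plan is to form the pullback $P$ of $g_1$ along $f_2$ in $\Set$ with projections $p_1, p_2$, then lift this pullback to a coalgebra $P \to F_K P$ (with $F_K := \weighted{K}{A\times \arg}$) making $p_1, p_2$ proper homomorphisms. Composing gives the span $X \xleftarrow{f_1 p_1} P \xrightarrow{g_2 p_2} W$. Regarding the initial states: proper homomorphisms preserve initial states, so the initial states of $Z_1$ and $Z_2$ both map to the unique initial state of $Y$, hence their pair lies in $P$ and provides it with a canonical initial state.

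The only substantive step is therefore producing the coalgebra structure on $P$, and this is exactly what weak-pullback preservation of $F_K$ provides. For the pure weighted functor $\weighted{K}{\arg}$ on $\Set$, preservation of weak pullbacks is equivalent to the row-column property on $K$ (cf.\ \cite{klin09}), i.e., to strictness of the rearrangement monoid; this is precisely the extra assumption in the corollary beyond Theorem~\ref{thm:weightedOpen}. Since $A \times \arg$ is a right adjoint on $\Set$ and hence preserves (weak) pullbacks, the composite $F_K = \weighted{K}{A \times \arg}$ inherits weak-pullback preservation.

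The main (and essentially only) obstacle is the weak-pullback preservation argument, but this is exactly the row-column property packaged in the strictness hypothesis, so the proof reduces to invoking \cite{klin09} and standard coalgebraic plumbing. No further technical machinery should be required.
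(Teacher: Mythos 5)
Your proposal follows essentially the same route as the paper: \autoref{thm:weightedOpen} reduces $(\E_K,\bbS_K)$-bisimilarity to coalgebraic bisimilarity, and the latter is an equivalence relation because strictness (the row--column property) gives weak-pullback preservation of the weighted functor, which is exactly what the paper's proof establishes (the paper proves the weak-pullback preservation for $\weighted{K}{A\times\arg}$ directly and in detail, constructing the splitting distribution $\theta$, rather than delegating it to \cite{klin09} and factoring through $\weighted{K}{\arg}\circ(A\times\arg)$, but that is only a difference in how much is spelled out). One justification in your argument is wrong, though the conclusion it supports is true: $A\times\arg$ is a \emph{left} adjoint on $\Set$ (its right adjoint is $(\arg)^A$), not a right adjoint --- it does not even preserve the terminal object when $\lvert A\rvert\neq 1$. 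It nevertheless preserves pullbacks (hence weak pullbacks) because it is naturally isomorphic to the coproduct $\coprod_{a\in A}\id$ and connected limits commute with coproducts in $\Set$, or simply by direct computation; so the factorization argument goes through once you replace that justification.
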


\subsection{About Sub-distribution Functor}

Until now, we have not dealt with probabilistic systems, that is, 
coalgebras for the sub-distribution functor $\Subdist$.
Those coalgebras are particular cases of coalgebras for the weighted functor
$X\mapsto\weighted{(\Real_+,+)}{X}$. 
We want to show in this section that it is equivalent to consider 
coalgebras for $X\mapsto\Subdist(A\times X)$ as coalgebras for 
$X\mapsto\weighted{(\Real_+,+)}{A\times X}$, in the sense that, two coalgebras
for the former are bisimilar if and only if they are bisimilar when seen as coalgebras 
for the latter.
The main ingredient is the following remark.

\begin{lemma}
\label{lem:from-R-to-D}
Assume a pointed coalgebra 
$1 \xrightarrow{~i~} X \xrightarrow{~c~} \Subdist(A\times X)$
and assume given a lax (resp. proper) homomorphism $f$ from
$1 \xrightarrow{~j~} Y \xrightarrow{~d~} \weighted{(\Real_+,+)}{A\times Y}$ to 
$1 \xrightarrow{~i~} X \xrightarrow{~c~} \Subdist(A\times X) \subseteq \weighted{(\Real_+,+)}{A\times X}$.
Then $Y \xrightarrow{~d~} \Subdist(A\times Y)$ and $f$ is a lax (resp. proper)
homomorphism from $1 \xrightarrow{~j~} Y \xrightarrow{~d~} \Subdist(A\times Y)$
to $1 \xrightarrow{~i~} X \xrightarrow{~c~} \Subdist(A\times X)$.
\end{lemma}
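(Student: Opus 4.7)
The claim has two parts: (i) that the coalgebra structure $d$ on $Y$ actually corestricts to $\Subdist(A\times Y)$, and (ii) that $f$, viewed with this corestricted codomain, remains a lax (resp.\ proper) homomorphism. Part (ii) will be essentially formal once (i) is established, since both $\Subdist$ and $\weighted{(\Realp,+)}{\_}$ agree on morphisms (the underlying $\Set$-function is the same finite-sum construction), and the order $\sqsubseteq$ on $\Subdist$ is inherited pointwise from the weighted functor. So the only real content is (i): showing that for every $y\in Y$, the total mass $\sum_{(a,y')\in A\times Y} d(y)(a,y')$ is at most $1$.

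For this, I would exploit laxness of $f$ together with the fact that the functor action of $\weighted{(\Realp,+)}{A\times \_}$ preserves total mass. Concretely, for each $y\in Y$, laxness gives
\[
  \weighted{(\Realp,+)}{A\times f}\bigl(d(y)\bigr) \;\sqsubseteq\; c\bigl(f(y)\bigr)
  \quad\text{in } \weighted{(\Realp,+)}{A\times X},
\]
i.e.\ a pointwise inequality of nonnegative reals. Summing over $A\times X$ (a well-defined finite sum, since both sides have finite support) and using that $c(f(y))\in\Subdist(A\times X)$ has total mass $\leq 1$, we obtain
\[
  \sum_{(a,x)\in A\times X}\weighted{(\Realp,+)}{A\times f}\bigl(d(y)\bigr)(a,x)\;\leq\;1.
\]
The key computational observation, which I would spell out next, is that the functor action merely regroups and adds weights along the fibres of $A\times f$, so this sum equals $\sum_{(a,y')\in A\times Y} d(y)(a,y')$. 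Hence $d(y)\in\Subdist(A\times Y)$, establishing (i).

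For part (ii), having factored $d$ through $\Subdist(A\times Y)$, I would observe that the required square in the subdistribution setting is literally the same square as in the weighted setting (its composites $\Subdist(A\times f)\cdot d$ and $c\cdot f$ are the same underlying set maps as $\weighted{(\Realp,+)}{A\times f}\cdot d$ and $c\cdot f$), and the initial-point condition $f\cdot j = i$ is unchanged. The order $\sqsubseteq$ on $\Set(Y,\Subdist(A\times X))$ used for the $\Subdist$-coalgebra category agrees with the restriction of the order on $\Set(Y,\weighted{(\Realp,+)}{A\times X})$, so laxness (resp.\ strict commutativity) transfers verbatim.

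The only mildly delicate step is the sum-preservation identity for the functor action, and more specifically, checking that interchange of the two finite sums is legitimate; this is straightforward because the support of $d(y)$ is finite and $A\times f$ partitions it into fibres, but I would write it out carefully to make the argument airtight. Everything else is bookkeeping.
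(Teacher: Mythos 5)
Your proof is correct and follows essentially the same route as the paper: both reduce the claim to showing that the total mass of $d(y)$ is bounded by $1$, by decomposing the sum over $A\times Y$ into fibres of $A\times f$, applying the pointwise laxness inequality on each fibre, and then using that $c(f(y))$ has total mass at most $1$. The paper leaves part (ii) implicit, exactly as you observe it is formal.
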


Remark that this property is not true for the proper distribution functor $\Dist$.
This suggests that we can define a generalized open maps situation 
$\E_{\Dist}, \bbS_{\Dist}$
for coalgebras for the functor $X\mapsto\Subdist(A\times X)$ by 
considering $\E_{(\Real_+,+)}, \bbS_{(\Real_+,+)}$
as defined in Section~\ref{sec:genOpenWeighted}, 
and restricting it to those $v$ such that 
$Jv$ is a coalgebra for $X\mapsto\Subdist(A\times X)$.

\begin{corollary}
\label{cor:D-open-iff-R-open}
A lax homomorphism from $1 \xrightarrow{~j~} Y \xrightarrow{~d~} \Subdist(A\times Y)$
to $1 \xrightarrow{~i~} X \xrightarrow{~c~} \Subdist(A\times X)$ 
is $(\E_{\Dist}, \bbS_{\Dist})$-open
if and only if it is $(\E_{(\Real_+,+)}, \bbS_{(\Real_+,+)})$-open.
Furthermore, two $\Subdist(A\times \_)$-coalgebras are $(\E_{\Dist}, \bbS_{\Dist})$-bisimilar 
if and only 
if they are $(\E_{(\Real_+,+)}, \bbS_{(\Real_+,+)})$-bisimilar.
\end{corollary}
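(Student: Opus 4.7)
The plan is to exploit the definitional inclusion $V_{\Dist} \subseteq V_{(\Real_+,+)}$ (so that $\E_{\Dist}, \bbS_{\Dist}$ are the restrictions of $\E_{(\Real_+,+)}, \bbS_{(\Real_+,+)}$ to $V_{\Dist}$) together with Lemma~\ref{lem:from-R-to-D}, which lets a lax homomorphism with sub-probabilistic codomain be moved between the ambient categories $\FSim(1, \Subdist(A\times\arg))$ and $\FSim(1, \weighted{(\Real_+,+)}{A\times\arg})$ without changing the underlying map.

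For the openness equivalence, I would first handle ``$\Leftarrow$''. Given a lifting square in $\E_{\Dist}$, re-read as a square in $\E_{(\Real_+,+)}$, openness there supplies a $u \in V_{(\Real_+,+)}$ and a merging morphism $s\colon u \to w$ in $\bbS_{(\Real_+,+)}$. The defining summing condition $l_i = \sum_{j\colon s(j)=i} l_j'$ for $\bbS_K$ shows that $u$ has the same total outgoing weight at the branching state as $w$, while its prefix coincides with that of $v$; hence $w \in V_{\Dist}$ forces $u \in V_{\Dist}$, so the lifting is internal to $(\E_{\Dist}, \bbS_{\Dist})$. For ``$\Rightarrow$'', start with a square in $\E_{(\Real_+,+)}$ whose $f$ is a lax homomorphism between $\Subdist$-coalgebras; Lemma~\ref{lem:from-R-to-D} applied to the lax homomorphisms $x\colon Jv \to X$ and $y\colon Jw \to Y$ forces $v, w \in V_{\Dist}$, so the square actually lives in $\E_{\Dist}$, and $(\E_{\Dist}, \bbS_{\Dist})$-openness supplies a lifting that also witnesses $(\E_{(\Real_+,+)}, \bbS_{(\Real_+,+)})$-openness via $V_{\Dist} \subseteq V_{(\Real_+,+)}$.

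The bisimilarity statement then follows leg-by-leg from the openness equivalence, with one more use of Lemma~\ref{lem:from-R-to-D} at the apex. A span of $(\E_{\Dist}, \bbS_{\Dist})$-open maps in $\FSim(1, \Subdist(A\times\arg))$ is also a span in $\FSim(1, \weighted{(\Real_+,+)}{A\times\arg})$ via the inclusion of categories, with its legs $(\E_{(\Real_+,+)}, \bbS_{(\Real_+,+)})$-open by ``$\Rightarrow$''; conversely, a span of $(\E_{(\Real_+,+)}, \bbS_{(\Real_+,+)})$-open maps $X \leftarrow C \rightarrow Y$ with sub-probabilistic $X, Y$ becomes a span in $\FSim(1, \Subdist(A\times\arg))$ by applying the lemma to each leg (which promotes $C$ to a $\Subdist$-coalgebra and its legs to lax homomorphisms in the $\Subdist$-setting), after which ``$\Leftarrow$'' upgrades the legs to $(\E_{\Dist}, \bbS_{\Dist})$-open maps. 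The only place the argument could plausibly break is the weight-preservation property of $\bbS_K$ merging morphisms, which is precisely what keeps the lifted object inside $V_{\Dist}$; everything else is routine bookkeeping around Lemma~\ref{lem:from-R-to-D}.
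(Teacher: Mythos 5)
Your proof is correct and follows essentially the same route as the paper: both exploit that $\E_{\Dist},\bbS_{\Dist}$ are the restrictions of $\E_{(\Real_+,+)},\bbS_{(\Real_+,+)}$ to the sub-probabilistic path objects and use Lemma~\ref{lem:from-R-to-D} to transport squares, liftings, and the apex of the span between the two coalgebra categories. The only (harmless) deviation is in showing that the lifted object $u$ lands in $V_{\Dist}$: you argue via the weight-preservation condition $l_i=\sum_{s(j)=i}l_j'$ built into $\bbS_K$-morphisms, whereas the paper instead applies Lemma~\ref{lem:from-R-to-D} to the lifted run $y'\colon Ju\to Y$; both arguments are valid.
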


Finally, the main result of this section:

\begin{theorem}
Let $f\colon X\to Y$ be a lax homomorphism between 
$\Subdist(A\times \_)$-coalgebras $(X,c,i)$ and $(Y,d,j)$.
If $(X,c,i)$ is reachable and $f$ is $(\E_{\Dist}, \bbS_{\Dist})$-open, then 
$f$ is a proper homomorphism.
Conversely, if $f$ is a proper homomorphism, then it is $(\E_{\Dist}, \bbS_{\Dist})$-open.
Moreover, two $\Subdist(A\times \_)$-coalgebras $(X,c,i)$ and $(Y,d,j)$ are
$(\E_{\Dist}, \bbS_{\Dist})$-bisimilar if and only if they are coalgebraically bisimilar.
\end{theorem}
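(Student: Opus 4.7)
The plan is to show that all three claims reduce cleanly to the theorem for general $K$-weighted systems (\autoref{thm:weightedOpen}) via the two bridging results developed just above: \autoref{lem:from-R-to-D} (which moves between $\Subdist$- and $\weighted{(\Real_+,+)}{\_}$-coalgebras) and \autoref{cor:D-open-iff-R-open} (which identifies the two notions of openness). The key observation is that $(\Real_+,+,0,\leq)$ is a positively ordered strict rearrangement monoid by \autoref{lem:rearrangementMonoids}, so the hypotheses of \autoref{thm:weightedOpen} and \autoref{cor:weightedTransitivity} are available for $K=(\Real_+,+,0,\leq)$.

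For the first implication, assume $X$ is reachable and $f$ is $(\E_{\Dist},\bbS_{\Dist})$-open. View $(X,c,i)$ and $(Y,d,j)$ as $(\Real_+,+,0,\leq)$-weighted coalgebras via the pointwise inclusion $\Subdist(A\times Z)\subseteq \weighted{(\Real_+,+)}{A\times Z}$; by \autoref{cor:D-open-iff-R-open}, $f$ is then $(\E_{(\Real_+,+)},\bbS_{(\Real_+,+)})$-open. Applying \autoref{thm:weightedOpen} shows that $f$ is a proper homomorphism of weighted coalgebras, and \autoref{lem:from-R-to-D} pushes this fact back into the category of $\Subdist$-coalgebras, finishing this direction. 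The converse runs in the opposite order: a proper $\Subdist$-homomorphism $f$ is, via the same inclusion, trivially a proper weighted-coalgebra homomorphism, hence $(\E_{(\Real_+,+)},\bbS_{(\Real_+,+)})$-open by \autoref{thm:weightedOpen}, hence $(\E_{\Dist},\bbS_{\Dist})$-open by \autoref{cor:D-open-iff-R-open}.

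For the bisimilarity equivalence, one direction is immediate: any span of proper homomorphisms $X\xleftarrow{f} Z\xrightarrow{g} Y$ becomes a span of $(\E_{\Dist},\bbS_{\Dist})$-open maps by the converse implication just established. For the other direction, given a span of $(\E_{\Dist},\bbS_{\Dist})$-open maps $X\xleftarrow{f} Z\xrightarrow{g} Y$, replace $Z$ by its reachable sub-coalgebra $Z'$ (which exists because the support of $c_Z(z)$ is reachable whenever $z$ is, and because the initial point of $Z$ lies in $Z'$); since any morphism $Jv\to Z$ starting from the initial point factors through $Z'$, the restrictions $f|_{Z'}$ and $g|_{Z'}$ remain $(\E_{\Dist},\bbS_{\Dist})$-open, and now the first implication of this theorem applies to turn both legs into proper homomorphisms, giving coalgebraic bisimilarity.

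I expect the main obstacle to be the reachability step in the last paragraph, namely verifying that the reachable subset $Z'\subseteq Z$ is closed under $c_Z$ (so inherits a coalgebra structure), and that openness is preserved under this restriction. Both facts are standard once unpacked — the former because the support of a distribution emanating from a reachable state consists of reachable states, and the latter because any lifting produced by $(\E_{\Dist},\bbS_{\Dist})$-openness of $f$ necessarily lands in $Z'$ by the initial-point preservation built into $\FSim(1,\Subdist(A\times\_))$. All remaining arithmetic and diagrammatic work is already packaged inside \autoref{thm:weightedOpen}, so no further hands-on calculation with sub-probability distributions is needed.
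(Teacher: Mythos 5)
Your proposal is correct and follows exactly the route the paper intends: the theorem is a direct corollary of \autoref{cor:D-open-iff-R-open}, \autoref{lem:from-R-to-D}, and \autoref{thm:weightedOpen} applied to the positively ordered strict rearrangement monoid $(\Realp,+,0,\leq)$, which is why the paper gives no separate proof in the appendix. Your handling of the reachable restriction of the span tip is a minor variant of what the paper already does inside the proof of \autoref{thm:weightedOpen} (there the inclusion of the reachable part is shown to be proper, hence open, and the legs are composed with it), so nothing essential differs.
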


\section{Open Maps for Branching Bisimilarity}
\label{sec:weak-bisimulations}

In this section, we present a new way of modeling branching and weak 
bisimulations using our generalized framework of open maps. 
Using this additional flexibility, we do not need to rely on weak 
morphisms anymore, but on a slight modification of the morphism 
described in Definition~\ref{def:lts}.
Concretely, we build a generalized open map situation such that 
stuttering branching bisimulations coincide with 
strong path bisimulations, and that in this case, they precisely 
characterize $(\E,\bbS)$-bisimilarity. In addition, in this framework, 
path bisimulations precisely correspond to weak bisimulations, 
witnessing branching bisimilarity as the history-preserving analogue 
to weak bisimilarity.

\subsection{LTSs with Internal Moves, Category and Bisimilarities}

\begin{definition}
For a fixed set $A$ of labels with a particular element $\tau$ 
(called \emph{internal move}), 
the category $\WLTS$ contains the same objects as $\LTS$, and its
morphisms $f\colon (X,\transto,x_0)\to (Y,\transto,y_0)$ 
are functions $f\colon X\to Y$ such that $f(x_0) = y_0$ and for all
$x\xtransto{a} x'$ in $X$, we have $f(x) \xtransto{a} f(x')$, 
or $a = \tau$ and $f(x) = f(x')$.
\end{definition}

$\LTS$ is a (non-full) subcategory of $\WLTS$, and in fact the $\LTS$-morphisms will be used later in the paper.
For easier distinction, we use the terminology \emph{strong morphisms} for
$\WLTS$-morphisms that are also in $\LTS$ (alluding to \emph{strong
bisimulations} which were the bisimulation notion in $\LTS$).
Another notion of morphisms are so-called \emph{weak morphisms}~\cite{cheng95}:
\begin{itemize}
	\item if $x\xtransto{~a~} x'$ in $X$, then 
		$f(x)\xtranstostar{~\tau~}\xtransto{~a~}\,\,\xtranstostar{~\tau~} f(x')$ in $Y$,
	\item if $x\xtransto{~\tau~} x'$ in $X$, then 
		$f(x) \xtranstostar{~\tau~}f(x')$ in $Y$.
\end{itemize}
Though we do not use weak morphisms in the following development of the paper,
it is worth mentioning the $\WLTS$-morphisms form a proper subclass of the weak
morphisms.

\begin{definition}
A \emph{branching bisimulation} from $(X,\transto_X,i_X)$ to $(Y,\transto_Y,i_Y)$ 
is a relation $R \subseteq X\times Y$ such that $(i_X,i_Y) \in R$, and for $(x,y) \in R$:
\begin{itemize}
	\item if $x \xtransto{~a~} x'$ then
		\begin{itemize}
			\item $a = \tau$ and $(x',y) \in R$, or
			\item $y \xtransto{~\tau~} y_1 \xtransto{~\tau~} \ldots \xtransto{~\tau~} y_n \xtransto{~a~} z_1 \xtransto{~\tau~} \ldots \xtransto{~\tau~} z_m$
			such that $(x,y_n)$, $(x',z_1)$, and $(x',z_m) \in R$.
		\end{itemize}
	\item symmetrically when $y \xtransto{~a~} y'$.
\end{itemize}
If furthermore in the second condition $(x,y_i),\,(x',z_i) \in R$ for all $i$ 
(and symmetrically in the third condition), then $R$ is said to be \emph{stuttering}.
\end{definition}

It is known from \cite{glabbeek96} that the largest branching bisimulation is stuttering, 
so that both notions generate the same bisimilarity. In the following, we will prove that 
strong path bisimulations are more naturally related to stuttering branching bisimulations
thanks to their backward closure.

\begin{definition}
A \emph{weak bisimulation} from $(X,\transto_X,i_X)$ to $(Y,\transto_Y,i_Y)$ 
is a relation $R \subseteq X\times Y$ such that $(i_X,i_Y) \in R$, and for $(x,y) \in R$:
\begin{itemize}
	\item if $x \xtransto{~\tau~} x'$, then there is $y'$ such that 
		$(x',y') \in R$ and $y \xtranstostar{~\tau~} y'$,
	\item if $x \xtransto{~a~} x'$ with $a \neq \tau$, 
		then there is $y'$ such that $(x',y') \in R$ and 
		$y \xtranstostar{~\tau~}\xtransto{~a~}\,\xtranstostar{~\tau~} y'$.
	\item symmetrically when $y \xtransto{~\tau~} y'$ or $y \xtransto{~a~} y'$.
\end{itemize}
\end{definition}
\noindent
It is clear that a (stuttering) branching bisimulation is a weak bisimulation.

\subsection{Generalized Open Maps for Branching Bisimulations}
\label{sec:path-weak}

In this section, we describe the generalized open maps situation
that captures branching bisimulation.
Like for plain LTSs (Def.~\ref{def:lts-paths}), elements of $V$ will be words 
on $A$, representing a finite linear LTS labelled by this word.
However, to emphasize the particularity of the internal move $\tau$, 
we will provide another presentation here.

Here, $V$ is the set of sequences of the form:
$v = n_1, a_1, n_2, \ldots, n_k, a_k, n_{k+1}$
such that $a_i \in A\setminus\{\tau\}$ and $n_i \in \Nat$, e.g.\ \(
  \tau \tau a \tau bc \tau \hateq 2,a,1,b,0,c,1
\).
The natural numbers $n_i\in \Nat \cong \set{\tau}^*$ represent the number of internal moves
between two observable moves.
Then, $J$ maps this sequence to the usual linear LTS:
\[
  \newcommand{\tausequenceTo}[2]{%
    \path[draw=none] (#1) to node[sloped,text=transitioncolor,alias=dots] {\(\cdots\)} (#2);
    \draw[transition] (#1) to node[xshift=3pt,yshift=1pt,right] {\tau} (dots);
    \draw[transition] (dots) to node[above right] {\tau} (#2);
  }
  Jv = \weightedsystem[(0,1)]{
    \def\lowerRow{-1.0}
    \node[state] (n11) at (1,\lowerRow) {(n_1,1)};
    \node[state] (state02) at (2,0) {(0,2)};
    \node[state] (n22) at (3,\lowerRow) {(n_2,2)};
    \node (state03) at (4,0) {};
    \node (n33) at (4.5,\lowerRow) {};
    \node (state0k) at (5.5,0) {(0,k+1)};
    \node[state] (nkk) at (6.5,\lowerRow) {(n_{k+1}, k+1)};

    \tausequenceTo{r}{n11}

    \draw[transition] (n11) to node[above left] {a_1} (state02);
    \tausequenceTo{state02}{n22}

    \def\bezierLooseness{7mm}
    \draw[overlay,transition,dotted] (state03)
      .. controls ([xshift=\bezierLooseness,yshift=\bezierLooseness]state03)
         and ([xshift=-\bezierLooseness,yshift=-\bezierLooseness]n33)
      .. (n33);

    \draw[transition] (n22) to node[above left]{a_2} (state03);
    \draw[transition] (n33) to node[above left]{a_k} (state0k);
    \tausequenceTo{state0k}{nkk}
  }
\]
\noindent
Elements of $\E$ append \emph{at most} one observable (i.e.~non-$\tau$) move:
\begin{itemize}
	\item \textbf{Only internal moves}: for sequences 
		$v = n_1, a_1, \ldots, a_k, n_{k+1}$ and 
		$w = n_1, a_1, \ldots, a_k, n_{k+1}'$ with $n_{k+1} \leq n_{k+1}'$
		there is a unique edge $e_\tau\colon v\to w$ in $\E$, e.g.\
		\(
      e_\tau\colon
      2,a,1,b,0,c,1
      \to 2,a,1,b,0,c,3
    \)
	\item \textbf{One observable move}: for sequences 
		$v = n_1, a_1, \ldots, a_k, n_{k+1}$ and 
		$w = n_1, a_1, \ldots, a_k, n_{k+1}', a, n_{k+2}$ with $n_{k+1} \leq n_{k+1}'$
		there is a unique edge $e_a\colon v\to w$ in $\E$.
\end{itemize}

The graph morphism $\JE \colon\E\to\M$ maps those edges to the 
obvious inclusion, mapping state $(i,j)$ of $Jv$ to the same state in $Jw$.

Strictly speaking, $\E$ is not a category, but just a graph, because we have 
$a \xrightarrow{e_b} ab$ and $ab\xrightarrow{e_c} abc$, but there is no morphism 
from $a$ to $abc$.
To fit in the framework of 
Section~\ref{sec:generalOpenMaps}, we take the free category $\text{Free}(\E)$
generated by this graph and the 
unique functor extending the graph homomorphism $\JE$. 
By Proposition~\ref{prop:generators}, it is equivalent to consider $\text{Free}(\E)$ 
and $\E$ for 
openness and path bisimulations, so we will talk of 
$(\E,\bbS)$-openness, when we mean $(\text{Free}(\E),\bbS)$-openness, and all the 
statements and proofs will be done using $\E$ only.

Elements of $\bbS$ are trickier to describe. 
The intuition is that they are morphisms that merge states.
In the context of LTSs with internal moves, merging happens when the source 
and the target of a $\tau$-transition are mapped to the same state. This is crucial for the open 
maps we want to describe: to lift one $\tau$-transition, it might be necessary to 
use several $\tau$-transitions. 
With this knowledge, elements of $\bbS$ are as follows.
\begin{itemize}
	\item \textbf{Merging internal moves}:
		morphisms in $\bbS$ from 
		$v = n_1, a_1, \ldots, a_k, n_{k+1}$ to 
		$w = n_1', a_1, \ldots, a_k, n_{k+1}'$ with $n_i \geq n_i'$
		are $(k+1)$-tuples $s = (s_1, \ldots, s_{k+1})$ of monotone surjective functions 
		$s_i\colon\{0 < 1 < \ldots < n_i\}\to\{0 < 1 < \ldots < n_i'\}.$
\end{itemize}
For example, there are two morphisms from $a\tau\tau b \hateq 0,a,2,b,0$ to  $a\tau b \hateq 0,a,1,b,0$, one for each $\tau$ that can be dropped.
The functor $\JS$ then maps $s$ to the morphism from $Jv$ to $Jw$ defined by
$\JS(s)(i,j) = (s_j(i),j)$.

As a piece of notation, for a morphism $x\colon J(n_1, a_1, \ldots, a_k, n_{k+1})\to X$, 
we denote $x(n_{k+1},k+1) \in X$ by $\text{end}(x)$.

\subsection{Equivalence of Bisimilarities}

In this section, we prove that $(\E,\bbS)$-bisimilarity indeed coincides with branching bisimilarity.
To do so, we prove first that for the present instance of $\E$ and $\bbS$ (Sec.~\ref{sec:path-weak}),
$(\E,\bbS)$-bisimilarity coincides with strong path bisimilarity.
In general, $(\E,\bbS)$-bisimilarity implies strong path bisimilarity
(\autoref{thm:ES-implies-strong-path}), so it remains to show the converse
direction for the present instance.
To this end, we start by internalizing strong path bisimulations into objects
of $\LTS$/$\WLTS$, in order to relate it them to open maps:
\begin{definition} \label{def:strongPathLTS}
  For a strong path bisimulation $R$ from $X$ to $Y$,
  define the LTS $\widetilde{R} = (R, \transto_R, (X \xleftarrow{~!~} J0
  \xrightarrow{~!~} Y))$ to have transitions
\[ 
	(X \xleftarrow{x} Jv \xrightarrow{y} Y) \xtransto{~a~}_R (X \xleftarrow{x'} Jw \xrightarrow{y'} Y)
\]
\begin{itemize}
	\item for $a \neq \tau$ with $v = (n_1, a_1, \ldots, a_k, n_{k+1})$, 
		$w = (n_0, a_1, \ldots, a_k, n_{k+1}, a, 0)$,
		$x' = x\cdot \JE e_a$, and $y' = y\cdot \JE e_a$ (for the unique $e_a\colon v\to w$);
	\item for $a = \tau$ with $v = (n_1, a_1, \ldots, a_k, n_{k+1})$, 
		$w = (n_1, a_1, \ldots, a_k, n_{k+1}+1)$, 
		$x' = x\cdot \JE e_\tau$, and $y' = y\cdot \JE e_\tau$ (for the unique $e_\tau\colon v\to w$).
\end{itemize}
\end{definition}
As a first observation, we describe runs in $\widetilde{R}$ in terms of projection maps:
\begin{lemma}
\label{lem:run-R}
In \WLTS, we have projection maps
$X\xleftarrow{\pi_X} \widetilde{R}\xrightarrow{\pi_Y} Y$ given by
$
	\pi_X\colon (X \xleftarrow{x} Jv \xrightarrow{y} Y) 
		\mapsto \text{end}(x)$ and $
	\pi_Y\colon (X \xleftarrow{x} Jv \xrightarrow{y} Y) 
		\mapsto \text{end}(y)
$.
For every strong morphism $r\colon Jv\to\widetilde{R}$ (i.e.~$r\in \LTS$),
\[
	\runend(r)\text{ is of the form }(X\xleftarrow{~\pi_X\cdot r~}Jv\xrightarrow{~\pi_Y\cdot r~}Y).
\]
\end{lemma}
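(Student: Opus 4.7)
The plan is to verify the two assertions in turn by tracing the construction of $\widetilde R$ along its transitions.

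For the projection claim, I proceed by case analysis on the two kinds of transitions in Definition~\ref{def:strongPathLTS}. In either case the extension $e \in \E$ (observable $e_a$ or internal $e_\tau$) is sent by $\JE$ to an inclusion of LTSs that inserts exactly one fresh $a$- or $\tau$-transition at the endpoint of $Jv$ inside $Jw$. Since the legs $x',y'$ of the target span are $\WLTS$-morphisms, they send this fresh transition to a transition of the same label in $X$, resp.\ $Y$ (or, in the $\tau$-case, possibly to an identity), connecting $\text{end}(x)$ to $\text{end}(x')$ and $\text{end}(y)$ to $\text{end}(y')$. That is exactly what the $\WLTS$-morphism condition requires of $\pi_X$ and $\pi_Y$.

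For the second assertion, I induct along the linear order of states of $Jv$. The initial state is sent by $r$, which is in $\LTS$ and hence preserves initial states, to the initial span $(X \xleftarrow{!_X} J0 \xrightarrow{!_Y} Y)$ of $\widetilde R$. Since $r$ is strong, each transition of $Jv$ (including $\tau$-transitions) is mapped to a transition of $\widetilde R$ of the same label, which by Definition~\ref{def:strongPathLTS} extends the path of the current span by that label. Hence for every state $p$ of $Jv$, the span $r(p)$ has middle object $Jv_p$, where $v_p$ denotes the sub-path of $v$ from the start up to $p$; in particular $\runend(r)$ has middle $Jv$, producing a span of the form $(X \xleftarrow{x_*} Jv \xrightarrow{y_*} Y)$.

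It remains to identify $x_* = \pi_X\cdot r$ and $y_* = \pi_Y\cdot r$. The same induction shows that the legs $x_p,y_p$ of the span $r(p)$ factor through the canonical inclusion $\iota_p\colon Jv_p\hookrightarrow Jv$ as $x_p = x_*\cdot\iota_p$ and $y_p = y_*\cdot\iota_p$, because each transition in $\widetilde R$ is constructed precisely by post-composing the previous legs with a $\JE$-inclusion. Evaluating at the endpoint of $Jv_p$ then gives $(\pi_X\cdot r)(p) = \text{end}(x_p) = x_*(p)$, and symmetrically for $Y$, forcing the desired identities. The main obstacle is keeping the observable and $\tau$-extensions uniform throughout this bookkeeping, but once one notices that both types of extension are inclusions of LTSs adding exactly one transition, the induction step is the same in both cases.
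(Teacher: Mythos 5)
Your proof is correct and follows essentially the same route as the paper's: a case analysis on the two transition types of $\widetilde R$ (using that the legs of the target span restrict to those of the source span along the $\JE$-inclusion) for the projection claim, and an induction along the linear path $Jv$ — which the paper leaves as ``a simple induction on $v$'' but you rightly spell out, including the crucial use of strongness of $r$ so that no $\tau$-step of $Jv$ is collapsed and the middle object of $\runend(r)$ is exactly $Jv$. The only (shared, trivial) omission is checking that $\pi_X,\pi_Y$ send the initial span $(X\xleftarrow{!}J0\xrightarrow{!}Y)$ to the initial states of $X$ and $Y$.
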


Remark that in this statement, we require $r$ to be strong and
not just a morphism of $\WLTS$. With a morphism of $\WLTS$, the statement
would become that there is $s\colon v'\to v \in \bbS$ such that $\pi_X\cdot r =
x\cdot \JS s$ instead. For the characterization of open maps in $\WLTS$, it
suffices for our needs to restrict to strong morphisms:

\begin{lemma}
\label{lem:restriction-to-strong}
For $f\colon X\to Y$ in $\WLTS$ to be $(\E,\bbS)$-open,
it is sufficient to verify the lifting in Definition~\ref{def:generalized-open}
in the special case of $x$ being a strong morphism.
\end{lemma}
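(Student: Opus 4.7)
The plan is to reduce any lifting square for an arbitrary $\WLTS$-morphism $x$ to one in which $x$ is strong, so that the assumed lifting property applies.

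First, I factor out the $\tau$-transitions collapsed by $x$. Since $x\colon Jv\to X$ lies in $\WLTS$, every transition of $Jv$ whose endpoints are identified by $x$ must be labelled by $\tau$. Merging exactly these transitions yields an object $v_0\in V$, a merge $s\colon v\to v_0$ in $\bbS$, and a strong morphism $x_0\colon Jv_0\to X$ with $x = x_0\cdot \JS s$. I then transport the extension $e$ along $s$: by applying the merges of $s$ to the $\tau$-segments of $w$ inherited from $v$ and leaving the segments added by $e$ intact, I obtain $e_0\colon v_0\to w_0$ in $\E$ and $t\colon w\to w_0$ in $\bbS$ with $\JS t\cdot \JE e = \JE e_0\cdot \JS s$. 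The commuting square $f\cdot x = y\cdot \JE e$ forces $y$ to identify the same pairs as $\JS t$, so $y$ factors uniquely as $y_0\cdot \JS t$; surjectivity of $\JS s$ then gives $y_0\cdot \JE e_0 = f\cdot x_0$. This produces a new lifting square with $x_0$ strong on top.

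Next, I apply the assumed lifting property to this new square, obtaining $u_0\in V$, $e_0'\colon v_0\to u_0$ in $\E$, $s_0\colon u_0\to w_0$ in $\bbS$, and $x_0'\colon Ju_0\to X$ satisfying the three equations of Definition~\ref{def:generalized-open}. To transport the lifting back, I take $u$ to be $v$ extended by the same new $\tau$- and observable-segments that $e_0'$ attaches to $v_0$, giving an extension $e'\colon v\to u$ in $\E$. I define $s_u\colon u\to u_0$ in $\bbS$ to be $s$ on the $v$-part and the identity on the new part (well-defined because the equation $\JS s_0\cdot \JE e_0' = \JE e_0$ forces $s_0$ to be the identity on the $v_0$-part), and construct $s'\colon u\to w$ in $\bbS$ analogously from the interaction of $s_u$, $s_0$ and $t$. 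Setting $x':= x_0'\cdot \JS s_u$, a short diagram chase using $\JS s_u\cdot \JE e' = \JE e_0'\cdot \JS s$, $\JS t\cdot \JE e = \JE e_0\cdot \JS s$, and the identities from the strong lifting yields $f\cdot x' = y\cdot \JS s'$, $x'\cdot \JE e' = x$, and $\JS s'\cdot \JE e' = \JE e$, as required.

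The main obstacle is the back-transport, where one must simultaneously construct $s'$ and verify its compatibility with $y$, $e$, and the data of the strong lifting. This works because $\E$-extensions only affect the final segment of a path while $\bbS$-merges act segment-wise: $u$, $s_u$ and $s'$ decompose into independent pieces on the ``$v$-part'' and on the ``new part'' added by the extension, and the two pieces can be assembled without conflict.
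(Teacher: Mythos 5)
Your proposal is correct and follows essentially the same route as the paper's proof: factor $x$ as a strong morphism precomposed with the $\bbS$-merge that collapses the $\tau$-steps identified by $x$, transport the square along this merge (using $f\cdot x = y\cdot \JE e$ and surjectivity to factor $y$), apply the assumed lifting to the strong square, and then re-attach the collapsed $\tau$-steps to transport the lifting back, using that $\JS s_0\cdot\JE e_0'=\JE e_0$ forces $s_0$ to be the identity on the old part. The paper carries out the same plan with explicit index formulas for the monotone surjections where you argue structurally (e.g.\ your $x'=x_0'\cdot\JS s_u$ versus the paper's piecewise definition), but the two arguments coincide.
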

\noindent
We use this simplification to prove that the projection maps $\pi_X,\pi_Y$ are open:

\begin{proposition}
\label{lem:path-implies-ES}
For a strong path bisimulation $R$ from $X$ to $Y$, the projections
$X\xleftarrow{\pi_X} \widetilde{R}\xrightarrow{\pi_Y} Y$
are both $(\E,\bbS)$-open.
\end{proposition}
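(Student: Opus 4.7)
The plan is to prove the statement for $\pi_X$; the case of $\pi_Y$ follows by the symmetry of strong path bisimulations ($R^\dagger$ is also a strong path bisimulation). To verify $(\E,\bbS)$-openness of $\pi_X\colon \widetilde{R}\to X$, I invoke \autoref{lem:restriction-to-strong} to restrict attention to squares whose upper side $r\colon Jv\to\widetilde{R}$ is strong. Given such a square together with $e\colon v\to w\in\E$ and $z\colon Jw\to X$ satisfying $\pi_X\cdot r=z\cdot\JE e$, \autoref{lem:run-R} identifies $\runend(r)$ with the span $(X\xleftarrow{x}Jv\xrightarrow{y}Y)$, where $x=\pi_X\cdot r$ and $y=\pi_Y\cdot r$; this span belongs to $R$ because it is, by definition, a state of $\widetilde{R}$. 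The identity $x=z\cdot\JE e$ then provides exactly the data needed to trigger the forward-closure clause of the strong path bisimulation $R$.

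Applying forward closure yields $e'\colon v\to u\in\E$, $s\colon u\to w\in\bbS$, and $y'\colon Ju\to Y$ with $\JE e=\JS s\cdot\JE e'$, $y=y'\cdot\JE e'$, and $\sigma:=(X\xleftarrow{z\cdot\JS s}Ju\xrightarrow{y'}Y)\in R$. It remains to produce a lifting $r'\colon Ju\to\widetilde{R}$ with $r=r'\cdot\JE e'$ and $\pi_X\cdot r'=z\cdot\JS s$. The edge $e'$ decomposes $Ju$ into a sequence of elementary prefixes (each one extending its predecessor by exactly one $\tau$-step or one observable step), and for every state $(i,j)$ of $Ju$ I define $r'(i,j)$ to be the restriction of $\sigma$ along the inclusion of the prefix of $Ju$ ending at $(i,j)$. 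Backward closure of $R$ ensures that every such restricted span lies in $R$, so each $r'(i,j)$ is a well-defined state of $\widetilde{R}$. Moreover, the elementary extensions connecting consecutive prefixes are exactly those generating the transitions of $\widetilde{R}$ (\autoref{def:strongPathLTS}), so $r'$ sends every transition of $Ju$ to a matching transition of $\widetilde{R}$ carrying the same label; in particular $r'$ is itself a strong morphism.

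For the commutativity of the lifted diagram, consider a state $(i,j)$ in the image of $\JE e'$: the span $r'(\JE e'(i,j))$ is the restriction of $\sigma$ along an inclusion factoring through $\JE e'$, and using $x=z\cdot\JS s\cdot\JE e'$ together with $y=y'\cdot\JE e'$, this restricted span coincides with the restriction of $(X\xleftarrow{x}Jv\xrightarrow{y}Y)$ along the corresponding prefix inclusion of $Jv$. By \autoref{lem:run-R} applied to the sub-run of $r$ up to $(i,j)$, the latter span equals $r(i,j)$, so $r=r'\cdot\JE e'$. Similarly, $\pi_X\cdot r'$ sends $(i,j)$ to the endpoint of the $X$-leg of $r'(i,j)$, which is $(z\cdot\JS s)(i,j)$, giving $\pi_X\cdot r'=z\cdot\JS s$.

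The main obstacle is the passage from the single application of forward closure, which produces a general $e'\in\E$ possibly combining several $\tau$-steps with one observable step, to a morphism $r'$ compatible with the strictly elementary transition structure of $\widetilde{R}$. The backward-closure axiom of strong path bisimulations is essential here, as it is what permits every intermediate prefix of $Ju$ to be populated with a genuine element of $R$; this also clarifies, already at the proof level, why strong path bisimulations correspond to stuttering branching (rather than merely weak) bisimulations.
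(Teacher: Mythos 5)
Your proposal is correct and follows essentially the same route as the paper's proof: restrict to strong upper morphisms via \autoref{lem:restriction-to-strong}, identify $\runend(r)$ with a span in $R$ via \autoref{lem:run-R}, apply forward closure, and assemble the lifting $r'$ by restricting the resulting span along the prefix inclusions of $Ju$, with backward closure guaranteeing each restriction lies in $R$. The only differences are cosmetic (you treat $e_a$ and $e_\tau$ uniformly and spell out the $\pi_Y$ symmetry and the verification of $r = r'\cdot\JE e'$ in slightly more detail).
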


The next step is to prove the equivalence between strong path and 
stuttering branching bisimulations.
\begin{theorem}
\label{lem:branching-stuttering-iff-strong-path}
If $R$ is a stuttering branching bisimulation 
from $X$ to $Y$, then
\[
	\overline{R} = \{ X \xleftarrow{x} Jv \xrightarrow{y} Y 
		\mid v = (n_1, a_1, \ldots, n_{k+1}) \wedge \forall i,\, j.\, (x(i,j),y(i,j)) \in R\}
\]
is a strong path bisimulation.
Conversely, if $R$ is a strong path bisimulation, then 
\[
	\widecheck{R} = \{ (\text{end}(x),\text{end}(y)) 
		\mid  (X \xleftarrow{x} Jv \xrightarrow{y} Y) \in R\}
\]
is a stuttering branching bisimulation.
\end{theorem}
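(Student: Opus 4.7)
The plan is to prove the two implications separately, verifying the conditions of Definition~\ref{def:path-simulation} and of branching bisimulation directly. The bulk of the work will be in the forward closure of the first direction.

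For the first direction, assume $R$ is a stuttering branching bisimulation and show $\overline{R}$ is a strong path bisimulation. The initial condition is immediate since $J0$ consists of only the initial state and $(i_X,i_Y)\in R$. Backward closure is direct: a morphism $\JE e\colon Jw\to Jv$ sends each $(i,j)\in Jw$ to some $(i',j')\in Jv$, and since $(x(i',j'), y(i',j'))\in R$, the restricted span still lies in $\overline{R}$; symmetry of $R$ gives the analogue for $\overline{R}^\dagger$. For the forward closure, given a span in $\overline{R}$, an extension $e\colon v\to w$ in $\E$, and an $X$-side lift $x'\colon Jw\to X$ with $x = x'\cdot \JE e$, I must produce $e'\colon v\to u$, $s\colon u\to w$, and a $Y$-side lift $y'\colon Ju\to Y$ satisfying the commutativities of Definition~\ref{def:path-simulation}. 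By Proposition~\ref{prop:generators}, it suffices to treat the two generator cases of $\E$. When $e = e_\tau$ extends the trailing $\tau$-segment from $n_{k+1}$ to $n_{k+1}'$, I iterate the stuttering clause along each new transition $x'(i,k+1)\xtransto{\tau} x'(i+1,k+1)$, producing step-by-step either an empty or a non-empty $\tau$-sequence in $Y$ with all intermediate states pointwise $R$-related to the corresponding $X$-states. Concatenating these pieces gives a $\tau$-sequence on the $Y$-side of some length $m$, and I take $u$ to have enough trailing $\tau$-moves to simultaneously carry both the $X$- and $Y$-patterns; the splitting $s\colon u\to w$ collapses the segments of $u$ that correspond to successive equal $X$-states. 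When $e = e_a$ introduces some $\tau$-moves, then an observable $a$, then more $\tau$-moves, the branching clause for $a\neq\tau$ yields a $\tau^* a \tau^*$-sequence in $Y$, and the stuttering property again supplies pointwise relatedness of all intermediate states, yielding the desired $y'$ and $s$.

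For the converse direction, assume $R$ is a strong path bisimulation and show $\widecheck{R}$ is a stuttering branching bisimulation. The initial pair $(i_X,i_Y)$ comes from the initial span. Suppose $(x_0,y_0)\in \widecheck{R}$ is witnessed by a span $(X\xleftarrow{x}Jv\xrightarrow{y}Y)\in R$ with $\runend(x)=x_0$ and $\runend(y)=y_0$, and suppose $x_0\xtransto{a} x'$ in $X$. I choose the relevant generator $e\in\{e_\tau, e_a\}$ describing this transition, together with $x''\colon Jw\to X$ extending $x$ by the one new move. Forward closure of $R$ then yields $e'\colon v\to u$, $s\colon u\to w$, and $y''\colon Ju\to Y$ such that $(X\xleftarrow{x''\cdot \JS s}Ju\xrightarrow{y''}Y)\in R$. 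The shape of $u$ directly exhibits the $\tau^* a \tau^*$ (or purely $\tau^*$) sequence in $Y$ required by the branching clause, and the intermediate pairs demanded by the stuttering condition are obtained by applying backward closure of $R$ to the various prefix inclusions into $u$, each producing a span in $R$ whose endpoints provide a required pair in $\widecheck{R}$.

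The principal obstacle is the bookkeeping in the forward closure for $\overline{R}$: selecting $u$ and the splitting $s$ so that $\JE e = \JS s\cdot \JE e'$ and $y = y'\cdot \JE e'$ hold on the nose, while simultaneously aligning the $\tau$-patterns produced by the stuttering bisimulation on the two sides. A secondary subtlety in the converse direction is the case when the branching clause is satisfied without any $Y$-move (for $a=\tau$ with $(x',y_0)\in R$); this appears as a splitting $s$ that collapses the entire new $\tau$-segment of $u$ back to $n_{k+1}$, and the stuttering hypothesis on $R$ ensures that this collapse is compatible with the required relatedness.
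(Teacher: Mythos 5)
Your proposal is correct and follows essentially the same route as the paper's proof: for the forward closure of $\overline{R}$ you iterate the stuttering clause transition-by-transition to build the $Y$-side run and absorb the resulting length mismatch into the $\bbS$-morphism, and in the converse you combine forward closure of $R$ (to produce the extension) with backward closure applied to prefixes (to produce the intermediate stuttering pairs), exactly as the paper does. One minor correction to your closing remark: in the degenerate case of the converse ($a=\tau$ with no actual move on the $Y$-side), the splitting $s\colon u\to w$ must still be surjective onto the trailing segment of $w$, so this case is detected by the run $y'$ being \emph{constant} on the new $\tau$-segment of $Ju$ (a $\WLTS$-morphism may send $\tau$-transitions to equalities), not by $s$ collapsing that segment.
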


The same reasoning can be made for weak and path bisimulations:
\begin{theorem}
\label{lem:weak-iff-path}
If $R$ is a weak bisimulation from $X$ to $Y$, then
\[
	\widehat{R} = \{ X \xleftarrow{x} Jv \xrightarrow{y} Y 
		\mid (\text{end}(x),\text{end}(y)) \in R\}
\]
is a path bisimulation. 
If $R$ is a path bisimulation, then 
$\widecheck{R}$
is a weak bisimulation.
\end{theorem}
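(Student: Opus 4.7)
The plan is to prove both directions by case analysis on the type of generator of $\E$ (either $e_\tau$ or $e_a$), mirroring the structure of \autoref{lem:branching-stuttering-iff-strong-path}. Since path bisimulations drop the backward-closure requirement, only the initial condition and forward closure need to be verified, both for $\widehat{R}$ (resp.\ $\widecheck{R}$) and the flipped relation.

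For the forward direction ($R$ weak bisimulation $\Rightarrow$ $\widehat{R}$ path bisimulation), the initial condition is immediate since $\text{end}$ applied to the trivial span yields $(i_X,i_Y)\in R$. For forward closure, take a span $(X\xleftarrow{x}Jv\xrightarrow{y}Y)\in\widehat{R}$ with $v=(n_1,a_1,\ldots,a_k,n_{k+1})$, a generator $e\colon v\to w\in\E$, and an extension $x'\colon Jw\to X$ with $x'\cdot\JE e = x$. In the $e_\tau$-case, the chain of $\tau$-transitions in $X$ from $\text{end}(x)$ to $\text{end}(x')$ combined with iterated applications of the weak-bisimulation clause yields a matching $\tau$-chain in $Y$ starting from $\text{end}(y)$, with breakpoints $q_0=\text{end}(y),q_1,\ldots,q_{n_{k+1}'-n_{k+1}}$ where $q_{i-1}\xtranstostar{\tau}q_i$ and the pairs with the corresponding states of $X$ all lie in $R$. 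Letting $m_i\in\Nat$ be the length of the sub-chain from $q_{i-1}$ to $q_i$, I define $u$ by increasing the last entry of $v$ by $\sum_i\max(m_i,1)$, define $y'\colon Ju\to Y$ by concatenating the sub-chains (adding a single virtual position per breakpoint where $m_i=0$, realised by $y'$ collapsing one $\tau$-transition to an equality, which is permitted for $\WLTS$-morphisms), and let $s=(s_1,\ldots,s_{k+1})$ collapse each of the resulting blocks back to the corresponding position in $w$, while being the identity on the common prefix. The required equations $\JE e=\JS s\cdot\JE e'$ and $y=y'\cdot\JE e'$ then hold by construction, and $(\text{end}(x'\cdot\JS s),\text{end}(y'))$ is the final $R$-related pair, so the new span lies in $\widehat{R}$. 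The $e_a$-case ($a\neq\tau$) is analogous, inserting a single observable $a$-transition in $Y$ as provided by the weak-bisimulation clause for observable moves, sandwiched between two $\tau$-chains treated as above. Applying the same argument to $R^\dagger$ handles $\widehat{R}^\dagger$.

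Conversely, to show $\widecheck{R}$ is a weak bisimulation when $R$ is a path bisimulation, the initial condition is again immediate. For the simulation clause, given $(p,q)=(\text{end}(x),\text{end}(y))\in\widecheck{R}$ via a span $(X\xleftarrow{x}Jv\xrightarrow{y}Y)\in R$ and a transition $p\xtransto{a}p'$ in $X$, I build the one-step extension $x'\colon Jw\to X$ of $x$ — with $w$ obtained from $v$ via $e_\tau$ if $a=\tau$, else via $e_a$ — and apply forward closure of $R$. The resulting triple $(e',s,y')$ produces a span in $R$ whose left end-state is $p'$ (using surjectivity of the last component of $s$) and whose right end-state is some $q'$, so $(p',q')\in\widecheck{R}$. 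Since $y'$ is a $\WLTS$-morphism, the chain in $Ju$ from $(n_{k+1},k+1)$ to the endpoint — consisting of $\tau$-transitions and possibly one $a$-transition — maps to exactly the required $q\xtranstostar{\tau}q'$ (if $a=\tau$) or $q\xtranstostar{\tau}\xtransto{a}\,\xtranstostar{\tau}q'$ path in $Y$. The symmetric clause follows from $R^\dagger$ being a path simulation.

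The main obstacle is the forward direction, in particular constructing $u$, $s$, and $y'$ when $\sum_i m_i$ differs from $n_{k+1}'-n_{k+1}$ (for instance when some $m_i=0$ and padding is needed). The key observation is that $\WLTS$-morphisms allow individual $\tau$-transitions to be collapsed to equalities, so any mismatch can be absorbed into $y'$; the required equalities $\JE e=\JS s\cdot\JE e'$ and $y=y'\cdot\JE e'$ then amount to routine bookkeeping on the monotone surjections that make up $s$.
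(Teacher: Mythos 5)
Your proposal is correct and follows essentially the same route as the paper: the paper's proof of this theorem simply says to redo the construction of \autoref{lem:branching-stuttering-iff-strong-path} while dropping the stuttering/backward-closure bookkeeping, and your case analysis on $e_\tau$ versus $e_a$, the breakpoint chains $q_0,\ldots,q_{n'_{k+1}-n_{k+1}}$, and the $\max(m_i,1)$ padding (matching the paper's insertion of a repeated state when a step contributes no transition) are exactly that simplified construction. The converse direction via a one-step extension, forward closure of $R$, and surjectivity of the last component of $s$ likewise matches the paper's argument.
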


In total, we can describe branching and weak bisimilarity by categorical bisimilarity notions, as summarized in \autoref{tab:equivalences}.

\begin{table}[t]
  \caption{Equivalences of bisimilarity notions in LTSs with $\tau$-actions $X,Y\in \WLTS$}
  \label{tab:equivalences}
  \centering
  \begin{tabular}{l@{~\(\Longleftrightarrow\)~}l@{~}l@{}}
    \toprule
    branching bisimilarity
    & strong path bisimilarity
    & (\autoref{lem:branching-stuttering-iff-strong-path})
    \\
    & $(\E,\bbS)$-bisimilarity
    & (\autoref{lem:path-implies-ES} \& \autoref{thm:ES-implies-strong-path})
    \\
    \midrule
    weak bisimilarity
    & path bisimilarity
    & (\autoref{lem:weak-iff-path})
    \\
    \bottomrule
  \end{tabular}
\end{table}

\section{Conclusions and Future Work}
\label{sec:conclusion}

In this paper, we investigate bisimilarities of weighted and probabilistic
systems through the theory of open maps. After showing that the usual theory
cannot capture weights, we provide a faithful extension of the theory by the
notion of mergings. The new theory has similar properties (equivalence relation, characterization as sets
of spans, restriction to generators) as classical open maps but also captures bisimilarity of weighted
systems and even branching bisimilarity.

The new instances come at the cost of more parameters to the theory. It remains
for future work whether the parameters $\E$, $\bbS$ can be combined in a
single path category with two morphism classes and morphism factorizations.
It would also be illuminating to know whether this new theory satisfies the 
axioms of a \emph{class of open maps} from \cite{joyal94}, in 
particular for toposes of coalgebras \cite{johnstone01}.

For the framework as presented, we would like to formally relate it to coalgebra -- 
as this has been done 
for
non-deterministic systems~\cite{lasota02,WDKH2019}. 
Furthermore, we would like to investigate how
system semantics of true concurrency, such as Higher Dimensional Automata
\cite{pratt05} can be integrated. Designing open maps for them turned out to be
complicated (see \cite{fahrenberg13}), but a hope would be that the addition of
mergings would allow modeling homotopy more naturally. 

Finally, it would be interesting to see whether our theory capture quantitative 
extensions of systems classically modeled by open maps, such as probabilistic 
and quantum extensions of petri nets and event structures 
(see \cite{winskel13} for example).

\label{maintextend} %
\newpage

\bibliographystyle{splncs04}
\bibliography{refs}

\newpage

\appendix

\section{Details for Section~\ref{sec:pathCat}}

\begin{proofappx}[Details for]{remOrderHJ}
  Hughes and Jacobs \cite{HughesJ04} work with preorders instead of partial orders, but adapting their definition~\cite[Def.~2.1]{HughesJ04} to partial orders yields:
  An order on a Set-functor $F\colon \Set\to\Set$ is a functor $\le\colon \Set\to \Pos$ making the following diagram commute:
  \[
    \begin{tikzcd}
    & \Pos \arrow{d}{U}
    \\
    \Set \arrow{ur}{\le} \arrow{r}{F}
    & \Set
    \end{tikzcd}
  \]
  Every order on a functor $\sqsubseteq$ in our sense (\autoref{def:partialOrder}) for $\C:=\Set$ induces such a functor
  $\le\colon \Set\to \Pos$, because for every set $Y$, we can put
  $\le_Y\in \Pos$ to be the order $\sqsubseteq$ on $\Set(1,FY)$, identifying
  elements of $FY$ with maps $1\to FY$.

  Conversely, given $\le\colon \Set\to\Pos$ in the sense of Hughes and Jacobs,
  we can define the order on $f_1,f_2\in \Set(X,FY)$ by
  \begin{equation}
    f_1 \sqsubseteq f_2 :\Longleftrightarrow \forall x\in X\colon f_1(x) \le_Y f_2(x).
    \label{eqPointwise}
  \end{equation}

  However, it might be possible that not all orders $\sqsubseteq$ on functors
  in the sense of \autoref{def:partialOrder} satisfy \eqref{eqPointwise}.
  \qed
\end{proofappx}

\section{Proof of Section~\ref{sec:negativeResults}}

Recall that an object of the category $\FSim(1,\Subdist(A\times\_))$ 
is a pointed coalgebra, that is, a pair of 
functions of the form
\[
	1 \xrightarrow{~i~} X \xrightarrow{~\xi~} \Subdist(A\times X).
\]
In the following, we will identify $i$ with the element $x_0 \in X$, such 
that the image of $i$ is $\{x_0\}$, and we will denote the pointed 
coalgebra above by $(X,\xi,x_0)$ instead.

\begin{proofappx}{thm:impossibility}
  Assume some $J$ with the equivalence, and note that wlog we can assume that
  $JP$ is reachable.\footnote{if there was such a $J$, then $R\cdot J$ is also
    such a candidate, where $R\colon \M\to \M_{\mathsf{reach}}$ restrict a pointed
    coalgebra to its reachable states.} For $Y= \simplepath{{1,a}/y}$ (implicitly
  defining $\zeta\colon Y\to \Subdist(A\times Y)$ by $\zeta(r)(a,y) = 1$), consider the
  initial morphism $!_Y\colon 0_\M\to Y$, where $0_\M =\simplepath{}$. 
  Here $0_\M$ has only one state and no transition.
  Since
  $!_Y$ is not a proper coalgebra homomorphism, it is not a $J$-open morphism.
  Hence there is some commutative square
  \[
    \begin{tikzcd}
      JP
      \arrow{r}[description,inner sep=2pt]{p}
      \arrow{d}[description,inner sep=2pt]{J\phi}
      & 0_\M
      \arrow{d}[description,inner sep=2pt]{!_Y}
      \\
      JQ \arrow{r}[description,inner sep=2pt]{q}
      & Y
    \end{tikzcd}
  \]
  for which no diagonal lifting exists. Since no diagonal exists, $J\phi $ is no
  isomorphism. Since $p$ is a lax coalgebra homomorphism, $JP$ has no edges and
  hence $JP\cong 0_\M$, $p$ is an iso. This implies that $JQ$ has at least one edge,
  because $JQ\not\cong JP$, and it must be between distinct elements and
  labelled with $a$ because $q$
  is a lax coalgebra homomorphism. Hence, there are $r,z \in JQ$,  with $r\neq z$
  and $r\xrightarrow{w,a}z$, $0< w\le 1$. Define $n = 2\cdot \lceil \frac{1}{w}
  \rceil$ and
  \[
    X = \simplepath{{1/n,a}/x_1,{}/{.},{1/n,a}/x_n},\text{ i.e. }
    X =\{r,x_1,\ldots,x_n\},~
    \xi(r)((a,x_k)) = \frac{1}{n}, 1\le k\le n.
  \]
  We have a proper coalgebra homomorphism $h\colon X\to Y$ given by $h(r)= r$
  and $h(x_k) = y$, $1\le k\le n$. It is a proper coalgebra homomorphism,
  because $\zeta(r)(y) = 1 = \sum_{1\le k \le n} \frac{1}{n} = \sum_{1\le k \le
    n} \xi(r)(x_k)$. By initiality of $0_\M$ and so of $JP$, the following diagram
  commutes:
  \[
    \begin{tikzcd}
      JP
      \arrow{r}[description,inner sep=2pt]{!_X}
      \arrow{d}[description,inner sep=2pt]{J\phi}
      & X
      \arrow{d}[description,inner sep=2pt]{h}
      \\
      JQ \arrow{r}[description,inner sep=2pt]{q}
      & Y
    \end{tikzcd}
  \]
  By the assumed equivalence, $h$ is $J$-open, and thus, there is some
  diagonal lifting $d\colon JQ\to X$. Now, $d$ being a lax coalgebra
  homomorphism leads to a contradiction, because $d$ maps the edge
  $r\xtransto{w,a} z$ in $JQ$ to some $d(r) \xtransto{w',a} d(z)$ with
  $w'\ge w$ in $X$. However, there is no such edge, because every edge in $X$
  has weight $\frac{1}{n}$ which is less than $w'$, which prove the first impossibility.
Now, since $h$ is a proper coalgebra homomorphism, $X$ and $Y$ are 
coalgebraically bisimilar. Let us prove they are not $J$-bisimilar. 
Assume that there is a span $Y \xleftarrow{~f~} Z \xrightarrow{~g~} X$\twnote{} of $J$-open 
morphisms. By a similar argument as earlier, $f$ being open means that there 
is a transition $r' \xtransto{w',a} z'$ in $Z$ with $w' \geq w$. Since $g$ is a 
lax homomorphism, $g$ will map this transition on a transition of the form 
$r \xtransto{w'',a} x_i$ with $w'' \geq w' \geq w$, which is again impossible.\qed
\end{proofappx}

\section{Proofs of Section~\ref{sec:generalOpenMaps}}

 \begin{proofappx}{lem:open-composition}
  Let us prove that they are closed under composition first.
  So start with the following situation
  \[
      \begin{tikzcd}
            Jv
            \arrow{r}[description,inner sep=2pt]{x}
            \arrow{dd}[description,inner sep=2pt]{\JE e}
            & X
            \arrow{d}[description,inner sep=2pt]{f}
            \\
            \phantom{}
            & Y
            \arrow{d}[description,inner sep=2pt]{g}
            \\
            Jw
            \arrow{r}[description,inner sep=2pt]{z}
            & Z
        \end{tikzcd}
  \]
  reorganizing the arrows, we obtain the following square on which we apply the openness of $g$
   \[
     \begin{tikzcd}
            Jv
            \arrow{rr}[description,inner sep=2pt]{f\cdot x}
            \arrow[dashed]{dr}[description,inner sep=2pt]{\JE e'}
            \arrow{dd}[description,inner sep=2pt]{\JE e}
            \descto[pos=0.45]{drr}{\commutes}
            \descto[pos=0.75]{ddrr}{\commutes}
            & & Y
            \arrow{dd}[description,inner sep=2pt]{g}
            \\
            \phantom{}
            \descto[pos=0.5]{r}{\commutes}
            &Ju
            \arrow[dashed]{ur}[description,inner sep=2pt]{d}
            \arrow[dashed]{dl}[description,inner sep=2pt]{\JS s}
            & \phantom{}
            \\
            Jw
            \arrow{rr}[description,inner sep=2pt]{z}
            & & Z
        \end{tikzcd}
  \]
  We then transform the top triangle into a square on which we apply the openness of $f$
   \[
     \begin{tikzcd}
            Jv
            \arrow{rr}[description,inner sep=2pt]{x}
            \arrow[dashed]{dr}[description,inner sep=2pt]{\JE e''}
            \arrow{dd}[description,inner sep=2pt]{\JE e'}
            \descto[pos=0.45]{drr}{\commutes}
            \descto[pos=0.75]{ddrr}{\commutes}
            & & X
            \arrow{dd}[description,inner sep=2pt]{f}
            \\
            \phantom{}
            \descto[pos=0.5]{r}{\commutes}
            &Ju'
            \arrow[dashed]{ur}[description,inner sep=2pt]{d'}
            \arrow[dashed]{dl}[description,inner sep=2pt]{\JS s'}
            & \phantom{}
            \\
            Ju
            \arrow{rr}[description,inner sep=2pt]{d}
            & & Y
        \end{tikzcd}
  \]
  Putting things together, we obtain the desired lifting
   \[
     \begin{tikzcd}
            Jv
            \arrow{rr}[description,inner sep=2pt]{x}
            \arrow[dashed]{dr}[description,inner sep=2pt]{\JE e''}
            \arrow[dashed]{dddr}[description,inner sep=2pt]{\JE e'}
            \arrow{dddd}[description,inner sep=2pt]{\JE e}
            \descto[pos=0.45]{drr}{\commutes}
            & & X
            \arrow{dd}[description,inner sep=2pt]{f}
            \\
            \phantom{}
            \descto[pos=0.4]{drr}{\commutes}
            &Ju'
            \arrow[dashed]{ur}[description,inner sep=2pt]{d'}
            \arrow[dashed]{dd}[description,inner sep=2pt]{\JS s'}
            \descto[pos=0.5]{dr}{\commutes}
            & \phantom{}
            \\
            \phantom{}
            \descto[pos=0.4]{dr}{\commutes}
            & \phantom{}
            & Y
            \arrow{dd}[description,inner sep=2pt]{g}
            \\
            \phantom{}
            &
            Ju
            \arrow[dashed]{ru}[description,inner sep=2pt]{d}
            \arrow[dashed]{dl}[description,inner sep=2pt]{\JS s}
            \descto[pos=0.4]{dr}{\commutes}
            & \phantom{}\\
            Jw
            \arrow{rr}[description,inner sep=2pt]{z}
            & \phantom{}
            & Z
        \end{tikzcd}
        \quad
        \text{that is}
        \quad
        \begin{tikzcd}
            Jv
            \arrow{rr}[description,inner sep=2pt]{x}
            \arrow[dashed]{dr}[description,inner sep=2pt]{\JE e''}
            \arrow{dd}[description,inner sep=2pt]{\JE e}
            \descto[pos=0.45]{drr}{\commutes}
            \descto[pos=0.75]{ddrr}{\commutes}
            & & X
            \arrow{dd}[description,inner sep=2pt]{g\cdot f}
            \\
            \phantom{}
            \descto[pos=0.5]{r}{\commutes}
            &Ju'
            \arrow[dashed]{ur}[description,inner sep=2pt]{d'}
            \arrow[dashed]{dl}[description,inner sep=2pt]{\JS(s\cdot s')}
            & \phantom{}
            \\
            Jw
            \arrow{rr}[description,inner sep=2pt]{z}
            & & Z
        \end{tikzcd}
  \]
  Now, let us prove they are closed under pullback, that is, if the following
	\[
	\begin{tikzcd}
            D
            \pullbackangle[dashed]{-45}
            \arrow{r}[description,inner sep=2pt]{k}
            \arrow{d}[description,inner sep=2pt]{h}
            \descto{dr}{\commutes}
            & C
            \arrow{d}[description,inner sep=2pt]{g}
            \\
            B
            \arrow{r}[description,inner sep=2pt]{f}
            & A
        \end{tikzcd}
	\]
	is a pullback square and $f$ is $(\E,\bbS)$-open, then $k$ is $(\E,\bbS)$-open.
  Start with the following situation:
  \[
  \begin{tikzcd}
            Jv
            \arrow{r}[description,inner sep=2pt]{d}
            \arrow{d}[description,inner sep=2pt]{\JE e}
            \descto{dr}{\commutes}
            & D
            \arrow{d}[description,inner sep=2pt]{k}
            \\
            Jw
            \arrow{r}[description,inner sep=2pt]{c}
            & C
        \end{tikzcd}
  \]
  Composing with the pullback square, we obtain the following on which we 
  apply the openness of $f$
  \[
  \begin{tikzcd}
            Jv
            \arrow{rr}[description,inner sep=2pt]{h\cdot d}
            \arrow[dashed]{dr}[description,inner sep=2pt]{\JE e'}
            \arrow{dd}[description,inner sep=2pt]{\JE e}
            \descto[pos=0.45]{drr}{\commutes}
            \descto[pos=0.75]{ddrr}{\commutes}
            & & B
            \arrow{dd}[description,inner sep=2pt]{f}
            \\
            \phantom{}
            \descto[pos=0.5]{r}{\commutes}
            &Ju
            \arrow[dashed]{ur}[description,inner sep=2pt]{b}
            \arrow[dashed]{dl}[description,inner sep=2pt]{\JS s}
            & \phantom{}
            \\
            Jw
            \arrow{rr}[description,inner sep=2pt]{g\cdot c}
            & & A
        \end{tikzcd}
  \]
  Reorganizing the arrows of the bottom right corner, we obtain the following square
   \[
  \begin{tikzcd}
            Ju
            \arrow{rr}[description,inner sep=2pt]{c\cdot \JS s}
            \arrow{d}[description,inner sep=2pt]{b}
            & \phantom{}
            \descto{d}{\commutes}
            & C
            \arrow{d}[description,inner sep=2pt]{g}
            \\
            B
            \arrow{rr}[description,inner sep=2pt]{f}
            & \phantom{}
            & A
        \end{tikzcd}
  \]
  Applying the universality of the pullback in the assumption, 
  we obtain the existence of a morphism $d'\colon Ju\to D$
  such that
  \[
  	k\cdot d' = c\cdot \JS s\quad\text{and}\quad h\cdot d' = b.
  \]
  Let us prove that the following lifting works
   \[
  \begin{tikzcd}
            Jv
            \arrow{rr}[description,inner sep=2pt]{d}
            \arrow[dashed]{dr}[description,inner sep=2pt]{\JE e'}
            \arrow{dd}[description,inner sep=2pt]{\JE e}
            \descto[pos=0.45]{drr}{\commutes}
            \descto[pos=0.75]{ddrr}{\commutes}
            & & D
            \arrow{dd}[description,inner sep=2pt]{k}
            \\
            \phantom{}
            \descto[pos=0.5]{r}{\commutes}
            &Ju
            \arrow[dashed]{ur}[description,inner sep=2pt]{d'}
            \arrow[dashed]{dl}[description,inner sep=2pt]{\JS s}
            & \phantom{}
            \\
            Jw
            \arrow{rr}[description,inner sep=2pt]{c}
            & & C
        \end{tikzcd}
  \]
  The bottom right corner ($k\cdot d' = c\cdot \JS s$) 
  and the left triangle ($\JS s\cdot \JE e' = \JE e$) are already given.
  Let us prove $d'\cdot \JE e' = d$.
  We already know that
  \[
  	h\cdot d'\cdot \JE e' = b \cdot \JE e' = h\cdot d.
  \]
  and that
  \[
  	k\cdot d'\cdot \JE e' = c \cdot \JS s\cdot \JE e' = c\cdot \JE e = k\cdot d.
  \]
  Then, by universality of the pullback square, $d'\cdot \JE e' = d$.
  \qed
  \end{proofappx}
  
   \begin{proofappx}{thm:equivalenceRelation}
  To finish with transitivity, assume given two spans of open maps as follows (in plain)
  \[
  \begin{tikzcd}
  \phantom{}
  & \phantom{}
  & Z
  \pullbackangle[dashed]{-90}
  \arrow[dashed]{dl}[description,inner sep=2pt]{l}
  \arrow[dashed]{dr}[description,inner sep=2pt]{m}
  & \phantom{}
  & \phantom{}
  \\
  \phantom{}
  & X
  \arrow{dl}[description,inner sep=2pt]{f}
  \arrow{dr}[description,inner sep=2pt]{g}
  & \phantom{}
  & Y
  \arrow{dl}[description,inner sep=2pt]{h}
  \arrow{dr}[description,inner sep=2pt]{k}
  & \phantom{}
  \\
  A
  & \phantom{}
  & B
  & \phantom{}
  & C
  \end{tikzcd}
\]
  Forming the pullback of $g$ along $h$ (dashed), which are both open, and using Lemma~\ref{lem:open-composition}, 
  $l$ and $m$ are also open. Using Lemma~\ref{lem:open-composition} again, $f\cdot l$ and $k\cdot m$ are open, 
  and we have a span of open maps between $A$ and $C$.
  \qed
  \end{proofappx}

  \begin{proofappx}{thm:ES-implies-strong-path}
 Let us prove the three properties of the definition.
 \begin{itemize}
 	\item \textbf{initial condition:} By the universal property of the initial object $J0$
	the following diagram commutes:
	\[
	\begin{tikzcd}
		\phantom{}
		&\phantom{}
		& J0
		\arrow{lld}[description,inner sep=2pt]{!_A}
		\arrow{rrd}[description,inner sep=2pt]{!_B}
		\arrow[dashed]{d}[description,inner sep=2pt]{!_C}
		\descto[pos=0.6]{dl}{\commutes}
		\descto[pos=0.6]{dr}{\commutes}
		& \phantom{}
		& \phantom{}\\
		A
		&\phantom{}
		&C
		\arrow{ll}[description,inner sep=2pt]{f}
		\arrow{rr}[description,inner sep=2pt]{g}
		&\phantom{}
		& B
       	\end{tikzcd}
    	\]
	\item \textbf{forward closure:} Assume given the following situation:
	\[
	\begin{tikzcd}
		Jw
		\arrow{dd}[description,inner sep=2pt]{a'}
		& \phantom{}
		& \phantom{}
		& \phantom{}
		& \phantom{}
		\\
		\phantom{}
		\descto[pos=0.5]{rr}{\commutes}
		& \phantom{}
		& Jv
		\arrow{llu}[description,inner sep=2pt]{\JE e}
		\arrow{lld}[description,inner sep=2pt]{a}
		\arrow{rrd}[description,inner sep=2pt]{b}
		\arrow{d}[description,inner sep=2pt]{c}
		\descto[pos=0.6]{dl}{\commutes}
		\descto[pos=0.6]{dr}{\commutes}
		& \phantom{}
		& \phantom{}
		\\
		A
		& \phantom{}
		& C
		\arrow{ll}[description,inner sep=2pt]{f}
		\arrow{rr}[description,inner sep=2pt]{g}
		&\phantom{}
		& B
       	\end{tikzcd}
    	\]
	Making the two triangles on the left into a square and applying 
	the openness of $f$, we obtain:
   \[
  \begin{tikzcd}
            Jv
            \arrow{rr}[description,inner sep=2pt]{c}
            \arrow[dashed]{dr}[description,inner sep=2pt]{\JE e'}
            \arrow{dd}[description,inner sep=2pt]{\JE e}
            \descto[pos=0.45]{drr}{\commutes}
            \descto[pos=0.75]{ddrr}{\commutes}
            & & C
            \arrow{dd}[description,inner sep=2pt]{f}
            \\
            \phantom{}
            \descto[pos=0.5]{r}{\commutes}
            &Ju
            \arrow[dashed]{ur}[description,inner sep=2pt]{c'}
            \arrow[dashed]{dl}[description,inner sep=2pt]{\JS s}
            & \phantom{}
            \\
            Jw
            \arrow{rr}[description,inner sep=2pt]{a'}
            & & A
        \end{tikzcd}
  \]
	We then obtain:
	\[
	\begin{tikzcd}
		Jw
		\arrow{dd}[description,inner sep=2pt]{a'}
		& \phantom{}
		& \phantom{}
		\descto[pos=0.5]{d}{\commutes}
		& \phantom{}
		& 
		Ju
		\arrow{llll}[description,inner sep=2pt]{\JS s}
		\arrow{dd}[description,inner sep=2pt]{g\cdot c'}
		\\
		\phantom{}
		\descto[pos=0.4]{rr}{\commutes}
		& \phantom{}
		& Jv
		\arrow{llu}[description,inner sep=2pt]{\JE e}
		\arrow{rru}[description,inner sep=2pt]{\JE e'}
		\arrow{lld}[description,inner sep=2pt]{a}
		\arrow{rrd}[description,inner sep=2pt]{b}
		\arrow{d}[description,inner sep=2pt]{c}
		\descto[pos=0.6]{dl}{\commutes}
		\descto[pos=0.6]{dr}{\commutes}
		\descto[pos=0.6]{rr}{\commutes}
		& \phantom{}
		& \phantom{}
		\\
		A
		& \phantom{}
		& C
		\arrow{ll}[description,inner sep=2pt]{f}
		\arrow{rr}[description,inner sep=2pt]{g}
		&\phantom{}
		& B
       	\end{tikzcd}
    	\]
	\item \textbf{backward closure:} obvious.
\end{itemize}
 \qed
\end{proofappx}

\begin{proofappx}{prop:generators}
It is enough to prove the following.
\begin{enumerate}
	\item Assume that we have the following situation:
   \[
  \begin{tikzcd}
            Jv
            \arrow{rr}[description,inner sep=2pt]{b}
            \arrow{d}[description,inner sep=2pt]{\JE e}
            \descto[pos=0.5]{ddrr}{\commutes}
            & & B
            \arrow{dd}[description,inner sep=2pt]{f}
            \\
            Jw
            \arrow{d}[description,inner sep=2pt]{\JE e'}
            & \phantom{}
            & \phantom{}
            \\
            Jw'
            \arrow{rr}[description,inner sep=2pt]{a}
            & & A
        \end{tikzcd}
  \]
 where $e, e' \in \E'$. Applying the assumption on the following situation (in plain):
    \[
  \begin{tikzcd}
            Jv
            \arrow{rr}[description,inner sep=2pt]{b}
            \arrow[dashed]{dr}[description,inner sep=2pt]{\JE e''}
            \arrow{dd}[description,inner sep=2pt]{\JE e}
            \descto[pos=0.45]{drr}{\commutes}
            \descto[pos=0.75]{ddrr}{\commutes}
            & & B
            \arrow{dd}[description,inner sep=2pt]{f}
            \\
            \phantom{}
            \descto[pos=0.5]{r}{\commutes}
            &Ju
            \arrow[dashed]{ur}[description,inner sep=2pt]{b'}
            \arrow[dashed]{dl}[description,inner sep=2pt]{\JS s}
            & \phantom{}
            \\
            Jw
            \arrow{rr}[description,inner sep=2pt]{a\cdot\JE e'}
            & & A
        \end{tikzcd}
  \]
  we obtain the dashed data with $e'' \in \E'$. 
  By assumption, there are $\widetilde{e}' \in \E'$ and $\widetilde{s} \in \bbS$ 
  such that
  \[
  	\JE e'\cdot\JS s = \JS \widetilde{s} \cdot \JE \widetilde{e}'.
  \]
  Rearranging the bottom-right triangle, we obtain the following situation (in plain)
    \[
  \begin{tikzcd}
            Ju
            \arrow{rr}[description,inner sep=2pt]{b'}
            \arrow[dashed]{dr}[description,inner sep=2pt]{\JE e'''}
            \arrow{dd}[description,inner sep=2pt]{\JE\widetilde{e}'}
            \descto[pos=0.45]{drr}{\commutes}
            \descto[pos=0.75]{ddrr}{\commutes}
            & & B
            \arrow{dd}[description,inner sep=2pt]{f}
            \\
            \phantom{}
            \descto[pos=0.5]{r}{\commutes}
            &J\widetilde{w}
            \arrow[dashed]{ur}[description,inner sep=2pt]{b''}
            \arrow[dashed]{dl}[description,inner sep=2pt]{\JS s'}
            & \phantom{}
            \\
            Jw
            \arrow{rr}[description,inner sep=2pt]{a\cdot\JS\widetilde{s}}
            & & A
        \end{tikzcd}
  \]
  and applying the assumption, we obtain the dashed data with $e''' \in \E'$.
  In total, we have obtained:
    \[
  \begin{tikzcd}
            Jv
            \arrow{rr}[description,inner sep=2pt]{b}
            \arrow[dashed]{dr}[description,inner sep=2pt]{\JE(e'''\cdot e'')}
            \arrow{dd}[description,inner sep=2pt]{\JE(e'\cdot e)}
            \descto[pos=0.45]{drr}{\commutes}
            \descto[pos=0.75]{ddrr}{\commutes}
            & & B
            \arrow{dd}[description,inner sep=2pt]{f}
            \\
            \phantom{}
            \descto[pos=0.5]{r}{\commutes}
            &J\widetilde{w}
            \arrow[dashed]{ur}[description,inner sep=2pt]{b''}
            \arrow[dashed]{dl}[description,inner sep=2pt]{\JS(\widetilde{s}\cdot s')}
            & \phantom{}
            \\
            Jw
            \arrow{rr}[description,inner sep=2pt]{a}
            & & A
        \end{tikzcd}
  \]
  with $e'''\cdot e'' \in \E$ and $\widetilde{s}\cdot s' \in \bbS$.
	\item Assume that we have a set $R$ of spans, and that we have the following 
	situation:
	\[
	\begin{tikzcd}
		Jw'
		\arrow{ddd}[description,inner sep=2pt]{a'}
		& \phantom{}
		& \phantom{}
		& \phantom{}
		& \phantom{}\\
		\phantom{}
		& Jw
		\arrow{lu}[description,inner sep=2pt]{\JE e'}
		\descto[pos=0.5]{ddl}{\commutes}
		& \phantom{}
		& \phantom{}
		& \phantom{}\\
		\phantom{}
		& \phantom{}
		& Jv
		\arrow{lu}[description,inner sep=2pt]{\JE e}
		\arrow{lld}[description,inner sep=2pt]{a}
		\arrow{rrd}[description,inner sep=2pt]{b}
		\descto[pos=0.5]{d}{\ensuremath{\in R}\xspace}
		& \phantom{}
		& \phantom{}\\
		A
		& \phantom{}
		&\phantom{}
		& \phantom{}
		& B
       	\end{tikzcd}
    	\]
	with $e, e' \in \E'$.
	Applying the assumption of the following situation (in plain):
	\[
	\begin{tikzcd}
		Jw
		\arrow{dd}[description,inner sep=2pt]{a'\cdot\JE e'}
		& \phantom{}
		\descto[pos=0.5]{d}{\commutes}
		& Ju
		\arrow[dashed]{ll}[description,inner sep=2pt]{\JS s}
		\arrow[dashed]{dd}[description,inner sep=2pt]{b'}
		&\phantom{}\\
		\phantom{}
		\descto[pos=0.5]{r}{\commutes}
		& Jv
		\arrow{lu}[description,inner sep=2pt]{\JE e}
		\arrow[dashed]{ru}[description,inner sep=2pt]{\JE e''}
		\arrow{ld}[description,inner sep=2pt]{a}
		\arrow{rd}[description,inner sep=2pt]{b}
		\descto[pos=0.5]{r}{\commutes}
		\descto[pos=0.5]{d}{\ensuremath{\in R}\xspace}
		& \phantom{}
		& \in R\\
		A
		&\phantom{}
		& B
		&\phantom{}
       	\end{tikzcd}
    	\]
	we obtain the dashed data with $e' \in \E'$ and 
	$A \xleftarrow{~a'' =\, a'\cdot\JE e'\cdot \JS s~} Ju \xrightarrow{~b'~} B \in R$.
	By assumption, there are $\widetilde{e}' \in \E'$ and $\widetilde{s} \in \bbS$ 
  such that
  \[
  	\JE e'\cdot\JS s = \JS \widetilde{s} \cdot \JE \widetilde{e}'.
  \]
  Rearranging this data, we have the following situation (in plain):
	\[
	\begin{tikzcd}
		J\widetilde{w}
		\arrow{dd}[description,inner sep=2pt]{a'\cdot\JS\widetilde{s}}
		& \phantom{}
		\descto[pos=0.5]{d}{\commutes}
		& Ju'
		\arrow[dashed]{ll}[description,inner sep=2pt]{\JS s'}
		\arrow[dashed]{dd}[description,inner sep=2pt]{b''}
		&\phantom{}\\
		\phantom{}
		\descto[pos=0.5]{r}{\commutes}
		& Ju
		\arrow{lu}[description,inner sep=2pt]{\JE\widetilde{e}'}
		\arrow[dashed]{ru}[description,inner sep=2pt]{\JE e'''}
		\arrow{ld}[description,inner sep=2pt]{a''}
		\arrow{rd}[description,inner sep=2pt]{b'}
		\descto[pos=0.5]{r}{\commutes}
		\descto[pos=0.5]{d}{\ensuremath{\in R}\xspace}
		& \phantom{}
		& \in R\\
		A
		&\phantom{}
		& B
		&\phantom{}
       	\end{tikzcd}
    	\]
	with $e''' \in \E'$ and 
	$A \xleftarrow{~a'\cdot\JS\widetilde{s}\cdot \JS s'~} Ju' \xrightarrow{~b''~} B \in R$.
	In total, we have the following situation:
	\[
	\begin{tikzcd}
		Jw'
		\arrow{dd}[description,inner sep=2pt]{a'}
		& \phantom{}
		\descto[pos=0.5]{d}{\commutes}
		& Ju'
		\arrow[dashed]{ll}[description,inner sep=2pt]{\JS(\widetilde{s}\cdot s')}
		\arrow[dashed]{dd}[description,inner sep=2pt]{b''}
		&\phantom{}\\
		\phantom{}
		\descto[pos=0.5]{r}{\commutes}
		& Jv
		\arrow{lu}[description,inner sep=2pt]{\JE(e'\cdot e)}
		\arrow[dashed]{ru}[description,inner sep=2pt]{\JE(e'''\cdot e'')}
		\arrow{ld}[description,inner sep=2pt]{a}
		\arrow{rd}[description,inner sep=2pt]{b}
		\descto[pos=0.5]{r}{\commutes}
		\descto[pos=0.5]{d}{\ensuremath{\in R}\xspace}
		& \phantom{}
		& \in R\\
		A
		&\phantom{}
		& B
		&\phantom{}
       	\end{tikzcd}
    	\]
	with $e'''\cdot e'' \in \E$, $\widetilde{s}\cdot s' \in \bbS$, and 
	$A \xleftarrow{~a'\cdot\JS(\widetilde{s}\cdot s')~} Ju' \xrightarrow{~b''~} B \in R$.
	\item The analogue statement for the backward closure is obvious.
\end{enumerate}
\qed
\end{proofappx}

\section{Proofs of Section~\ref{sec:open-weighted}}
 
 \begin{proofappx}{lem:orderMonoidFunctor}
Assume given $f_1, f_2\colon X\to \weighted{(K,+,e)}{A\times Y}$, 
$g\colon X'\to X$, and $h\colon Y\to Y'$ with $f_1 \sqsubseteq f_2$.
Let us prove that 
$\weighted{(K,+,e)}{A\times h}\cdot f_1\cdot g \sqsubseteq 
	\weighted{(K,+,e)}{A\times h}\cdot f_2\cdot g$, 
that is, for all $x'$, $y'$, $a$
\[
	\sum\limits_{\{y \mid h(y) = y'\}} f_1(g(x'))(a,y) \sqsubseteq 
		\sum\limits_{\{y \mid h(y) = y'\}} f_2(g(x'))(a,y).
\]
If $\{y \mid h(y) = y'\}$ is empty, this means proving $e \sqsubseteq e$, which is true by reflexivity.
Otherwise, for all $y$ in this set, 
\[
	f_1(g(x'))(a,y) \sqsubseteq f_2(g(x'))(a,y)
\]
by assumption.
Then, we conclude by monotonicity of $+$.
\qed
\end{proofappx}

\begin{proofappx}{lem:rearrangementMonoids}
For $x = (x_1, \ldots, x_d)$ and $y = (y_1,\ldots,y_d)$ in $\mathbb{R}^d$, we will
denote \[(\min(x_1,y_1),\ldots,\min(x_d,y_d))\] by $x \sqcap y$.
Let us prove it for $(G_{\geq 0},+,0,\leq)$ first.
Define 
\[
	K_x = \{(i,k) \mid x_i = (x_{i,1}, \ldots, x_{i,p}) \wedge x_{i,k} > 0\}
\] 
and $k_x$ the cardinal of $K_x$. We define $K_y$ and $k_y$ similarly.
Let us prove it by induction on $k_x+k_y$.
\begin{itemize}
	\item	if $k_x = 0$, then $u_{i,j} = 0$ works.
	\item if $k_y = 0$, then $\sum\limits_{i=1}^n x_i = 0$, and since $x_i \geq 0$ for all $i$, 
		then $x_i = 0$ for all $i$. This means that $k_x = 0$ and we are back to the previous case.
	\item if $k_x > 0$ and $k_y > 0$, then let $i_0$ and $k$ such that $x_{i_0,k} > 0$. Since 
		$\sum\limits_{i=1}^n x_i  \leq
		\sum\limits_{j=1}^m y_j$, 
		then there exists $j_0$ such that $y_{j_0,k} > 0$.
		By assumption, $z_0 = x_{i_0,k} \sqcap y_{j_0,k}$, 
		$x_{i_0}' =  x_{i_0} - z_0$, and $y_{j_0}' =  y_{j_0} - z_0$ 
		belong to $G_{\geq 0}$. 
		For $i \neq i_0$, define $x_i' = x_i$, 
		and for $j \neq j_0$, define $y_j' = y_j$. Then we have 
		$\sum\limits_{i=1}^n x_i'  \leq
			\sum\limits_{j=1}^m y_j'$.
		By construction, if $x_{i_0,l} = 0$ then $x_{i_0,l}' = 0$, 
		if $y_{j_0,l} = 0$ then $y_{j_0,l}' = 0$, and  
		($x_{i_0,k}' = 0$ or $y_{j_0,k}' = 0$), so then $k_x'+k_y' < k_x+k_y$.
		By induction hypothesis, there is a family $(u_{i,j}')$ in $G_{\geq 0}$ such that
		\[
			\text{for all }j,\,\sum\limits_{i=1}^n u_{i,j}' \leq y_j' ~\text{and}~ 
			\text{for all }i,\,\sum\limits_{j=1}^m u_{i,j}' = x_i'.
		\]
		Define $u_{i,j} = u_{i,j}'$ if $i \neq i_0$ and $j \neq j_0$, $u_{i_0,j_0} + z_0$ otherwise.
\end{itemize}

For $(G,+,0,\leq)$, we reduce the problem $\sum\limits_{i=1}^n x_i  \leq \sum\limits_{j=1}^m y_j$
in $G$ to a problem in $G_{\geq 0}$ as follows.
First, for $x \in G$, define $x^- = -(0 \sqcap x)$ and $x^+ = x + x^-$.
By assumption, $x^-$ and $x^+$ belong to $G_{\geq 0}$ and $x = x^+ - x^-$.
Then, defining $x_i' = x_i^+$ if $i \leq n$, $y_{i-n}^-$ otherwise, and 
$y_j' = y_j^+$ if $j \leq m$, $x_{j-m}^-$ otherwise, we have the following problem in $G_{\geq 0}$
\[
	\sum\limits_{i=1}^{n+m} x_i'  \leq \sum\limits_{j=1}^{m+n} y_j'.
\]
By the previous case, we have $(u_{i,j}')$ in $G_{\geq 0}$ such that 
\begin{itemize}
	\item for $i \leq n$, $\sum\limits_{j=1}^{m+n} u_{i,j} = x_i^+$,
	\item for $i > n$, $\sum\limits_{j=1}^{m+n} u_{i,j} = y_{i-n}^-$,
	\item for $j \leq m$, $\sum\limits_{i=1}^{n+m} u_{i,j} \leq y_j^+$, and
	\item for $j > m$, $\sum\limits_{i=1}^{n+m} u_{i,j} \leq x_{j-m}^-$.
\end{itemize}
Define 
\[
\begin{tabular}{m{3em}|rlr}
	$u_{i,j}$ = 
		& $u'_{i,j} $
		& $- u'_{n+j,m+i}$ 
		& \\
		&
		&
		& if $i > 1$ and $j > 1$
		\\
		& $u'_{1,j}$
		& $- u'_{n+j,m+1}$ 
		& \\
		&
		& $+ \sum\limits_{j'=1}^{m} \left(u'_{n+j',j} - u'_{n+j,j'}\right)$
		& if $i=1$ and $j>1$
		\\
		& $u'_{i,1}$ 
		& $- u'_{n+1,m+i} + \sum\limits_{i'=1}^{n} u'_{i,m+i'}$ 
		& \\
		&
		& $-\left(x_i^- - \sum\limits_{j'=1}^{m} u'_{n+j',m+i}\right)$
		&if $i > 1$ and $j = 1$
		\\
		& $u'_{1,1}$ 
		& $- u'_{n+1,m+1} 
			+ \sum\limits_{j'=1}^{m} \left(u'_{n+j',1} - u'_{n+1,j'}\right)$
		& \\
		& 
		& $+ \sum\limits_{i'=1}^{n} u'_{1,m+i'} -
			\left(x_1^- - \sum\limits_{j'=1}^{m} u'_{n+j',m+1}\right)$ 
		&\quad\quad if $i = 1$ and $j = 1$
\end{tabular}
\]
Let us check that this is a solution of the original problem. First, all the $u_{i,j}$ 
belong to $G$ by assumption. Next, for $i > 1$
\[
	\sum\limits_{j=1}^{m} u_{i,j} = \sum\limits_{j=1}^{m+n} u'_{i,j} - x_i^- = x_i^+-x_i^- = x_i.
\]	
For $j > 1$
\[
	\sum\limits_{i=1}^{n} u_{i,j} = \sum\limits_{i=1}^{n+m} u'_{i,j} - \sum\limits_{j'=1}^{m+n} u'_{n+j,j'} 
		\leq y_j^+ - y_j^- = y_j.
\]
For $i = 1$
\[
	\sum\limits_{j=1}^{m} u_{1,j} = x_1 + \sum\limits_{j=1}^{m}\sum\limits_{j'=1}^{m} u'_{n+j,j'}
		- \sum\limits_{j=1}^{m}\sum\limits_{j'=1}^{m} u'_{n+j',j} = x_1.
\]
For $j = 1$
\[
	\sum\limits_{i=1}^{n} u_{i,1} \leq y_1 + \sum\limits_{i=1}^{n}\sum\limits_{i'=1}^{n} u'_{i,m+i'}
		- \sum\limits_{i=1}^{n}\left(x_i^- - \sum\limits_{j'=1}^{m} u'_{n+j',m+i}\right).
\]
We know that for all $i$, $x_i^- - \sum\limits_{j'=1}^{m} u'_{n+j',m+i} \geq \sum\limits_{i'=1}^{n} u'_{i',m+i}$, which means
\[
	\sum\limits_{i=1}^{n} u_{i,1} \leq y_1 + \sum\limits_{i=1}^{n}\sum\limits_{i'=1}^{n} u'_{i,m+i'}
		- \sum\limits_{i=1}^{n}\sum\limits_{i'=1}^{n} u'_{i',m+i} = y_1.
\]
For strictness, let us prove something stronger, namely that if $K$ is a 
rearrangement monoid and $+$ is strictly monotone (which is the case in $\mathbb{R}^d$), 
then it is strict.
Indeed, assume $\sum\limits_{i=1}^n x_i  = \sum\limits_{j=1}^m y_j$, 
then we have by rearrangement 
that $\sum\limits_{i=1}^n u_{i,j}  \leq y_j$ and $\sum\limits_{j=1}^m u_{i,j} = x_i$, so
\[
	\sum\limits_{j=1}^m\sum\limits_{i=1}^n u_{i,j} = \sum\limits_{i=1}^n x_i 
		= \sum\limits_{j=1}^m y_j.
\]
Since the sum is strictly monotone, if $\sum\limits_{i=1}^n u_{i,j}  < y_j$ for some 
$j$ then, $\sum\limits_{j=1}^m\sum\limits_{i=1}^n u_{i,j} < \sum\limits_{j=1}^m y_j$ 
which is a contradiction. Consequently, $\sum\limits_{i=1}^n u_{i,j}  = y_j$ for all j.

Assume given a distributive lattice with bottom element.
Define $u_{i,j} = x_i \sqcap y_j$.
For all $j$, $u_{i,j} \leq y_j$, and thus $\bigsqcup\limits_{i=1}^n u_{i,j} \leq y_j$, 
which is the first condition.
By the assumption, for all $i$
\[
	x_i \leq \bigsqcup\limits_{j=1}^m y_j.
\]
But since $L$ is distributive
\[
	x_i = x_i \sqcap \left(\bigsqcup\limits_{j=1}^m y_j\right) = 
		\bigsqcup\limits_{j=1}^m \left(x_i\sqcap y_j\right) = \bigsqcup\limits_{j=1}^m u_{i,j},
\]
which is the second condition.

If furthermore $\sum\limits_{i=1}^n x_i  = \sum\limits_{j=1}^m y_j$,
then we can prove symmetrically, using distributivity, that $y_j = \bigsqcup\limits_{i=1}^n u_{i,j}$
which proves strictness.
Conversely, assume that $L$ is a rearrangement monoid.
We then want to prove that for all $x$, $y_1$, $y_2$, 
\[
	x \sqcap (y_1 \sqcup y_2) = (x \sqcap y_1) \sqcup (x \sqcap y_2).
\]
Let us start proving it when $x \leq y_1 \sqcup y_2$.
In that case, we can use the rearrangement condition and obtain 
$u_1$ and $u_2$ such that $x = u_1 \sqcup u_2$ and 
$u_j \sqcup y_j$. This implies that $u_j \leq x \sqcap y_j$, so then:
\[
	x \sqcap (y_1 \sqcup y_2) = x = u_1 \sqcup u_2 
		\leq (x \sqcap y_1) \sqcup (x \sqcap y_2) \leq x,
\]
which implies what we want.
Now, let us assume any $x$, $y_1$, $y_2$.
Apply the previous case with $x \sqcap (y_1 \sqcup y_2)$, $y_1$, $y_2$.
We then obtain 
\begin{align*}
	x \sqcap (y_1 \sqcup y_2) 
		& = (x \sqcap (y_1 \sqcup y_2)) \sqcap (y_1 \sqcup y_2)\\
		& = (x \sqcap (y_1 \sqcup y_2) \sqcap y_1) \sqcup (x \sqcap (y_1 \sqcup y_2) \sqcap y_2)\\
		& = (x \sqcap y_1) \sqcup (x \sqcap y_2).
\end{align*}
\qed
\end{proofappx}

\begin{proofappx}{thm:weightedOpen}
Assume that $f\colon (X,c,i)\to (Y,d,j)$ is $(\E_K,\bbS_K)$-open, $X$ is reachable, 
and $K$ is positive.
Proving that $f$ is a proper homomorphism means proving
  \[
    \begin{tikzcd}
      1
      \arrow{r}[description,inner sep=2pt]{i}
      \arrow[bend right=40]{rd}[description,inner sep=2pt]{j}
      \descto{dr}{$\commutes$}
      & X
      \arrow{r}[description,inner sep=2pt]{c}
      \arrow{d}[description,inner sep=2pt]{f}
      \descto{dr}{\commutes}
      & \weighted{K}{A\times X}
      \arrow{d}[description,inner sep=2pt]{\weighted{K}{A\times f}}
      \\
      & Y
      \arrow{r}[description,inner sep=2pt]{d}
      & \weighted{K}{A\times Y}
    \end{tikzcd}
  \]
  The left triangle is given by the fact that $f$ is a lax homomorphism.
  The right square means that for all $x \in X$, for all $y \in Y$, and for $a \in A$,
  \[
  	d(f(x))(a,y) = \sum\limits_{\{x' \mid f(x') = y\}} c(x)(a,x').
  \]
  Since $f$ is a lax homomorphism, we already know that 
  \[
  	d(f(x))(a,y) \sqsupseteq \sum\limits_{\{x' \mid f(x') = y\}} c(x)(a,x').
  \]
  and since $\sqsubseteq$ is antisymmetric, it is enough to prove the other inequality.
  If $d(f(x))(a,y) = 0$, then by positivity, we are done.
  Let us assume given $x$, $y$, and $a$ such that $d(f(x))(a,y) > 0$.
  Since $x$ is reachable, there is a morphism $\alpha\colon Jw\to X$ where $w$ is a word
  $(a_1,k_1),\ldots,(a_n,k_n)$
  on $A\times (K\setminus\{e\})$ and $\text{end}(\alpha) = x$.
  Consider $v = (w,a,d(f(x))(a,y))$ so that there is an edge $e\colon w\to v$ in $\E$.
  There is a lax homomorphism $\beta\colon Jv\to Y$ defined as $\beta(i) = f\cdot\alpha(i)$ for 
  $i \leq n$ and $\beta(n+1,1) = y$.
  So we have the following situation (in plain)
   \[
  \begin{tikzcd}
            Jw
            \arrow{rr}[description,inner sep=2pt]{\alpha}
            \arrow[dashed]{dr}[description,inner sep=2pt]{\JE e}
            \arrow{dd}[description,inner sep=2pt]{\JE e}
            \descto[pos=0.45]{drr}{\commutes}
            \descto[pos=0.75]{ddrr}{\commutes}
            & & (X,c,i)
            \arrow{dd}[description,inner sep=2pt]{f}
            \\
            \phantom{}
            \descto[pos=0.5]{r}{\commutes}
            &Ju
            \arrow[dashed]{ur}[description,inner sep=2pt]{\alpha'}
            \arrow[dashed]{dl}[description,inner sep=2pt]{\JS s}
            & \phantom{}
            \\
            Jv
            \arrow{rr}[description,inner sep=2pt]{\beta}
            & & (Y,d,j)
        \end{tikzcd}
  \]
  Since $f$ is $(\E_K,\bbS_K)$-open, then we obtain that dashed data.
  In particular, $u = (w,a,w_2)$ with $w_2 = l_1,\ldots,l_m$ with $m \geq 1$ and 
  \[
  	d(f(x))(a,y) = \sum\limits_{i = 1}^{m} l_i.
  \]
  Furthermore, for all $i$, $\beta\cdot \JS s(n+1,i) = y$.
  Since $\alpha'$ is a lax morphism, for all $i$
  \[
  	c(x)(a,i) \sqsupseteq \sum\limits_{\{i' \mid \alpha'(i') = i\}} l_{i'}.
  \]
  Consequently, since for all $i$, $f\cdot\alpha'(n+1,i) = \beta\cdot \JS s(n+1,i) = y$
  \[
  	d(f(x))(a,y) 
		= \sum\limits_{i = 1}^{m} l_i 
		\sqsubseteq \sum\limits_{\{x' \mid \exists i.\, \alpha'(n+1,i) = x'\}} c(x)(a,x')
		\sqsubseteq \sum\limits_{\{x' \mid f(x') = y\}} c(x)(a,x'),
  \]
  the last inequality being by positivity.
  
  Conversely, assume that $f$ is a proper homomorphism and that $K$ is a rearrangement monoid.
  Let us prove that $f$ is $(\E_K,\bbS_K)$-open. So we start with the following situation (in plain)
   \[
  \begin{tikzcd}
            Jw
            \arrow{rr}[description,inner sep=2pt]{\alpha}
            \arrow[dashed]{dr}[description,inner sep=2pt]{\JE e}
            \arrow{dd}[description,inner sep=2pt]{\JE e}
            \descto[pos=0.45]{drr}{\commutes}
            \descto[pos=0.75]{ddrr}{\commutes}
            & & (X,c,i)
            \arrow{dd}[description,inner sep=2pt]{f}
            \\
            \phantom{}
            \descto[pos=0.5]{r}{\commutes}
            &Ju
            \arrow[dashed]{ur}[description,inner sep=2pt]{\alpha'}
            \arrow[dashed]{dl}[description,inner sep=2pt]{\JS s}
            & \phantom{}
            \\
            Jv
            \arrow{rr}[description,inner sep=2pt]{\beta}
            & & (Y,d,j)
        \end{tikzcd}
  \]
  with $w=(a_1,k_1),\ldots,(a_n,k_n)$ being a word on $A\times K\setminus\{e\}$ 
  and $v$ being of the form 
  $(w,a,w_2)$ with $w_2 = l_1,\dots,l_m$ a word on $K\setminus\{e\}$.
  Let us call $x = \text{end}(\alpha)$ and $L = \{\beta(n+1,i) \mid 1 \leq i \leq m\} \subseteq Y$.
  For $y$ in $L$, define $V_y = \{i \mid \beta(n+1,i) = y$.
  Since $\beta$ is a lax homomorphism, 
  \[
  	d(f(x))(a,y) \sqsupseteq \sum\limits_{i \in V_y} l_i.
  \]
  Denote the set $\{x' \in X \mid f(x') = y \wedge c(x)(a,x') \neq e\}$ by $X_y$. 
  Since $f$ is a proper homomorphism,
  \[
  	d(f(x))(a,y) = \sum\limits_{x' \in X_y} c(x)(a,x').
  \]
  Since $V_y$ and $X_y$ are finite subsets of $K$, then since $K$ is a rearrangement 
  monoid, there is a family $(u_{i,x'})_{i \in V_y,x' \in X_y}$ such that
  \[
  	\sum\limits_{x' \in X_y} u_{i,x'} = l_i \text{ and } 
		\sum\limits_{i \in V_y} u_{i,x'} \sqsubseteq c(x)(a,x').
  \]
  Define $U_i = \{x' \in X_y \mid u_{i,x'} \neq e\} = \{x_1,\ldots,x_{m_i}\}$ for $i \in V_y$, 
  and $m_i$ its cardinal. 
  Since $l_i \neq e$, $U_i$ is non-empty.
  Now, define $u = (w,a,w_2')$ with $w_2' = l_{1,1},\ldots,l_{1,m},\ldots,l_{m,1},\ldots,l_{m,m_m}$
  with $l_{i,j} = u_{i,x_j} \neq e$.
  There is an edge $e\colon w\to u$ in $\E$.
  Define $s\colon\{1,\ldots,\sum\limits_{i = 1}^{m} m_i\}\to\{1,\ldots,m\}$ as
  $s(k) = j$ if $\sum\limits_{i = 1}^{j-1} m_i < k \leq \sum\limits_{i = 1}^{j} m_i$.
  This function is monotone (easy) and surjective (since $U_i$ is non-empty).
  Now for any $j$
  \[
  	\sum\limits_{\left\{k \mid s\left(k+\sum_{i = 1}^{j-1} m_i\right) = j\right\}} l_{j,k}
		= \sum\limits_{k=1}^{m_j} l_{j,k}
		= \sum\limits_{k=1}^{m_j} u_{i,x_k}
		= l_j,
  \]
  so that $s$ is a morphism from $v$ to $u$ in $\bbS$.
  It is clear that $\JS s\cdot \JE e = \JE e$.
  
  Define $\alpha'\colon Ju\to X$ as $\alpha'(i) = \alpha(i)$ for $i \leq n$, 
  and $\alpha'(n+1,k) = x_s$ for $\sum\limits_{i = 1}^{j-1} m_i < k \leq \sum\limits_{i = 1}^{j} m_i$
  and $s = k - \sum\limits_{i = 1}^{j-1} m_i$.
  By construction, $\alpha'$ is a lax morphism and $\alpha'\cdot \JE e$.
  Then, for $k$ with $s(k) \in V_y$, 	
  $f\cdot\alpha'(n+1,k) = f(x')$ with $x' \in X_y$ and 
  \[
  	\beta\cdot \JS s(n+1,k) = y = f\cdot\alpha'(n+1,k).
  \]
  Furthermore, since $f\cdot\alpha = \beta\cdot \JE e$, we obtain that 
  \[
  	\beta\cdot \JS s = f\cdot \alpha',
  \]
  which concludes that $f$ is $(\E_K,\bbS_K)$-open.
  
  For the last statement, by the previous two points, the only thing to prove is 
  that when there is a span 
  $(X,c,i) \xleftarrow{~f~} (Z,e,k) \xrightarrow{~g~} (Y,d,j)$ 
  of $(\E_K,\bbS_K)$-open maps, then there is one where 
  $(Z,e,k)$ is reachable.
  
  Define $(\widetilde{Z},\widetilde{e},k)$ with $\widetilde{Z}$ being the set of states 
  of $Z$ that are reachable. In particular, $k$ is reachable.
  Define $\widetilde{e}(z_1)(a,z_2) = e(z_1)(a,z_2)$ for $z_1$, $z_2$ reachable.
  It is enough to prove that the injection $\iota\colon\widetilde{Z}\to Z$ is 
  a proper morphism. Indeed, in that case, by what precedes, $\iota$ is 
  $(\E_K,\bbS_K)$-open, and since open maps are closed under composition by 
  Lemma~\ref{lem:open-composition}, then 
  \[
  	(X,c,i) \xleftarrow{~f\cdot\iota~} (\widetilde{Z},\widetilde{e},k) 
		\xrightarrow{~g\cdot\iota~} (Y,d,j)
  \]
  is a span of open maps whose tip is reachable.
  Now, $\iota$ is a proper morphism for the following reason.
  For $z$ reachable and $z'$ non-reachable,
  $e(z_1)(a,z_2) = 0$ for any $a$, otherwise $z'$ would be reachable.
  Consequently, for $z$ reachable, $z'$ non-reachable
  \begin{align*}
  	e(\iota(z))(a,z') = 0 & = \sum\limits_{a, z'' \in \varnothing} \widetilde{e}(z)(a,z'')\\
		&=  \sum\limits_{a \in A, z'' \text{ reach.}, \iota(z'') = z} \widetilde{e}(z)(a,z'')\\
		&= \weighted{K}{A\times\iota}\cdot \widetilde{e}(z)(a,z')
  \end{align*}
  and for $z'$ reachable
  \[
  	e(\iota(z))(a,z') = e(\iota(z))(a,\iota(z')) = \widetilde{e}(z)(a,z') = \weighted{K}{A\times\iota}\cdot \widetilde{e}(z)(a,z').
  \]
  \qed
\end{proofappx}

\begin{proofappx}{cor:weightedTransitivity}
It is enough to prove that strictness of $K$ implies that the weighted functor preserves 
weak pullbacks. That is, assume given a pullback in $\Set$:
	\[
	\begin{tikzcd}
            W
            \pullbackangle[dashed]{-45}
            \arrow{r}[description,inner sep=2pt]{k}
            \arrow{d}[description,inner sep=2pt]{h}
            \descto{dr}{\commutes}
            & Y
            \arrow{d}[description,inner sep=2pt]{g}
            \\
            X
            \arrow{r}[description,inner sep=2pt]{f}
            & Z
        \end{tikzcd}
	\]
and form this other pullback:
	\[
	\begin{tikzcd}
            W'
            \pullbackangle[dashed]{-45}
            \arrow{rr}[description,inner sep=2pt]{k'}
            \arrow{d}[description,inner sep=2pt]{h'}
            & \phantom{}
            \descto{d}{\commutes}
            & \weighted{K}{A\times Y}
            \arrow{d}[description,inner sep=2pt]{\weighted{K}{A\times g}}
            \\
            \weighted{K}{A\times X}
            \arrow{rr}[description,inner sep=2pt]{\weighted{K}{A\times f}}
            & \phantom{}
            & \weighted{K}{A\times Z}
        \end{tikzcd}
	\]
Since $\Set$ has the external axiom of choice (that is, that every epi is split), 
it is enough to prove that the unique morphism $e\colon\,\weighted{K}{A\times W}\to W'$ 
such that
\[
	h'\cdot e = h \quad\quad\text{and}\quad\quad k'\cdot e = k
\]
is a surjective function.
Let us describe more concretely $W$ and $W'$:
\[
	W = \left\{(x,y) \in X\times Y \mid f(b) = g(c)\right\}
\]
\[
	W' = \left\{(\mu,\rho) \mid
		\forall (a,z)\in A\times Z.\, 
		\sum\limits_{\{x \in X \mid f(x) = z\}} \mu(a,x) = 
		\sum\limits_{\{y \in Y \mid g(y) = z\}} \rho(a,y)\right\}.
\]
Now the function $e$ is given by 
\[
	\theta  \mapsto
		\left((a,x) \mapsto \sum\limits_{\{y \in Y \mid (x,y) \in W\}} \theta(a,(x,y)),
			(a,y) \mapsto \sum\limits_{\{x \in X \mid (x,y) \in W\}} \theta(a,(x,y))\right)
\]
Fix $(\mu,\rho) \in W'$. We want to construct $\theta \in\weighted{K}{A\times W}$ 
such that $e(\theta) = (\mu,\rho)$.
By definition of the the weighted functor, 
the set 
\[
	\mathcal{N} = \left\{(a,z) \mid \exists x.\, f(x) = z \wedge \mu(a,x) \neq e\right\}
\]
is finite. Then, for fixed $(a,z) \in \mathcal{N}$, the sets
\[
	\mathcal{N}_{a,z,X} = \left\{x \mid f(x) = z \wedge \mu(a,x) \neq e\right\}
\]
and
\[
	\mathcal{N}_{a,z,Y} = \left\{y \mid g(y) = z \wedge \rho(a,y) \neq e\right\}
\]
are finite.
Consequently, the condition that $(\mu,\rho) \in W'$ can be reformulated as
\[
	\sum\limits_{x \in \mathcal{N}_{a,z,X}} \mu(a,x) = 
		\sum\limits_{y \in \mathcal{N}_{a,z,Y}} \rho(a,y).
\]
Now, by strictness of $K$, this means we have $(u_{x,y}^a)_{x \in \mathcal{N}_{a,z,X},y \in \mathcal{N}_{a,z,Y}}$ family of elements of $K$ such that:
\[
	\sum\limits_{x \in \mathcal{N}_{a,z,X}} u_{x,y}^a = \rho(a,y) 
	\quad\quad\text{and}\quad\quad
	\sum\limits_{y \in \mathcal{N}_{a,z,Y}} u_{x,y}^a = \mu(a,x) 
\]
Define $\theta$ as
\[
\begin{tabular}{r|lr}
	$(a,(x,y)) \in W \mapsto$
		& $u_{x,y}^a$ 
		& if $z = f(x)$, $(a,z) \in \mathcal{N}$, $x \in \mathcal{N}_{a,z,X}$,\\
		&
		&  and 
			$y \in \mathcal{N}_{a,z,Y}$
		\\
		& $0$ 
		& otherwise
\end{tabular}
\]
By finiteness of $\mathcal{N}$, $\mathcal{N}_{a,z,X}$, and 
$\mathcal{N}_{a,z,Y}$, $\theta \in \weighted{K}{A\times W}$, and by construction
$e(\theta) = (\mu,\rho)$.
\qed
\end{proofappx}

\begin{proofappx}{lem:from-R-to-D}
Let $y \in Y$, so that $d(y) \in \weighted{(\Real_+,+)}{A\times Y}$.
Let us prove that $d(y) \in \Subdist(A\times Y)$.
Let us consider the case where $f$ is a lax homomorphism (the other is similar).
Then for every $x' \in X$, $\sum\limits_{\{y'\,\mid\, f(y') = x'\}} d(y)(y') \leq c(f(y),x')$.
This means that:
\[
	\sum\limits_{y' \in Y} d(y)(y') = \sum\limits_{x'\in X}\sum\limits_{\{y'\,\mid\, f(y') = x'\}} d(y)(y')
		\leq \sum\limits_{x'\in X}c(f(y),x') \leq 1,
\]
and $d(y) \in \Subdist(A\times Y)$.
\qed
\end{proofappx}

\begin{proofappx}{cor:D-open-iff-R-open}
Let us assume that it is $(\E_{\Dist}, \bbS_{\Dist})$-open, and prove that it is\\ 
$(\E_{(\Real_+,+)}, \bbS_{(\Real_+,+)})$-open. So we have the following situation 
in $\weighted{(\Real_+,+)}{\_}$-coalgebras (in plain):
   \[
  \begin{tikzcd}
            Jv
            \arrow{rr}[description,inner sep=2pt]{y}
            \arrow[dashed]{dr}[description,inner sep=2pt]{\JE e'}
            \arrow{dd}[description,inner sep=2pt]{\JE e}
            \descto[pos=0.45]{drr}{\commutes}
            \descto[pos=0.75]{ddrr}{\commutes}
            & & (Y,d,j)
            \arrow{dd}[description,inner sep=2pt]{f}
            \\
            \phantom{}
            \descto[pos=0.5]{r}{\commutes}
            &Ju
            \arrow[dashed]{ur}[description,inner sep=2pt]{y'}
            \arrow[dashed]{dl}[description,inner sep=2pt]{\JS s}
            & \phantom{}
            \\
            Jw
            \arrow{rr}[description,inner sep=2pt]{x}
            & & (X,c,i)
        \end{tikzcd}
  \]
By Lemma~\ref{lem:from-R-to-D} on $x$ and $y$, this situation is also in 
$\Subdist(A\times \_)$-coalgebras. Then by 
$(\E_{\Dist}, \bbS_{\Dist})$-openness, we open the dashed data
in $\Subdist(A\times \_)$-coalgebras, which are also 
$\weighted{(\Real_+,+)}{\_}$-coalgebras.
Conversely, we start with the same plain situation as above, in  
$\Subdist(A\times \_)$-coalgebras this time, so also in 
$\weighted{(\Real_+,+)}{\_}$-coalgebras.
By $(\E_{(\Real_+,+)}, \bbS_{(\Real_+,+)})$-openness, 
we obtain the dashed data in 
$\weighted{(\Real_+,+)}{\_}$-coalgebras. 
By Lemma~\ref{lem:from-R-to-D} on $y'$, the dashed data is also in 
$\Subdist(A\times \_)$-coalgebras.
  
Finally, if $(X,c,i)$ and $(Y,d,j)$ are $(\E_{\Dist}, \bbS_{\Dist})$-bisimilar, then they are\\ 
$(\E_{(\Real_+,+)}, \bbS_{(\Real_+,+)})$-bisimilar by what precedes.
For the converse, if we have a span $(X,c,i) \xleftarrow{~f~} (Z,e,k) \xrightarrow{~g~} (Y,d,j)$ of 
$(\E_{(\Real_+,+)}, \bbS_{(\Real_+,+)})$-open morphisms, to conclude by what precedes, 
we have to make sure that $(Z,e,k)$ is a $\Subdist(A\times \_)$-coalgebra, which is again by 
Lemma~\ref{lem:from-R-to-D}.
\qed
\end{proofappx}

\section{Proofs of Section~\ref{sec:weak-bisimulations}}
\begin{proofappx}{lem:run-R}
Let us prove first that $\pi_X$ is a morphism.
So we start with a transition
\[ 
	(X \xleftarrow{x} Jv \xrightarrow{y} Y) \xtransto{~a~}_R (X \xleftarrow{x'} Jw \xrightarrow{y'} Y)
\]
of $\widetilde{R}$, and distinguish the two cases.
\begin{itemize}
	\item Assume $a \neq \tau$, $v = (n_1, a_1, \ldots, a_k, n_{k+1})$, 
		$w = (n_0, a_1, \ldots, a_k, n_{k+1}, a, 0)$, 
		$x' = x\cdot \JE e_a$, and we want to prove that there is a 
		transition
		\[
			\text{end}(x) = x(n_{k+1},k+1) \xtransto{~a~} x'(0,k+2) = \text{end}(x').
		\]
		Since $x' = x\cdot \JE e_a$, then $x'(n_{k+1},k+1) = x(n_{k+1},k+1)$, 
		and since $x'$ is a morphism, there is a transition as wanted.
	\item The proof of the other case is the same.
\end{itemize}
  For $r\colon Jv\to \widetilde{R}$, $\runend(r)\in \widetilde R$, i.e.~by definition, $\runend(r)$ is a span of the form
  \[
    (X\xleftarrow{~x~}Jv\xrightarrow{~y~}Y).
  \]
  The statement then follows from a simple induction on $v$.
\end{proofappx}

\begin{proofappx}{lem:restriction-to-strong}
Let us start with the following square
\[
        \begin{tikzcd}
            Jv
            \arrow{r}[description,inner sep=2pt]{x}
            \arrow{d}[description,inner sep=2pt]{\JE e}
            \descto{dr}{\commutes}
            & X
            \arrow{d}[description,inner sep=2pt]{f}
            \\
            Jw
            \arrow{r}[description,inner sep=2pt]{y}
            & Y
        \end{tikzcd}
\]
for which we construct a new square with a strong morphism $x'\colon Jv'\to X$.
The lifting in the new square will then give rise to a lifting in the above
square.

Let us start by constructing $s\colon v\to v' \in \bbS$ and $x'\colon Jv'\to X$\twnote{}
such that $x = x'\cdot \JS s$. Assume that $v$ is of the form
$n_1, a_1, \ldots, n_{k+1}$ and $v'$ will be of the form 
$n_1', a_1, \ldots, n_{k+1}'$ with $n_i' \leq n_i$.
Define $f_i\colon\{0 < \ldots < n_i\}\to\{0 < \ldots < n_i\}$ as
\[
	f_i(j) = \min\set{k \leq j \mid \forall k \leq k' \leq j.\, x(k',i) = x(j,i)}.
\]
Then $n_i'$ is the number of elements in the image of $f_i$, that is,
$n_i'$ is the number of $\tau$-transitions from $(0,i)$ to $(n_i,i)$ 
that are mapped to $\tau$-transitions by $x$.
If we denote by $g_i$ the unique monotone bijection from the 
image of $f_i$ to $\{0 < \ldots < n_i'\}$, then $s_i = g_i\cdot f_i$.
It is clear that $s_i\colon\{0 < \ldots < n_i\}\to\{0 < \ldots < n_i'\}$ 
is monotone and surjective, so $s = (s_1,\dots,s_{k+1})$ is a 
morphism from $v$ to $v'$ in $\bbS$.
Now, define $x'\colon Jv' \to X$ by
\[
	x'(i,j) = x(g_j^{-1}(i),j).
\]
Let us prove that $x'$ is a strong morphism.
In $Jv'$, we have two types of transitions:
\begin{itemize}
	\item $(i,j) \xtransto{~\tau~} (i+1,j)$. By construction of $v'$, it means that we have
		\[
			(g_j^{-1}(i),j) \xtransto{~\tau~} (g_j^{-1}(i)+1,j) \xtransto{~\tau~} \ldots \xtransto{~\tau~} (g_j^{-1}(i)+k,j) \xtransto{~\tau~} (g_j^{-1}(i+1),j)
		\]
	in $Jv$ for some $k$, and that $x(g_j^{-1}(i)+k',j) = x(g_j^{-1}(i),j)$ for all $k' \leq k$, and 
	$x(g_j^{-1}(i)+k,j) \xtransto{~\tau~} x(g_j^{-1}(i+1),j)$ in $X$. This means that 
	$x'(i,j) \xtransto{~\tau~} x'(i+1,j)$ in $X$.
	\item $(n_j',i) \xtransto{~a~} (0,j+1)$. Similarly, we can prove that 
	$x'(n_j',i) \xtransto{~a~} x'(0,j+1)$ in $X$.
\end{itemize}
Now
\[
	x'\cdot \JS(s)(i,j) = x'(s_j(i),j) = x(f_j(i),j) = x(i,j).
\]

The second step is to obtain a diagram of the following shape
\[
        \begin{tikzcd}
        	    \phantom{}
	    & \phantom{}
            \descto[pos=0.85]{d}{\commutes}
            & \phantom{}\\
            Jv
            \arrow[bend left=40]{rr}[description,inner sep=2pt]{x}
            \arrow{d}[description,inner sep=2pt]{\JE e}
            \arrow{r}[description,inner sep=2pt]{\JS s}
            \descto{dr}{\commutes}
            & Jv'
            \arrow{r}[description,inner sep=2pt]{x'}
            \arrow{d}[description,inner sep=2pt]{\JE e'}
            \descto{dr}{\commutes}
            & X
            \arrow{d}[description,inner sep=2pt]{f}
            \\
            Jw
            \arrow[bend right=40]{rr}[description,inner sep=2pt]{y}
            \arrow{r}[description,inner sep=2pt]{\JS s'}
            & Jw'
            \arrow{r}[description,inner sep=2pt]{y'}
            & Y\\
        	    \phantom{}
	    & \phantom{}
            \descto[pos=0.8]{u}{\commutes}
            & \phantom{}
        \end{tikzcd}
\]
For now, let us assume is $e_a$ (the case $e = e_\tau$ is similar).
In that case, $w$ is of the shape $n_1, a_1, \ldots, n_{k+1}'',a,n_{k+2}$
with $n_{k+1} \leq n_{k+1}''$. We can then define $w'$ as
$n_1', a_1, \ldots, n_{k+1}' + n_{k+1}'' - n_{k+1}, a, n_{k+2}$ and
$s' = (s_1', \ldots, s_{k+2}')$ as
\[
\begin{tabular}{m{4em}|lr}
	$s_j'(i)$ = 
		& $s_j(i)$ 
		& if $j \leq k$ or ($j = k+1$ and $i \leq n_{k+1}$)
		\\
		& $n_{k+1}'+i-n_{k+1}$ 
		& if $j = k+1$ and $i > n_{k+1}$
		\\
		& $i$ 
		&if $j = k+2$
\end{tabular}
\]
Then $s'$ is a morphism from $w$ to $w'$ in $\bbS$.
Now, by construction of $v'$ and $w'$, there is a unique 
$e'=e_a\colon v'\to w'$ in $\E$, and it is easy to see that 
$\JE e'\cdot \JS s = \JS s'\cdot \JE e$.
Finally, we want to construct $y'\colon Jw'\to Y$ as
\[
\begin{tabular}{m{4em}|lr}
	$y'(i,j)$ = 
		& $y(g_j^{-1}(i),j)$ 
		& if $j \leq k$ or ($j = k+1$ and $i \leq n_{k+1}'$)
		\\
		& $y(n_{k+1}+i-n_{k+1}',k+1)$ 
		& if $j = k+1$ and $i > n_{k+1}'$
		\\
		& $y(i,k+2)$ 
		&if $j = k+2$
\end{tabular}
\]
Now:
\[
	y'\cdot \JE e'(i,j) = y(g_j^{-1}(i),j) = f\cdot x(g_j^{-1}(i),j) = f\cdot x'(i,j),
\]
which proves the commutativity of the right square. Also, let us prove the bottom commutativity.
The only interesting part is when $j \leq k$ or ($j = k+1$ and $i \leq n_{k+1}$).
In that case, 
\[
	y'\cdot \JS s' = y(f_j(i),j) = f\cdot x(f_j(i),j) = f\cdot x(i,j) = y(i,j).	
\]
Using that $x'$, $f$, and $y$ are morphisms, it is easy to prove that $y'$ is 
also a morphism.

Using the assumption on the right square, since $x'$ is a strong morphism, 
we obtain the following data
   \[
  \begin{tikzcd}
            Jv'
            \arrow{rr}[description,inner sep=2pt]{x'}
            \arrow[dashed]{dr}[description,inner sep=2pt]{\JE e''}
            \arrow{dd}[description,inner sep=2pt]{\JE e'}
            \descto[pos=0.45]{drr}{\commutes}
            \descto[pos=0.75]{ddrr}{\commutes}
            & & X
            \arrow{dd}[description,inner sep=2pt]{f}
            \\
            \phantom{}
            \descto[pos=0.5]{r}{\commutes}
            &Ju'
            \arrow[dashed]{ur}[description,inner sep=2pt]{x''}
            \arrow[dashed]{dl}[description,inner sep=2pt]{\JS s''}
            & \phantom{}
            \\
            Jw'
            \arrow{rr}[description,inner sep=2pt]{y'}
            & & Y
        \end{tikzcd}
  \]
  Since $\JS s''\cdot \JE e'' = \JE e'$ and $e' = e_a$, we know the following:
  \begin{itemize}
  	\item $e'' = e_a$,
	\item $u'$ is of the form $n_1', a_1, \ldots, n_{k+1}''', a, n_{k+2}'$ with 
		$n_{k+1}' \leq n_{k+1}' + n_{k+1}'' - n_{k+1} \leq n_{k+1}'''$ and 
		$n_{k+2} \leq n_{k+2}'$,
	\item for all $j \leq k$ or ($j = k+1$ and $i \leq n_{k+1}'$), $s_j''(i) = i$.
\end{itemize}
To conclude, we want the following
   \[
  \begin{tikzcd}
            Jv
            \arrow{rr}[description,inner sep=2pt]{x}
            \arrow[dashed]{dr}[description,inner sep=2pt]{\JE e'''}
            \arrow{dd}[description,inner sep=2pt]{\JE e}
            \descto[pos=0.45]{drr}{\commutes}
            \descto[pos=0.75]{ddrr}{\commutes}
            & & X
            \arrow{dd}[description,inner sep=2pt]{f}
            \\
            \phantom{}
            \descto[pos=0.5]{r}{\commutes}
            &Ju
            \arrow[dashed]{ur}[description,inner sep=2pt]{x'''}
            \arrow[dashed]{dl}[description,inner sep=2pt]{\JS s'''}
            & \phantom{}
            \\
            Jw
            \arrow{rr}[description,inner sep=2pt]{y}
            & & Y
        \end{tikzcd}
  \]
  We first define $u$ as 
  \[
  	n_1, a_1, \ldots, n_{k+1}'''+n_{k+1}-n_{k+1}', a, n_{k+2}'
  \] 
  Since $n_{k+1}'''+n_{k+1}-n_{k+1}' \geq n_{k+1}$, there is a unique 
  $e''' = e_a\colon v\to u$ in $\E$. Next, $s'''$ is defined as
\[
\begin{tabular}{m{4em}|lr}
	$s'''_j(i)$ = 
		& i 
		& if $j \leq k$ or ($j = k+1$ and $i \leq n_{k+1}$)
		\\
		& $s''_{k+1}(n_{k+1}'+i-n_{k+1}) +$ 
		& 
		\\
		& \qquad$n_{k+1}' - n_{k+1}$ 
		& if $j = k+1$ and $i > n_{k+1}$
		\\
		& $s''_{k+2}(i)$ 
		&if $j = k+2$
\end{tabular}
\]
which is a morphism from $u$ to $w$ in $\bbS$ with $\JS s'''\cdot \JE e''' = \JE e$.
We now need to construct $x'''\colon u\to X$ as 
\[
\begin{tabular}{m{4.7em}|lr}
	$x'''(i,j)$ = 
		& $x(i,j)$
		& if $j \leq k$ or ($j = k+1$ and $i \leq n_{k+1}$)
		\\
		& $x''(n_{k+1}'+i-n_{k+1},k+1)$ 
		& if $j = k+1$ and $i > n_{k+1}$
		\\
		& $x''(i,k+2)$ 
		&if $j = k+2$
\end{tabular}
\]
Since $x''$ and $x$ are morphisms, $x'''$ is also a morphism.
Furthermore, the commutativity of the top triangle is by construction of $x'''$.
Finally, let us prove the commutativity of the bottom right triangle.
For $j \leq k$ or ($j = k+1$ and $i \leq n_{k+1}$)
\[
	f\cdot x'''(i,j) = f\cdot x(i,j) = y(i,j) = y\cdot \JS s'''(i,j).
\]
For $j = k+1$ and $i > n_{k+1}$,
\begin{align*}
	f\cdot x'''(i,k+1) 
		& = f \cdot x''(n_{k+1}'+i,k+1)\\
		& = y''\cdot s''_{k+1}(n_{k+1}'+i-n_{k+1})\\
		& = y(n_{k+1}'+s''_{k+1}(n_{k+1}'+i-n_{k+1})-n_{k+1},k+1)\\
		& = y\cdot \JS s'''(i,k+1).
\end{align*}
Finally, for $j = k+2$
\[
	f\cdot x'''(i,k+2) = f\cdot x''(i,k+2) = y'\cdot \JS s''(i,k+2) = y\cdot \JS s'''(i,k+2).
\]
\qed
\end{proofappx}

\begin{proofappx}{lem:path-implies-ES}
Let us prove that $\pi_X$ is $(\E,\bbS)$-open. 
We then start with the following square
\[
  \begin{tikzcd}
            Jv
            \arrow{r}[description,inner sep=2pt]{r}
            \arrow{d}[description,inner sep=2pt]{\JE e}
            \descto{dr}{\commutes}
            & \widetilde{R}
            \arrow{d}[description,inner sep=2pt]{\pi_X}
            \\
            Jw
            \arrow{r}[description,inner sep=2pt]{x}
            & X
        \end{tikzcd}
  \]
and we can assume that $r$ is a strong morphism by 
Lemma~\ref{lem:restriction-to-strong}.
There are two cases, depending on whether $e = e_a$ or $e_\tau$.
Let us prove the case for $e_a$, the other case being similar.
So $v = n_1, a_1, \ldots, a_k, n_{k+1}$ and 
$w = n_1, a_1, \ldots, a_k, n_{k+1}', a, n_{k+2}$ with $n_{k+1} \leq n_{k+1}'$.
By Lemma~\ref{lem:run-R}, $\text{end}(r)$ is of the form
$(X \xleftarrow{~x'~} Jv \xrightarrow{~y'~} Y)$ and $\pi_X\cdot r = x'$.
In particular, $(X \xleftarrow{~x'~} Jv \xrightarrow{~y'~} Y)$ belongs to $R$.

In summary, we have the following situation (in plain)

	\[
	\begin{tikzcd}
		Jw
		\arrow{dd}[description,inner sep=2pt]{x}
		& \phantom{}
		\descto[pos=0.5]{d}{\commutes}
		& Ju
		\arrow[dashed]{ll}[description,inner sep=2pt]{\JS s}
		\arrow[dashed]{dd}[description,inner sep=2pt]{y}
		&\phantom{}\\
		\phantom{}
		\descto[pos=0.5]{r}{\commutes}
		& Jv
		\arrow{lu}[description,inner sep=2pt]{\JE e}
		\arrow[dashed]{ru}[description,inner sep=2pt]{\JE e'}
		\arrow{ld}[description,inner sep=2pt]{x'}
		\arrow{rd}[description,inner sep=2pt]{y'}
		\descto[pos=0.5]{r}{\commutes}
		\descto[pos=0.5]{d}{\ensuremath{\in R}\xspace}
		& \phantom{}
		& \in R\\
		X
		&\phantom{}
		& Y
		&\phantom{}
       	\end{tikzcd}
    	\]
and since $R$ is a path bisimulation, we obtain the rest of the diagram (dashed).
In particular, $(X \xleftarrow{~x\cdot \JS s~} Jv \xrightarrow{~y~} Y)$ belongs to $R$.
To conclude, it is enough to construct a morphism $r'\colon Ju\to\widetilde{R}$
such that
   \[
  \begin{tikzcd}
            Jv
            \arrow{rr}[description,inner sep=2pt]{r}
            \arrow[dashed]{dr}[description,inner sep=2pt]{\JE e'}
            \arrow{dd}[description,inner sep=2pt]{\JE e}
            \descto[pos=0.45]{drr}{\commutes}
            \descto[pos=0.75]{ddrr}{\commutes}
            & & \widetilde{R}
            \arrow{dd}[description,inner sep=2pt]{\pi_X}
            \\
            \phantom{}
            \descto[pos=0.5]{r}{\commutes}
            &Ju
            \arrow[dashed]{ur}[description,inner sep=2pt]{r'}
            \arrow[dashed]{dl}[description,inner sep=2pt]{\JS s}
            & \phantom{}
            \\
            Jw
            \arrow{rr}[description,inner sep=2pt]{x}
            & & X
        \end{tikzcd}
  \]
  
By construction, $u$ is of the form $n_1, a_1, \ldots, a_k, n_{k+1}'', a, n_{k+2}'$
with $n_{k+1} \leq n_{k+1}' \leq n_{k+1}''$ and $n_{k+2} \leq n_{k+2}'$.
For every $(i,j)$ state of $Ju$, the restriction of $Ju$ up to $(i,j)$ is obtained 
as $Ju_{i,j}$ where $u_{i,j} = n_1, a_1, \ldots, a_{j-1}, i$. Let us denote 
$e_{i,j}$ the injection of $Ju_{i,j}$ into $Ju$, which is a strict morphism as a composition
of morphisms of the form $\JE e_a$ and $\JE e_\tau$.
Then define $r'$ as follows:
\[
	r'(i,j) = (X \xleftarrow{~x\cdot \JS s\cdot e_{i,j}~} Ju_{i,j} \xrightarrow{~y\cdot e_{i,j}~} Y).
\]
First, $r'$ is with values in $\widetilde{R}$, because 
$(X \xleftarrow{~x\cdot \JS s\cdot e_{i,j}~} Ju_{i,j} \xrightarrow{~y\cdot e_{i,j}~} Y)$
belongs to $R$ by induction using the backward closure of $R$ being a strong path 
bisimulation.
It is also easy to see that $r'$ is a strict morphism.
For the equation $r'\cdot \JE e' = r$, this is a consequence of 
$x' = x\cdot \JE e = x\cdot \JS s\cdot \JE e'$.
Finally, $\pi_X\cdot r' = x\cdot \JS s$ is by construction and Lemma~\ref{lem:run-R}.
\qed
\end{proofappx}

\begin{proofappx}{lem:branching-stuttering-iff-strong-path}
Assume that $R$ is a stuttering branching bisimulation.
Let us prove that $\overline{R}$ is a strong path bisimulation.
The initial condition and the backward closure are obvious.
Let us then prove the forward closure. We then have the following situation (in plain)
	\[
	\begin{tikzcd}
		Jw
		\arrow{dd}[description,inner sep=2pt]{x'}
		& \phantom{}
		\descto[pos=0.5]{d}{\commutes}
		& Ju
		\arrow[dashed]{ll}[description,inner sep=2pt]{\JS s}
		\arrow[dashed]{dd}[description,inner sep=2pt]{y'}
		&\phantom{}\\
		\phantom{}
		\descto[pos=0.5]{r}{\commutes}
		& Jv
		\arrow{lu}[description,inner sep=2pt]{\JE e}
		\arrow[dashed]{ru}[description,inner sep=2pt]{\JE e'}
		\arrow{ld}[description,inner sep=2pt]{x}
		\arrow{rd}[description,inner sep=2pt]{y}
		\descto[pos=0.5]{r}{\commutes}
		\descto[pos=0.5]{d}{\ensuremath{\in \overline{R}}\xspace}
		& \phantom{}
		& \in \overline{R}\\
		X
		&\phantom{}
		& Y
		&\phantom{}
       	\end{tikzcd}
    	\]
and want to construct the dashed data.
Let us assume that $e = e_a$, the other case being simpler.
So $v$ is of the form $n_1, a_1, \ldots, n_{k+1}$ and $w$ 
of the form $n_1, a_1, \ldots, n_{k+1}',a,n_{k+2}$ with 
$n_{k+1} \leq n_{k+1}'$.
We also have:
\begin{itemize}
	\item $(x(n_{k+1},k+1),y(n_{k+1},k+1)) \in R$,
	\item $x(n_{k+1},k+1) = x'(n_{k+1},k+1)$ (since $x = x'\cdot \JE e$),
	\item for every $i \geq n_{k+1}$, ($x'(i+1,k+1) = x'(i,k+1)$ or 
		$x'(i,k+1) \xtransto{~\tau~} x'(i+1,k+1)$),
	\item $x'(n_{k+1}',k+1) \xtransto{~a~} x'(0,k+2)$,
	\item for every $i$, ($x'(i+1,k+2) = x'(i,k+2)$ or 
		$x'(i,k+2) \xtransto{~\tau~} x'(i+1,k+2)$),
\end{itemize}
the last three points holding since $x'$ is a morphism. Since $R$ is a branching bisimulation with the 
stuttering, we can build $(y_{i,j})_{n_{k+1} \leq j \leq n_{k+1}', 1 \leq i \leq m_j} $ states of $y$ such that:
\begin{itemize}
	\item $y_{1,n_{k+1}} = y(n_{k+1},k+1)$,
	\item for all $i$, $j$, $(x'(j,k+1),y_{i,j}) \in R$,
	\item $y_{i,j} = y_{i+1,j}$ or $y_{i,j} \xtransto{~\tau~} y_{i+1,j}$,
	\item $y_{m_j,j} = y_{1,j+1}$ or $y_{m_j,j} \xtransto{~\tau~} y_{1,j+1}$.
\end{itemize}
by induction on $j$.
\begin{itemize}
	\item \textbf{base case $j = n_{k+1}$:} $m_j = 1$, $y_{1,n_{k+1}} = y(n_{k+1},k+1)$.
	\item \textbf{inductive case:} $(y_{i,j})_{n_{k+1} \leq j \leq s, 1 \leq i \leq m_j}$ 
	has been constructed as wanted. 
	In particular, $(x'(s,k+1),y_{m_s,s}) \in R$.
	Two cases:
	\begin{itemize}
		\item $x'(s+1,k+1) = x'(s,k+1)$. 
		In that case, $m_{s+1} = 1$ and $y_{1,s+1} = y_{m_s,s}$.
		\item or $x'(s,k+1) \xtransto{~\tau~} x'(s+1,k+1)$. 
		Since $R$ is a stuttering branching bisimulation, two cases again:
			\begin{itemize}
				\item $(x'(s+1,k+1),y_{m_s,s})$. 
				In that case, $m_{s+1} = 1$ and $y_{1,s+1} = y_{m_s,s}$.
				\item or $y_{m_s,s} \xtransto{~\tau~} y_1 \xtransto{~\tau~} 
					\ldots \xtransto{~\tau~} y_d
					\xtransto{~\tau~} z_1 \xtransto{~\tau~} 
					\ldots \xtransto{~\tau~} z_{d'}$, with $(x'(s,k+1),y_{d''}) \in R$,
					$(x'(s+1,k+1),z_{d''}) \in R$, and $(x'(s+1,k+1),z_{d'}) \in R$ for all
					$d''$
					Then the following family 
					$(y_{i,j}')_{n_{k+1} \leq j \leq s+1, 1 \leq i \leq m'_j}$ works:
					\begin{itemize}
						\item for $j < s$, $m'_j = m_j$ and $y'_{i,j} = y_{i,j}$,
						\item $m'_s = m_s+d$, $y_{i,s}' = y_{i,s}$ 
						for $i \leq m_s$, and $y_{i,s}' = y_{i-m_s}$ otherwise,
						\item $m'_{s+1} = d'$ and $y_{i,s+1}' = z_i$.
					\end{itemize}
			\end{itemize}
	\end{itemize}
\end{itemize}
Repeating this process with the transition $x'(n_{k+1}',k+1) \xtransto{~a~} x'(0,k+2)$ and the 
transitions $x'(i,k+2) \xtransto{~\tau~} x'(i+1,k+2)$, we can build two families
$(y_{i,j})_{n_{k+1} \leq j \leq n_{k+1}', 1 \leq i \leq m_j}$ and
$(y_{i,j}')_{0 \leq j \leq n_{k+2}, 1 \leq i \leq m_j'}$ such that:
\begin{itemize}
	\item $y_{1,n_{k+1}} = y(n_{k+1},k+1)$,
	\item for all $i$, $j$, $(x'(j,k+1),y_{i,j}) \in R$,
	\item $y_{i,j} = y_{i+1,j}$ or $y_{i,j} \xtransto{~\tau~} y_{i+1,j}$,
	\item $y_{m_j,j} = y_{1,j+1}$ or $y_{m_j,j} \xtransto{~\tau~} y_{1,j+1}$,
	\item $y_{m_{n_{k+1}'},n_{k+1}'} \xtransto{~a~} y_{1,0}'$,
	\item for all $i$, $j$, $(x'(j,k+2),y_{i,j}') \in R$,
	\item $y_{i,j}' = y_{i+1,j}'$ or $y_{i,j}' \xtransto{~\tau~} y_{i+1,j}'$,
	\item $y_{m_j,j}' = y_{1,j+1}'$ or $y_{m_j,j}' \xtransto{~\tau~} y_{1,j+1}'$.
\end{itemize}
Now, define $u$ as $n_1, a_1, \ldots, 
	n_{k+1}+\sum\limits_{j = n_{k+1}}^{n_{k+1}'}m_j, a, \sum\limits_{j = 0}^{n_{k+2}}m_j'$.
By construction, there is a unique $e' = e_a\colon v\to u$ in $\E$.
Next $s = (s_1,\ldots,s_{k+2})$ is defined as follows:
\[
\begin{tabular}{m{4.7em}|m{2em}l}
	$s_j(i)$ = 
		& $i$
		& if $j \leq k$ or ($j = k+1$ and $i \leq n_{k+1}$)
		\\
		& $s$ 
		& if $j = k+1$ and \\
		&
		& $n_{k+1}+\sum\limits_{j = n_{k+1}}^{s-1}m_j < i \leq n_{k+1}
			+\sum\limits_{j = n_{k+1}}^{s}m_j$
		\\
		& $s$ 
		&if $j = k+2$ and 
		$\sum\limits_{j = 0}^{s-1}m_j' < i \leq \sum\limits_{j = 0}^{s}m_j'$.
\end{tabular}
\]
By construction, $s\colon u\to w$ is a morphism in $\bbS$ 
and $\JE e = \JS s\cdot \JE e'$. 
Finally, we can define $y'$ as
\[
\begin{tabular}{m{4.7em}|ll}
	$y'(i,j)$ = 
		& $y(i,j)$
		& if $j \leq k$ or ($j = k+1$ and $i \leq n_{k+1}$)
		\\
		& $y_{i-(n_{k+1}+\sum\limits_{j = n_{k+1}}^{s-1}m_j),s}$ 
		& if $j = k+1$ and \\
		&
		& $n_{k+1}+\sum\limits_{j = n_{k+1}}^{s-1}m_j < i \leq n_{k+1}
			+\sum\limits_{j = n_{k+1}}^{s}m_j$
		\\
		& $y_{i-(\sum\limits_{j = 0}^{s-1}m_j'),s}'$ 
		&if $j = k+2$ and 
		$\sum\limits_{j = 0}^{s-1}m_j' < i \leq \sum\limits_{j = 0}^{s}m_j'$.
\end{tabular}
\]
By construction, $y'\cdot \JE e' = y$.
It only remains to prove that 
$(X \xleftarrow{x'\cdot \JS s} Ju \xrightarrow{y'} Y) \in \overline{R}$, that is, 
for all $i$, $j$, $(x'\cdot \JS s(i,j),y'(i,j)) \in R$:
\begin{itemize}
	\item if $j \leq k$ or ($j = k+1$ and $i \leq n_{k+1}$), $x'\cdot \JS s(i,j) = x(i,j)$, 
	$y'(i,j) = y(i,j)$, and 
	$(x(i,j),y(i,j)) \in R$.
	\item if $j = k+1$ and $n_{k+1}+\sum\limits_{j = n_{k+1}}^{s-1}m_j < i \leq n_{k+1}
			+\sum\limits_{j = n_{k+1}}^{s}m_j$,
		$x'\cdot \JS s(i,j) = x'(s,k+1)$ and 
		$y'(i,j) = y_{i-(n_{k+1}+\sum\limits_{j = n_{k+1}}^{s-1}m_j),s}$, 
		and by construction, 
		$(x'(s,k+1),y_{i-(n_{k+1}+\sum\limits_{j = n_{k+1}}^{s-1}m_j),s}) \in R$.
	\item if $j = k+2$ and $\sum\limits_{j = 0}^{s-1}m_j' < i \leq \sum\limits_{j = 0}^{s}m_j'$,
		$x'\cdot \JS s(i,j) = x'(s,k+2)$ and 
		$y'(i,j) = y_{i-(\sum\limits_{j = 0}^{s-1}m_j'),s}$, 
		and by construction, 
		$(x'(s,k+2),y_{i-(\sum\limits_{j = 0}^{s-1}m_j'),s}) \in R$.
\end{itemize}

Conversely, assume that $R$ is a strong path bisimulation and let us prove that
$\widecheck{R}$ is stuttering branching bisimulation.
So we start with $(X \xleftarrow{x} Jv \xrightarrow{y} Y) \in R$ so that 
$(\text{end}(x), \text{end}(y))$ in $\widecheck{R}$.
Let us assume that $\text{end}(x) \xtransto{~a~} p$.
Let us do the case when $a = \tau$, the other case being similar.
Let us assume that $v$ is of the form $n_1, a_1, \ldots, n_{k+1}$.
Define $w = n_1, a_1, \ldots, n_{k+1}+1$ so that 
$e = e_\tau\colon v\to w \in \E$.
We can define $x'\colon Jw\to X$ as 
$x'(i,j) = p$ if $i = n_{k+1}+1$ and $j = k+1$; $x'(i,j) = x(i,j)$ otherwise.
This is a morphism since $x$ is a morphism and by the assumption on $p$.
Now, we have the following situation (in plain)
	\[
	\begin{tikzcd}
		Jw
		\arrow{dd}[description,inner sep=2pt]{x'}
		& \phantom{}
		\descto[pos=0.5]{d}{\commutes}
		& Ju
		\arrow[dashed]{ll}[description,inner sep=2pt]{\JS s}
		\arrow[dashed]{dd}[description,inner sep=2pt]{y'}
		&\phantom{}\\
		\phantom{}
		\descto[pos=0.5]{r}{\commutes}
		& Jv
		\arrow{lu}[description,inner sep=2pt]{\JE e}
		\arrow[dashed]{ru}[description,inner sep=2pt]{\JE e'}
		\arrow{ld}[description,inner sep=2pt]{x}
		\arrow{rd}[description,inner sep=2pt]{y}
		\descto[pos=0.5]{r}{\commutes}
		\descto[pos=0.5]{d}{\ensuremath{\in R}\xspace}
		& \phantom{}
		& \in R\\
		X
		&\phantom{}
		& Y
		&\phantom{}
       	\end{tikzcd}
    	\]
and since $R$ is path bisimulation, we obtain the dashed data.
In particular, $e' = e_\tau$ and $u$ is of the form
$n_1, a_1, \ldots, n_{k+1}'$ with $n_{k+1}+1 \leq n_{k+1}'$.
Two cases:
\begin{itemize}
	\item for every $i \geq n_{k+1}$, $y'(i,k+1) = \text{end}(y)$, then
		$(\text{end}(x'),\text{end}(y)) \in \widecheck{R}$,
	\item or there is a sequence $i_0 = n_{k+1} < i_1 < \ldots < i_m < n_{k+1}' = i_{m+1}$ such 
		that 
		\[
		\text{end}(y) \xtransto{~\tau~} y'(i_1,k+1) \xtransto{~\tau~} 
			\ldots \xtransto{~\tau~} y'(i_m,k+1) \xtransto{~\tau~}
			\text{end}(y')
		\]
		Now, there is necessarily $n_{k+1} \leq i' < n_{k+1}'$ such that
		\begin{itemize}
			\item for all $i \leq i'$, $\JS s(i,k+1) = n_{k+1}$,
			\item for all $i > i'$, $\JS s(i,k+1) = n_{k+1}+1$.
		\end{itemize}
		Let $l$ such that $i_l \leq i' < i_{l+1}$.
		If we denote $n_1, a_1, \ldots, i_h$ by $u_h$ for all $h$, 
		then $e_\tau\colon u_h\to u$ belongs to $\E$ and since  
		$(X \xleftarrow{x'\cdot \JS s} Jv \xrightarrow{y'} Y) \in R$
		and $R$ is backward closed, for all $h$,
		$(X \xleftarrow{x'_h = x'\cdot \JS s\cdot \JE e_\tau} Ju_h 
			\xrightarrow{y'_h = y'\cdot \JE e_\tau} Y) \in R$.
		Now:
		\begin{itemize}
			\item for $h \leq l$, 
				$(\text{end}(x),y'(i_h,k+1)) = (\text{end}(x_h),\text{end}(y_h) \in \widecheck{R}$,
			\item for $h > l$,
				$(\text{end}(x'),y'(i_h,k+1)) = (\text{end}(x_h),\text{end}(y_h) \in \widecheck{R}$.
		\end{itemize}
		which is what we had to prove.
\end{itemize}
\qed
\end{proofappx}

\begin{proofappx}{lem:weak-iff-path}
The constructions and the reasonings are the same as for 
\autoref{lem:branching-stuttering-iff-strong-path},
except that we do not have to care about intermediate states to be 
bisimilar (stuttering and backward closure).
\qed
\end{proofappx}

\end{document}
